\documentclass[11pt]{article}

\usepackage{microtype}
\usepackage{mathtools,mathrsfs}
\usepackage[table,svgnames]{xcolor}
\usepackage[colorlinks=true,linkcolor=black,citecolor=MidnightBlue,urlcolor=MidnightBlue]{hyperref}
\usepackage{xspace}
\usepackage{tikz}
\usepackage{fullpage}

\usepackage{amssymb,amsmath,amsfonts,fullpage,mdwlist,enumerate,amsthm, graphicx,algpseudocode, algorithm, algorithmicx, cite}
\usepackage{thmtools, thm-restate}

\newtheorem{theorem}{Theorem}[section]
\newtheorem{lemma}[theorem]{Lemma}
\newtheorem{observation}{Observation}

\newtheorem{corollary}[theorem]{Corollary}

\newtheorem{claim}[theorem]{Claim}
\newtheorem{definition}{Definition}[section]

\newtheorem{property}{Property}[section]

\title{Distributed Approximation of Minimum $k$-edge-connected\\ Spanning Subgraphs}
\author{Michal Dory\footnote{Technion, Department of Computer Science, \texttt{smichald@cs.technion.ac.il}. Supported in part by the Israel Science Foundation (grant 1696/14).}
}

%\linespread{0.975}
%\addtolength{\textwidth}{1.5mm}  

\newcommand{\congest}{\textsc{Congest}\xspace}

\newcommand{\remove}[1]{}
\DeclareMathOperator*{\Moplus}{\text{\raisebox{0.25ex}{\scalebox{0.8}{$\bigoplus$}}}}

\remove{
\[
\Moplus_{i=1}^{k}V_i \qquad A\Moplus B_{A \Moplus B_{A \Moplus B}}
\]
}

\begin{document}

\begin{titlepage}

\maketitle

\begin{abstract}
In the minimum $k$-edge-connected spanning subgraph ($k$-ECSS) problem the goal is to find the minimum weight subgraph resistant to up to $k-1$ edge failures. This is a central problem in network design, and a natural generalization of the minimum spanning tree (MST) problem. While the MST problem has been studied extensively by the distributed computing community, for $k \geq 2$ less is known in the distributed setting.

In this paper, we present fast randomized distributed approximation algorithms for $k$-ECSS in the \congest model.
Our first contribution is an $\widetilde{O}(D + \sqrt{n})$-round $O(\log{n})$-approximation for 2-ECSS, for a graph with $n$ vertices and diameter $D$. The time complexity of our algorithm is almost tight and almost matches the time complexity of the MST problem. For larger constant values of $k$ we give an $\widetilde{O}(n)$-round $O(\log{n})$-approximation. Additionally, in the special case of \emph{unweighted} 3-ECSS we show how to improve the time complexity to $O(D \log^3{n})$ rounds. All our results significantly improve the time complexity of previous algorithms.
\end{abstract}

%\vfill

%\noindent Regular submission. 

%\noindent Eligible for the Best Student Paper Award: Michal Dory is a full time student.
 
\thispagestyle{empty}
\end{titlepage}

\section{Introduction}
The edge-connectivity of a graph determines its resistance to edge failures, which is crucial for network reliability.   
In the minimum weight $k$-edge-connected\footnote{An undirected graph $G$ is \emph{$k$-edge-connected} if it remains connected after the removal of any $k-1$ edges.} spanning subgraph ($k$-ECSS) problem the input is a $k$-edge-connected graph $G$, and the goal is to find the minimum weight $k$-ECSS of $G$. 
The minimum $k$-ECSS problem is widely studied in the sequential setting (see, e.g., \cite{jain2001factor, khuller1994biconnectivity,cheriyan2000approximating, gabow2009approximating,gabow2012iterated,goemans1994improved}).
The unweighted version of the problem admits an $1+\frac{1}{2k}+O(\frac{1}{k^2})$ approximation \cite{gabow2012iterated}, and the weighted problem admits 2-approximations \cite{jain2001factor, khuller1994biconnectivity}. Many additional related connectivity problems are studied in the sequential setting, see \cite{khuller1996approximation, kortsarz2010approximating} for surveys.

However, because of the distributed nature of networks, it is crucial to study the problem also from the distributed perspective. The 1-ECSS problem is just the minimum spanning tree (MST) problem, which is a central and well-studied problem in the distributed setting (see, e.g., \cite{gallager1983distributed,garay1998sublinear,kutten1998fast,elkin2006unconditional, DBLP:conf/podc/Elkin17, pandurangan2017time}). In the \congest model, there is an $O(D + \sqrt{n} \log^*{n})$-round algorithm for the problem \cite{kutten1998fast} for a graph with $n$ vertices and diameter $D$, which is almost tight due to an $\Omega(D + \sqrt{\frac{n}{\log{n}}})$ lower bound \cite{peleg2000near, elkin2006unconditional, sarma2012distributed}.
Although an MST is a sparse low-cost backbone of a graph, even a single edge failure disconnects it and completely destroys its functionality. Hence, it is crucial to find low-cost backbones with higher connectivity.

Yet, for $k>1$ less is known in the distributed setting. For \emph{unweighted} 2-ECSS there is an $O(D)$-round 2-approximation \cite{censor2017fast}, and an $O(n)$-round $\frac{3}{2}$-approximation \cite{krumke2007distributed}. For \emph{unweighted} $k$-ECSS, there is an $O(k(D+\sqrt{n}\log^*{n}))$-round 2-approximation \cite{thurimella1995sub}. This algorithm is based on repeatedly finding maximal spanning forests in the graph, and removing their edges from the graph, %(this framework is also described in sequential algorithms \cite{khuller1996approximation,nagamochi1992linear}). 
which results in a $k$-ECSS with at most $k(n-1)$ edges. This guarantees a 2-approximation for the \emph{unweighted} case, since any $k$-ECSS has at least $\frac{kn}{2}$ edges.
However, this approach cannot extend to the weighted case, since in the weighted case even adding one redundant edge may be too expensive.

A natural question is whether it is possible to design efficient approximations also for \emph{weighted} $k$-ECSS. 
For weighted 2-ECSS there are $O(n\log{n})$-round \cite{krumke2007distributed} and $O(h_{MST}+\sqrt{n}\log^*{n})$-round \cite{censor2017fast} 3-approximations, where $h_{MST}$ is the height of the MST of the graph. Both these algorithms start by building an MST and then augment it to be 2-edge-connected. To do so, they use algorithms for the weighted \emph{tree augmentation problem} (TAP), in which the goal is to augment the connectivity of a given spanning tree $T$ to 2 by adding a minimum cost set of edges from the graph $G$ to $T$. 
However, currently the best algorithm for solving weighted TAP takes $O(h)$ rounds \cite{censor2017fast}, where $h$ is the height of $T$. Since the algorithm for 2-ECSS augments an MST, it results in a time complexity that depends on $h_{MST}$, which can be $\Theta(n)$ in the worst case.

To the best of our knowledge, the only distributed algorithm for weighted $k$-ECSS for $k>2$ is an $O(knD)$-round $O(\log{k})$-approximation algorithm \cite{shadeh2009distributed} based on a primal-dual algorithm of Goemans et al. \cite{goemans1994improved}, that solves even the more general Steiner Network problem. If $k$ is constant and $D$ is small, the time complexity of \cite{shadeh2009distributed} is close to linear, but in the worst case the time complexity is $\Omega(n^2)$, which is trivial in the distributed setting.\footnote{All previous results use deterministic algorithms.} 
   
In this paper, we address this fundamental topic, and present efficient distributed approximation algorithms for $k$-ECSS. Our algorithms work in the \congest model of distributed computing \cite{peleg2000distributed}, in which vertices exchange messages of $O(\log{n})$ bits in synchronous rounds. 

\subsection{Our contributions}

Our first contribution is the first sublinear algorithm for weighted 2-ECSS. 

\begin{restatable}{theorem}{twoECSS} \label{2-ECSS}
There is a distributed algorithm for weighted 2-ECSS in the \congest model that guarantees an approximation ratio of $O(\log{n})$, and takes $O((D + \sqrt{n})\log^2{n})$ rounds w.h.p.
\end{restatable}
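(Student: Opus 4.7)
The plan is to use the classical MST-plus-augmentation framework. First, compute a minimum spanning tree $T$ using the Kutten--Peleg algorithm in $O(D+\sqrt{n}\log^* n)$ rounds. Since any 2-ECSS contains a spanning tree, $w(T)$ is a lower bound on the optimum 2-ECSS cost. What remains is to augment $T$ into a 2-edge-connected subgraph by adding a low-cost set of non-tree edges (\emph{links}); this is the weighted tree augmentation problem (TAP).

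The main new ingredient is a distributed $O(\log n)$-approximation for TAP that runs in $\widetilde{O}(D+\sqrt{n})$ rounds \emph{independent of the height of $T$}, breaking the $O(h_{MST})$ bottleneck of \cite{censor2017fast}. I would view TAP as a set-cover instance in which the elements are the tree edges and each non-tree link $(u,v)$ forms a set that covers all tree edges on the $T$-path from $u$ to $v$, with cost equal to the link's weight. The classical greedy / LP-rounding algorithms for set cover yield $O(\log n)$-approximations, and I would implement such a procedure distributedly, using a multiplicative-weights loop with $O(\log n)$ outer iterations followed by a standard $O(\log n)$-round randomized rounding step.

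The main obstacle is that fundamental cycles in the MST can have length $\Theta(n)$, so aggregating quantities like the ``currently uncovered weight on my path'' or the cost-effectiveness ratio of each candidate link along these paths seems to require $\Omega(n)$ rounds per iteration. To overcome this I would decompose $T$ into $O(\sqrt{n})$ vertex-disjoint pieces via a heavy-path / caterpillar-style decomposition, root each piece at its topmost vertex, and overlay the pieces on a BFS skeleton of depth $O(D+\sqrt{n})$, in the spirit of the Kutten--Peleg MST pipeline. With this structure, any aggregation along a fundamental tree path factors into an intra-piece scan plus a constant number of inter-piece summaries, and the global $\min$ / $\sum$ operations needed by one iteration can be pipelined through the skeleton in $\widetilde{O}(D+\sqrt{n})$ rounds. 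Multiplying by the $O(\log n)$ outer iterations and $O(\log n)$ rounding rounds yields the claimed $\widetilde{O}((D+\sqrt{n})\log^2 n)$ total, while composing the $O(\log n)$-approximation to TAP with $w(T)\le\mathrm{OPT}$ gives the overall $O(\log n)$-approximation to 2-ECSS.
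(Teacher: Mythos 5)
Your overall architecture matches the paper's: MST via Kutten--Peleg, then weighted TAP viewed as a set-cover instance over the tree edges, implemented on top of a decomposition of $T$ into $O(\sqrt{n})$ low-diameter pieces. But the heart of the proof --- the distributed $O(\log n)$-approximate set-cover procedure itself --- is missing. ``A multiplicative-weights loop with $O(\log n)$ outer iterations followed by randomized rounding'' is not justified: the iteration count of MWU/LP-based schemes for covering LPs depends on width and accuracy and does not come out to $O(\log n)$ for free, and randomized rounding yields the $O(\log n)$ ratio only in expectation or w.h.p., whereas the theorem (deliberately, as the paper stresses) claims a \emph{guaranteed} approximation ratio. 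The paper instead runs a parallel greedy with random-priority voting: in each iteration the candidates are exactly the non-tree edges of maximum \emph{rounded} cost-effectiveness $|C_e|/w(e)$; each candidate draws a random rank in $\{1,\dots,n^8\}$; each still-uncovered tree edge votes for the lowest-rank candidate covering it; and a candidate is added iff it collects at least $|C_e|/8$ votes. A charging argument ($w(A)\le 8\sum_{t\in T}cost(t)\le O(\log n)\,w(A^*)$) gives the deterministic ratio, and a potential-function argument shows $O(\log n)$ iterations per cost-effectiveness level, hence $O(\log^2 n)$ iterations total --- which is where the $\log^2 n$ factor in the round complexity actually comes from. Your accounting of that factor ($O(\log n)$ MWU iterations times $O(\log n)$ rounding rounds) has no supporting argument.

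The decomposition also needs more than ``heavy-path / caterpillar-style, vertex-disjoint pieces.'' The implementation relies on the Ghaffari--Parter-style segmentation into $O(\sqrt n)$ \emph{edge-disjoint} segments of diameter $O(\sqrt n)$ in which each segment has exactly two portal vertices (its root $r_S$ and a unique descendant $d_S$) and no other vertex of the segment has a tree edge leaving it; only then does every fundamental path factor into at most two intra-segment subpaths plus a sequence of \emph{complete} highways. That sequence can contain $\Theta(\sqrt n)$ segments, not ``a constant number of inter-piece summaries''; the computation is saved because the $O(\sqrt n)$ per-highway summaries and the skeleton tree are broadcast to \emph{all} vertices over the BFS tree in $O(D+\sqrt n)$ rounds, after which each link endpoint assembles its path aggregate locally. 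A generic heavy-path decomposition does not have the two-portal property, so the claimed factorization would fail for pieces with many boundary vertices.
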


The time complexity of our algorithm improves upon the previos $O(h_{MST}+\sqrt{n}\log^*{n})$-round algorithm \cite{censor2017fast}, it almost matches the time complexity of the MST problem, and it is almost tight. 
Computing an $\alpha$-approximation for weighted 2-ECSS requires $\Omega(D + \frac{\sqrt{n}}{\log{n}})$ rounds, for any polynomial function $\alpha$ \cite{censor2017fast}.\footnote{The lower bound in \cite{censor2017fast} is for weighted TAP, however an $\alpha$-approximation algorithm for weighted 2-ECSS gives an $\alpha$-approximation algorithm for weighted TAP where we give to the edges of the input tree $T$ weight 0. Hence, a lower bound for weighted TAP implies a lower bound for weighted 2-ECSS.} 
We next consider the case $k>2$, and show the following.

\begin{restatable}{theorem}{kECSS}
There is a distributed algorithm for weighted $k$-ECSS in the \congest model with an expected approximation ratio of $O(k \log{n})$, and time complexity of $O(k(D \log^3{n} + n))$ rounds. 
\end{restatable}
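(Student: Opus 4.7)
I would solve $k$-ECSS by $k$ iterations of a connectivity-augmentation subroutine. Starting from $H_0 = \emptyset$, at iteration $i$ we compute a (near-)minimum-weight set $E_i \subseteq E(G) \setminus E(H_{i-1})$ such that $H_i = H_{i-1} \cup E_i$ is $i$-edge-connected, and return $H_k$.

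\textbf{Approximation ratio.} An optimal $k$-ECSS $\mathrm{OPT}$ is itself a feasible solution to each augmentation subproblem (it $k$-edge-connects the graph, and in particular $i$-edge-connects any spanning $H_{i-1}$). Hence the optimum of the $i$-th subproblem is at most $w(\mathrm{OPT})$, and an (expected) $O(\log n)$-approximation per iteration, summed over the $k$ iterations, yields the claimed $O(k \log n)$ expected approximation.

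\textbf{Per-iteration augmentation.} The key structural simplification is that once $H_{i-1}$ is $(i-1)$-edge-connected, the cuts that must be crossed by $E_i$ are precisely its \emph{minimum cuts} (of size $i-1$); by the cactus theorem these admit an $O(n)$-size representation. I would exploit this to reduce each augmentation step to a covering problem on the cactus, and then to a tree instance via a probabilistic low-distortion tree embedding, which contributes the $\log^3 n$ factor in the runtime and is the source of the expectation in the approximation ratio. The resulting tree instance admits an $O(\log n)$-approximation via LP rounding, and is implementable in $O(D + n)$ \congest rounds by pipelining updates along a global BFS tree, analogously to the per-round work used for the 2-ECSS algorithm of Theorem~\ref{2-ECSS}. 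A standard MST computation provides $H_1$ in $\widetilde{O}(D + \sqrt n)$ rounds, well within budget.

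\textbf{Main obstacle.} The hardest step is the distributed construction of the cactus of minimum cuts of $H_{i-1}$ in $\widetilde O(n)$ \congest rounds, together with its broadcast through the network using $O(\log n)$-bit messages, so that the LP-rounding stage on the tree embedding can be carried out coherently at every vertex. Unlike the $i=2$ case underlying Theorem~\ref{2-ECSS}, for $i\ge 3$ the family of small cuts is not laminar and has a genuinely two-dimensional structure, so the correctness of the reduction to a tree instance, and the bookkeeping required to route the necessary cactus information along the BFS tree without congestion, is the principal piece I would need to work out in detail. Once that is in place, the approximation analysis and the outer loop follow cleanly from the telescoping bound above.
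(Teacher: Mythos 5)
Your outer loop and its analysis are exactly the paper's Claim~\ref{Aug_k}: an optimal $k$-ECSS is feasible for every augmentation subproblem, so an $O(\log n)$-approximation per stage telescopes to $O(k\log n)$. That part is fine. The gap is in the per-iteration augmentation, which is the actual content of the theorem and which you do not supply. You yourself flag the two missing pieces --- a distributed $\widetilde O(n)$-round cactus construction for the minimum cuts of $H_{i-1}$, and a reduction of the resulting cut-covering problem to a tree instance --- and neither is routine. Distributed cactus construction in \congest within this round budget is not an available black box, and, more fundamentally, a probabilistic low-distortion tree embedding approximates \emph{distances}, not cut structure: it gives no guarantee that an edge covering a given minimum cut of the cactus maps to an edge covering the corresponding cut of the embedding tree, so the reduction as stated does not preserve feasibility, let alone cost. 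The claim that the tree instance then "admits an $O(\log n)$-approximation via LP rounding \ldots implementable in $O(D+n)$ rounds by pipelining" also bundles a distributed LP solve and rounding into one sentence without justification. So the proposal is a plan with its hardest steps unresolved, and at least one step (the embedding) resting on a property that tree embeddings do not have.

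For contrast, the paper sidesteps any compact representation of the small cuts. It maintains the invariant that the partial solution $H\cup A$ is a forest-augmented subgraph with $O(kn)$ edges, broadcasts \emph{all} of it so that every vertex knows $H\cup A$ entirely (this is exactly where the additive $n$ in the round complexity comes from), and lets each candidate edge compute locally how many uncovered $(k-1)$-cuts it covers, i.e.\ its cost-effectiveness. Symmetry among the maximum-cost-effectiveness candidates is broken not by LP rounding but by a doubling probability schedule: candidates are sampled with probability $p$ starting at $1/m$ and doubling every $O(\log n)$ iterations, and Lemma~\ref{deg} shows that by the time $p=2^{-\ell}$ the maximum number of candidates covering any uncovered cut is at most $2^{\ell}$ w.h.p., so in expectation only $O(1)$ sampled edges are "charged" to each newly covered cut. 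The $\log^3 n$ factor is $O(\log n)$ cost-effectiveness levels times $O(\log n)$ probability levels times $O(\log n)$ iterations per phase, and the expectation in the approximation ratio comes from this sampling, not from an embedding. An MST-based filtering step (Claims~\ref{no-cycle}--\ref{cuts}) keeps $|A|\le n-1$ so the repeated broadcasts stay within $O(n)$ total rounds. If you want to salvage your proposal you would need to either prove the cactus-plus-embedding pipeline from scratch or replace it with an argument of this flavor.
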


This gives the first nearly-linear time algorithm for any constant $k$, and improves upon the previous $O(knD)$-round algorithm \cite{shadeh2009distributed}. We also show that in the special case of \emph{unweighted} 3-ECSS we can improve the time complexity of the algorithm to $O(D \log^3{n})$ rounds, which improves upon the previous $O(D+\sqrt{n}\log^*{n})$-round algorithm \cite{thurimella1995sub}.

\begin{restatable}{theorem}{threeECSS}
There is a distributed algorithm for unweighted $3$-ECSS in the \congest model with an expected approximation ratio of $O(\log{n})$, and time complexity of $O(D \log^3{n})$ rounds.
\end{restatable}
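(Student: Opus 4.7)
My plan is to follow the augmentation paradigm used in Theorem~\ref{2-ECSS}, exploiting two features of the unweighted $k=3$ setting. First, the skeleton tree may be chosen as a BFS tree (height $O(D)$) instead of an MST (height potentially $\Theta(n)$), which removes the $\sqrt{n}$ factor. Second, any $3$-ECSS has at least $\lceil 3n/2\rceil$ edges, so any spanning subgraph of size $O(n\log n)$ is automatically an $O(\log n)$-approximation; hence it suffices to produce a $3$-ECSS of near-linear size in near-$D$ time, without solving an LP explicitly.

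Concretely, I would proceed in three phases. Phase~1 builds a BFS tree $T$ from an arbitrary vertex in $O(D)$ rounds, yielding a $1$-edge-connected spanning subgraph of depth $O(D)$. Phase~2 augments $T$ to a $2$-ECSS $H_{2}$ by invoking the randomized tree-augmentation primitive underlying Theorem~\ref{2-ECSS}, applied with unit weights on a tree of height $O(D)$; since that primitive runs in $O(h\cdot\log^{2} n)$ rounds on a tree of height $h$, Phase~2 costs $O(D\log^{2} n)$ rounds and adds $O(n\log n)$ edges in expectation. Phase~3 augments $H_{2}$ to a $3$-ECSS by exploiting the laminar/cactus structure of the $2$-edge-cuts of the $2$-edge-connected graph $H_{2}$: all such cuts are encoded by an auxiliary tree $T'$ on the $3$-edge-connected components of $H_{2}$, whose edges are in bijection with the $2$-edge-cuts of $H_{2}$, and an edge $e\in G\setminus H_{2}$ covers the cut associated with a tree edge of $T'$ iff $e$ spans the corresponding path in $T'$. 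Turning $H_{2}$ into a $3$-ECSS is thus a tree-augmentation instance on $T'$, solved by a second invocation of the TAP primitive.

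The main obstacle is Phase~3: implementing tree augmentation on $T'$ using $G$ (of diameter $D$) as the communication network, since neither $H_{2}$ nor $T'$ comes with a small-diameter embedding in $G$ \emph{a priori}. My plan is to use the BFS tree $T$ from Phase~1 as a global communication backbone. First, the $3$-edge-connected components of $H_{2}$ are computed and labeled in $\widetilde{O}(D)$ rounds by a parallel search along $T$, so that every vertex learns its component id and every edge of $G\setminus H_{2}$ learns the pair of components containing its endpoints. The TAP primitive is then simulated on $T'$ by pipelining along $T$, so that each logical round of the primitive costs $O(D)$ rounds in $G$. Together with an $O(\log n)$ outer-loop overhead for handling pipelining congestion and for the randomized cut-covering step, Phase~3 costs $O(D\log^{3} n)$ rounds. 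The total number of added edges is $O(n\log n)$ in expectation, which yields the claimed $O(\log n)$-approximation against $\mathrm{OPT}\ge\lceil 3n/2\rceil$.
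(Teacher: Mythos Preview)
Your Phase~1 and Phase~2 are fine in spirit (the paper actually uses the $O(D)$-round $2$-approximation for unweighted $2$-ECSS of \cite{censor2017fast} rather than re-running the weighted TAP primitive, but this is a minor difference). The genuine gap is in Phase~3: the reduction of $Aug_3$ to a tree-augmentation instance on an auxiliary tree $T'$ does not hold as stated.

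The claim that the edges of $T'$ are ``in bijection with the $2$-edge-cuts of $H_2$'' already fails on cardinality grounds. Take $H_2=C_n$: every pair of edges is a cut pair, so there are $\binom{n}{2}$ cut pairs, but any tree on the $n$ (singleton) $3$-edge-connected components has only $n-1$ edges. More structurally, the edges of $H_2$ with a common label $\phi$ form a \emph{cut class}; when such a class has $k\ge 2$ edges, the $3$-edge-connected components it touches are arranged \emph{cyclically} (this is exactly the cactus representation of min-cuts for $\lambda=2$), not along a path in a tree. An augmenting edge $e$ does not ``cover'' a cut class the way a non-tree edge covers a tree path in TAP; it only covers those cut pairs $\{f,f'\}$ in the class for which exactly one of $f,f'$ lies on $e$'s BFS-tree path (Corollary~\ref{cor-pairs}). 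In the $C_n$ example, adding the chord $\{v_1,v_3\}$ leaves both $\{e_1,e_2\}$ and every pair inside $\{e_3,\ldots,e_{n-1},e_n\}$ uncovered. So ``cover every edge of $T'$'' is neither necessary nor sufficient for $3$-edge-connectivity, and the TAP primitive cannot be invoked as a black box.

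What the paper does instead in Phase~3 is to stay in the set-cover framework of Section~\ref{sec:frame} (with the randomized ``guessing'' symmetry breaking of Section~\ref{sec:k-ECSS}) and show that the one expensive step---computing $|C_e|$ for every $e\notin H\cup A$---can be done in $O(D)$ rounds. The tool is cycle-space sampling \cite{pritchard2011fast}: each edge receives a label $\phi(e)$ with $\phi(e)=\phi(f)$ iff $\{e,f\}$ is a cut pair (w.h.p.). Then, letting $n_{\phi(t)}$ be the number of edges in $H\cup A$ with label $\phi(t)$ and $n_{\phi(t),e}$ the number of those that lie on $e$'s BFS-tree path, one has $|C_e|=\sum_{\phi(t)} n_{\phi(t),e}\bigl(n_{\phi(t)}-n_{\phi(t),e}\bigr)$, and all the needed quantities can be aggregated along the BFS tree in $O(D)$ rounds. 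With $O(\log^3 n)$ iterations this yields the claimed $O(D\log^3 n)$ bound; no auxiliary tree $T'$ is constructed.
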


\subsection{Additional related work} 

\textbf{FT-MST:} In the fault-tolerant MST problem the goal is to find a sparse subgraph of the input graph $G$ that contains an MST of $G \setminus\{e\}$ for each edge $e$.
This problem can be solved in $O(D + \sqrt{n} \log{n})$ rounds using the distributed algorithm of Ghaffari and Parter \cite{ghaffari2016near}. %This algorithm starts by building an MST, and then computes for each tree edge $t$ the minimum-weight replacement edge. %edge that covers it. 
%Adding all these edges to the MST results in a FT-MST. 
One of the ingredients in \cite{ghaffari2016near} is a decomposition of the tree into segments that turns out to be useful also for our 2-ECSS algorithm.
While a FT-MST and a minimum 2-ECSS are both low-cost 2-edge-connected spanning subgraphs, the main difference between the problems is that in the latter the goal is to minimize the \emph{sum of costs} of edges in the solution. Hence, the total cost of a solution for minimum 2-ECSS may be much cheaper. \\[-7pt]

\textbf{Cycle space sampling:} The \emph{cycle space sampling} technique introduced by Pritchard and Thurimella \cite{pritchard2011fast} allows to detect small cuts in a graph using connections between the cycles and cuts in a graph. They show an $O(D)$-round algorithm that assigns the edges of a graph short labels that allow to detect cuts of size 1 or 2 in $O(D)$ rounds. In particular, this gives an $O(D)$-round algorithm for verifying if a graph is 2-edge-connected or 3-edge-connected (for more details, see Section \ref{3-full}).
%(for more details, see Sections \ref{sec:3} and \ref{3-full}). 
We use this technique to show an efficient algorithm for the minimum \emph{size} 3-ECSS. \\[-7pt]

\textbf{Additional related problems:} 
There are also other $O(D)$-round algorithms for verifying if a graph is 2-edge-connected \cite{pritchard2005robust, censor2017fast}. A natural approach for verifying if a graph is $k$-edge-connected is computing the size $\lambda$ of a minimum cut in a graph. There are approximation and exact algorithms for the minimum cut problem in $\widetilde{O}(D + \sqrt{n})$ rounds \cite{DBLP:journals/corr/abs-1305-5520, nanongkai2014almost} and  $\widetilde{O}((D + \sqrt{n})\lambda^4)$ rounds \cite{nanongkai2014almost}, respectively, and a lower bound of  $\Omega(D + \frac{1}{\log{n}} \sqrt{\frac{n}{\alpha \lambda}})$ rounds for an $\alpha$-approximation in a graph with diameter $D = O(\log{n} + \frac{1}{\lambda \log{n}}\sqrt{\frac{n}{\alpha \lambda}})$ \cite{DBLP:journals/corr/abs-1305-5520} (note that if $\lambda = O(1)$ this lower bound becomes $\Omega(D)$).
Another related problem studied in the distributed setting is the decomposition of a graph with large connectivity into many disjoint trees, while almost preserving the total connectivity through the trees \cite{censor2014distributed}. \\[-7pt]

\textbf{Covering problems:} We show that the minimum $k$-ECSS problem is closely related to the set cover problem.  Some elements in our algorithms and analysis are inspired by the parallel set cover algorithm of Rajagopalan and Vazirani~\cite{rajagopalan1998primal}, and the minimum dominating set (MDS) algorithm of Jia et al.~\cite{jia2002efficient}. It is worth noting that our 2-ECSS algorithm \emph{guarantees} the approximation ratio, where other distributed $O(\log{n})$-approximations for set cover obtain approximations that hold in expectation or w.h.p \cite{jia2002efficient,kuhn2016local}. For this reason, we believe that our approach can be useful for additional local or global covering problems, particularly in scenarios where it is important to guarantee the approximation. We also used it in a recent algorithm for the minimum 2-spanner problem \cite{spanner}.
 
\subsection{Preliminaries} \label{sec:pre}
 
\textbf{Problem definition:} An undirected graph $G$ is \emph{$k$-edge-connected} if it remains connected after the removal of any $k-1$ edges. 
In the \emph{minimum weight $k$-edge-connected spanning subgraph problem ($k$-ECSS)} the input is an undirected $k$-edge-connected graph $G$ with $n$ vertices and non-negative weights $w(e)$ on the edges. The goal is to find the minimum weight $k$-ECSS of $G$. We assume that the weights of the edges are integers and are polynomial in $n$. This guarantees that a weight can be represented in $O(\log n)$ bits. %maybe discuss how different assumptions affect the analysis/message size
In \emph{unweighted} $k$-ECSS the goal is to find the minimum \emph{size} $k$-ECSS of $G$. It is equivalent to weighted $k$-ECSS where all the edges have unit-weight. \\[-7pt]

\textbf{The distributed model:} In the distributed \congest model, the input graph $G$ to the problem is the communication network. Initially all the vertices know the ids of their neighbors and the weights of the edges adjacent to them, and at the end each vertex knows which of the edges adjacent to it are taken to the solution. In our algorithms, it would be convenient to say that the edges do some computations. When we say this, we mean that the endpoints of the edge simulate the computation. \\[-7pt]

During our algorithm it is useful to communicate over a BFS tree. We construct a BFS tree with root $r$ in $O(D)$ rounds \cite{peleg2000distributed}, where $r$ is the vertex with minimum id. Using the BFS tree we can distribute $\ell$ different messages from vertices in the tree to all the vertices in the tree in $O(D + \ell)$ rounds using standard techniques \cite{peleg2000distributed}. We assume that vertices know the number of vertices $n$ and the number of edges $m$ in the algorithm, they can learn this information in $O(D)$ rounds by communication over the BFS tree.
For a rooted tree $T$, we denote by $p(v)$ the parent of $v$ in the tree, and we denote by $LCA(u,v)$ the lowest common ancestor of $u$ and $v$. \\[-7pt]

%\textbf{Roadmap:} The rest of the paper is organized as follows. In Sections \ref{sec:2-brief},\ref{sec:k-ECSS} and \ref{sec:3} we give high-level overviews of our 2-ECSS, $k$-ECSS and 3-ECSS algorithms, respectively. In Section \ref{disc}, we discuss questions for future research.
%Full details and proofs appear in the Appendix. 

\textbf{Roadmap:} The rest of the paper is organized as follows. In Section \ref{tech}, we give a high-level overview of our algorithms. 
In Sections \ref{sec:2-full}, \ref{k-full} and \ref{3-full} we present our 2-ECSS, $k$-ECSS and 3-ECSS algorithms, respectively. Finally, in Section \ref{disc}, we discuss questions for future research.

\section{Technical overview} \label{tech}

A natural approach for finding a minimum $k$-ECSS is to start with an empty subgraph $H$, and in iteration $i$ for $1 \leq i \leq k$ augment its connectivity from $i-1$ to $i$. We define the problem $Aug_k$ as follows. Given a $k$-edge-connected graph $G$ and a $(k-1)$-edge-connected spanning subgraph $H$ of $G$, the goal is to find a minimum weight set of edges $A$ from $G$, such that $H \cup A$ is $k$-edge-connected. For a set of edges $H$, we define $w(H) = \sum_{e \in H} w(e).$ We next show the following.
%In Appendix \ref{sec:proof}, we prove the following. %add here or change ref?

\begin{restatable}{claim}{Augk} \label{Aug_k} 
Assume that for $1 \leq i \leq k$, the algorithm $A_i$ is an $\alpha_i$-approximation algorithm for $Aug_i$ that takes $T_i$ rounds, then there is an $(\sum_{i=1}^{k} \alpha_i)$-approximation algorithm for $k$-ECSS that takes $\sum_{i=1}^{k} T_i$ rounds.
\end{restatable}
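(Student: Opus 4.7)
The plan is to run the algorithms $A_1, A_2, \ldots, A_k$ sequentially, where $A_i$ is applied with the current subgraph $H_{i-1} = A_1^* \cup \cdots \cup A_{i-1}^*$ (starting from $H_0 = \emptyset$) playing the role of the $(i-1)$-edge-connected subgraph, and the graph $G$ playing the role of the input. The output is $H = \bigcup_{i=1}^{k} A_i^*$. The time bound is immediate: we perform $k$ phases in sequence, phase $i$ taking $T_i$ rounds, for a total of $\sum_{i=1}^{k} T_i$ rounds. I also need to check that each $A_i$ is invoked on a valid instance: $H_0$ is a $0$-edge-connected spanning subgraph by convention, and since $A_i^*$ is a valid output of algorithm $A_i$, the subgraph $H_i$ is $i$-edge-connected, so the precondition for $A_{i+1}$ is satisfied.

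For the approximation ratio, fix an optimum $k$-ECSS, call it $OPT$, with weight $w(OPT)$. The key observation is that $OPT$ itself serves as a feasible (but not necessarily optimal) solution for every subproblem $Aug_i$ on input $H_{i-1}$. Indeed, $H_{i-1} \cup OPT \supseteq OPT$ is $k$-edge-connected, hence $i$-edge-connected, so the set of edges $OPT \setminus H_{i-1}$ is a feasible augmentation set for $Aug_i$ with weight at most $w(OPT)$. Let $OPT_i$ denote the weight of an optimal solution to $Aug_i$ on input $H_{i-1}$; we have just argued $OPT_i \leq w(OPT)$. Since $A_i$ is an $\alpha_i$-approximation,
\[
w(A_i^*) \leq \alpha_i \cdot OPT_i \leq \alpha_i \cdot w(OPT).
\]

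Summing over all phases,
\[
w(H) \;=\; \sum_{i=1}^{k} w(A_i^*) \;\leq\; \Bigl(\sum_{i=1}^{k} \alpha_i\Bigr) w(OPT),
\]
which yields the desired $\bigl(\sum_{i=1}^{k} \alpha_i\bigr)$-approximation. Finally, $H$ is clearly $k$-edge-connected because $H \supseteq H_k$, the output of the last phase.

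There is no real obstacle here; the only subtlety is the clean observation that $OPT$, being $k$-edge-connected, simultaneously upper-bounds the optimum of every intermediate augmentation subproblem regardless of which partial subgraph $H_{i-1}$ we have built so far. Once this is noted, the claim follows by a single telescoping sum.
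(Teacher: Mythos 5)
Your proof is correct and follows essentially the same route as the paper's: run the augmentation algorithms sequentially, and observe that the optimal $k$-ECSS is a feasible augmentation for every intermediate subproblem, so each phase costs at most $\alpha_i \cdot w(OPT)$. The telescoping sum then gives the claimed bound exactly as in the paper.
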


\begin{proof}
The algorithm for $k$-ECSS starts with an empty subgraph, and in iteration $i$ uses the algorithm $A_i$ to augment the connectivity from $i-1$ to $i$. Let $H_i$ be the set of edges added to the augmentation in iteration $i$, and let $H = \bigcup_{i=1}^{k} H_i$ be the solution constructed. $H$ is clearly $k$-edge-connected.
The correctness follows from the fact that an optimal solution $H^*$ for $k$-ECSS is a set of edges that augments the connectivity of any subgraph produced in the algorithm to $k$ (and in particular to any $i \leq k$). Since the algorithm $A_i$ is an $\alpha_i$-approximation algorithm, this gives $w(H_i) \leq \alpha_i w(H^*)$. This shows that $w(H) = \sum_{i=1}^{k} w(H_i) \leq \sum_{i=1}^{k} \alpha_i w(H^*)$, as needed. The time complexity of the algorithm is $\sum_{i=1}^{k} T_i$ rounds.
\end{proof}

A set of edges $C$ is a \emph{cut} in a connected graph $G$ if $G \setminus C$ is disconnected. 
Let $H$ be a subgraph of a graph $G=(V,E)$, and let $C$ be a cut in $H$. 

%We say that
\begin{definition}
An edge $e \in E$ \emph{covers} the cut $C$ if $(H \setminus C) \cup \{e\}$ is connected.
\end{definition}

%if $C$ is not a cut in $H \cup \{e\}$. 
Note that in a $(k-1)$-edge-connected graph $H$, the minimum cut is of size at least $k-1$. To solve $Aug_k$ our goal is to find a minimum cost set of edges $A$ that covers all the cuts of size $k-1$ in $H$. This is a special case of the set cover problem. 

\subsection{General framework of our algorithms} \label{sec:frame}

In order to solve $k'$-ECSS, we present an algorithm for $Aug_k$ for any constant $k \leq k'$. In the algorithm we maintain a set of edges $A$ that contains all the edges added to the augmentation. 
For an edge $e \not \in H$, we denote by $S_e$ the set of cuts of size $k-1$ of $H$ that $e$ covers. During the algorithm and analysis, 
we denote by $C_e$ all the cuts in $S_e$ that are still not covered by edges added to $A$. \\[-7pt]

\textbf{Cost-effectiveness.}
The \emph{cost-effectiveness} of an edge $e \not \in H$ is $\rho(e) = \frac{|C_e|}{w(e)}$. 
The \textit{rounded} cost-effectiveness of an edge $e$, denoted by $\tilde{\rho}(e)$, is obtained by rounding $\rho(e)$ to the closest power of 2 that is greater than $\rho(e)$. If $w(e) = 0$, the values $\rho(e)$ and $\tilde{\rho}(e)$ are defined to be $\infty$.
Adding an edge with maximum value of cost-effectiveness to $A$ allows to cover many cuts, while paying minimal cost. This suggests the following \emph{sequential} greedy algorithm. At each step, we add to $A$ the edge with maximum cost-effectiveness, and we continue until all the cuts of size $k-1$ are covered. This approach that is based on the classic greedy algorithm for set cover achieves an $O(\log{n})$-approximation. \\[-7pt]

\textbf{Symmetry breaking.}
Adding only one edge in each step gives an algorithm that is inherently sequential, and in order to obtain an efficient \emph{distributed} algorithm we would like to add many edges to $A$ in parallel. A naive approach could be to add all the edges with maximum cost-effectiveness to $A$ simultaneously. However, then we may add too many edges to $A$ and the approximation ratio is no longer guaranteed. To overcome this, we consider all the edges with maximum rounded cost-effectiveness as \emph{candidates}, and then we break the symmetry between the candidates. After that, some of the candidates are added to $A$, and we proceed in iterations until all the cuts of size $k-1$ are covered.
This gives us the general structure of the algorithm. \\[-7pt]

All our algorithms have this structure, but they differ in the symmetry breaking mechanism and by the implementation of the cost-effectiveness computation and other computations required. We next overview the main techniques in each of our algorithms.

\subsection{2-ECSS} \label{sec:2-brief}

In the special case of $k=2$, we start by building an MST, $T$, using the $O(D+ \sqrt{n}\log^*{n})$-round algorithm of Kutten and Peleg \cite{kutten1998fast}. Then, we augment $T$ to be 2-edge-connected using a new algorithm for weighted TAP. 
Although there is an algorithm with time complexity $\widetilde{O}(D+\sqrt{n})$ for \emph{unweighted} TAP \cite{censor2017fast}, this algorithm and its analysis rely heavily on the fact that the problem is unweighted, hence, giving an efficient algorithm for the weighted problem requires a different approach. 
Our algorithm for weighted TAP follows the framework described in Section \ref{sec:frame}, and exploits the simple structure of cuts of size 1 in the graph. 

As explained in Section \ref{sec:frame}, in order to augment the connectivity of $T$ to 2, our goal is to cover all the cuts of size 1 in $T$. Since a cut of size 1 is an edge, our goal is to cover all the tree edges. Let $e=\{u,v\}$ be a non-tree edge in $G$, and let $P_e$ be the unique path between $u$ and $v$ in $T$. Adding $e$ to $T$ creates a cycle that includes all the tree edges in $P_e$.
Since removing an edge from a cycle does not disconnect a graph, it follows that $e$ covers a tree edge $t$ if and only if $t \in P_e$. Hence, the set $S_e$ of all the cuts of size 1 covered by $e$ consists of all the tree edges of $P_e$, and $C_e$ are all the tree edges in $P_e$ that are still not covered by edges added to the augmentation $A$. 
In each iteration, all the non-tree edges with maximum rounded cost-effectiveness are the candidates. \\[-7pt]

\textbf{Symmetry breaking.}
In order to break the symmetry between the candidates, we suggest the following approach, inspired by a parallel algorithm for set cover \cite{rajagopalan1998primal}. We also used it recently in designing distributed algorithms for the 2-spanner and minimum dominating set (MDS) problems \cite{spanner}. Each candidate chooses a random number, and each tree edge chooses the first candidate that covers it according to the random values. A candidate edge $e$ that receives at least $\frac{|C_e|}{8}$ votes is added to $A$. Since we add to $A$ only edges receiving many votes, this approach allows us to add small number of edges to $A$ while covering many tree edges, and results in an $O(\log{n})$-approximation. \\[-7pt] %Our algorithm proceeds in iterations, until all the tree edges are covered. For a full description of the algorithm see Appendix \ref{sec:2-full}. 

\textbf{Efficient implementation.} 
A major difference between TAP and local covering problems like 2-spanner and MDS is that in the 2-spanner and MDS algorithms all the computations depend on a small local neighborhood around vertices, where TAP is a \emph{global} problem, and the algorithm requires to do many global computations simultaneously.
For example, during the algorithm each one of the non-tree edges needs to compute its cost-effectiveness and learn how many tree edges vote for it, and each one of the tree edges should vote for the first candidate that covers it.
However, there may be $\Theta(n^2)$ non-tree edges, and $n-1$ tree edges, and in order to get an efficient algorithm we should be able to do these computations in parallel.
To achieve this, we decompose the tree into segments with a relatively simple structure, following a decomposition used for solving the FT-MST problem \cite{ghaffari2016near}. The simple structure of the segments allows us to preform many computations simultaneously, which results in an $\widetilde{O}(D+\sqrt{n})$-round $O(\log{n})$-approximation for weighted TAP and for weighted 2-ECSS. 

\subsection{$k$-ECSS} \label{sec:k-ECSS}

The main obstacle in extending our 2-ECSS algorithm for larger values of $k$ is the need to work with many different cuts in parallel. For the case $k=2$, the cuts we needed to cover were represented by the tree edges. 
However, for larger values of $k$, we need to cover also cuts that contain several tree edges that may be far away from each other. Our goal is to design an algorithm for $Aug_k$ that follows the framework described in Section \ref{sec:frame}, but it is not clear anymore how to compute the cost-effectiveness of edges efficiently, and how to break the symmetry between candidates. \\[-7pt]

\textbf{Computing cost-effectiveness.} To compute the cost-effectiveness of edges, we use the following observation. The minimum $k$-ECSS has $O(kn)$ edges which is $O(n)$ for a constant $k$. If we guarantee that the subgraph $H$ constructed in our algorithm has $O(kn)$ edges, all the vertices can learn the complete structure of $H$ in $O(kn)$ time during the algorithm, and then each one of the (possibly $\Theta(n^2)$) edges can compute its cost-effectiveness locally. \\[-7pt]

\textbf{Symmetry breaking.} The main challenge is to break the symmetry between candidates. Let $deg(C)$ be the number of candidates that cover a cut $C$. To cover $C$, it is enough to add only one of these candidates to $A$, and if each of them is added to $A$ with probability $\frac{1}{deg(C)}$, then we add one candidate to cover $C$ in expectation. 
Ideally, we would like that each edge would be added to $A$ with probability that depends on the numbers $deg(C)$ of the cuts that it covers. A similar idea is used in the MDS algorithm of Jia et al. \cite{jia2002efficient}. However, in our case, it is not clear how to compute these values efficiently. 

In order to overcome this, we suggest the following ``guessing'' approach. Each candidate edge is added to $A$ with probability $p$. At the beginning $p = \frac{1}{m}$ where $m$ is the number of edges, after $O(\log{n})$ iterations we increase $p$ by a factor of $2$ and we continue until $p=1$. However, after each iteration, each candidate edge that was not added to $A$, computes its cost-effectiveness and remains a candidate only if its rounded cost-effectiveness is still maximal. The intuition is that the maximum degree of a cut decreases during the process. At the beginning $deg(C) \leq m$ for all cuts, however if there are cuts with degree close to $m$ they would probably be covered in the first $O(\log{n})$ iterations where $p = \frac{1}{m}$. Similarly, when we reach the phase that $p = \frac{1}{2^i}$, the maximum degree of a cut is at most $2^i$ w.h.p, and since each candidate is added to $A$ with probability $p = \frac{1}{2^i}$ then we do not add too many candidates to cover the same cuts. This allows us to prove an approximation ratio of $O(\log{n})$ in expectation. %The analysis appears in Section \ref{k-approx}. 
The number of iterations is $O(\log^3{n})$, since for each value of the $O(\log{n})$ possible values for rounded cost-effectiveness we have $O(\log^2{n})$ iterations.
Based on these ingredients we get an $O(D \log^3{n} + n)$-round $O(\log{n})$-approximation for $Aug_k$, which gives an $O(k(D \log^3{n} + n))$-round $O(k \log{n})$-approximation for $k$-ECSS.

\subsection{3-ECSS}

The bottleneck of our $k$-ECSS algorithm is the cost-effectiveness computation. In the special case of \emph{unweighted} 3-ECSS we show how to compute the cost-effectiveness in $O(D)$ rounds for a graph with diameter $D$. The main tool in our algorithm it the beautiful \emph{cycle space sampling} technique introduced by Pritchard and Thurimella \cite{pritchard2011fast}.

In a nutshell, this technique assigns to the edges of a 2-edge-connected graph labels $\phi(e)$, such that two edges $e$ and $f$ define a cut of size 2 in the graph if and only if $\phi(e)=\phi(f)$. We show that using the labels we can understand the structure of cuts of size 2 in a graph, and compute how many cuts are covered by an edge in $O(D)$ rounds. This gives an $O(D\log^3{n})$-round $O(\log{n})$-approximation algorithm for unweighted 3-ECSS.

\remove{
\section{2-ECSS} \label{sec:2-brief}

In this section, we give a high-level overview of our 2-ECSS algorithm, full details and proofs appear in Appendix \ref{sec:2-full}. We show the following.

%describe an $\widetilde{O}(D + \sqrt{n})$-round $O(\log{n})$-approximation algorithm for the minimum weight 2-edge-connected spanning subgraph problem (2-ECSS), showing the following.

\twoECSS*

Our algorithm starts by building an MST, $T$, using the $O(D+ \sqrt{n}\log^*{n})$-round algorithm of Kutten and Peleg \cite{kutten1998fast}, and then augments it to be 2-edge-connected using a new algorithm for weighted TAP. 
Although there is an algorithm with time complexity $\widetilde{O}(D+\sqrt{n})$ for \emph{unweighted} TAP \cite{censor2017fast}, this algorithm and its analysis rely heavily on the fact that the problem is unweighted, hence, giving an efficient algorithm for the weighted problem requires a different approach. We suggest a new algorithm for weighted TAP that is based on formulating TAP as a set cover problem. 
%where the goal is to \emph{cover} all the tree edges by non-tree edges.
As explained in Section \ref{sec:pre}, in order to augment the connectivity of $T$ to 2, our goal is to cover all the cuts of size 1 in $T$. Since a cut of size 1 is an edge, our goal is to cover all the tree edges. Let $e=\{u,v\}$ be a non-tree edge in $G$, and let $S_e$ be the set of all the tree edges in the unique path between $u$ and $v$ in $T$. Adding $e$ to $T$ creates a cycle that includes all the tree edges in $S_e$. Since removing an edge from a cycle does not disconnect a graph, it follows that $e$ covers a tree edge $t$ if and only if $t \in S_e$.

We start by describing the general structure of our weighted TAP algorithm. Next, we explain how to implement it efficiently. In the algorithm we maintain a set of edges $A$ that contains all the edges added to the augmentation.
Let $e$ be a non-tree edge. During the algorithm and analysis, we denote by $C_e$ the set of edges in $S_e$ that are still not covered by $A$. The cost-effectiveness of a non-tree edge $\rho(e)$ is defined to be $\frac{|C_e|}{w(e)}$. If $w(e) = 0$, $\rho(e)$ is undefined. However, at the beginning of the algorithm we add to $A$ all the edges with weight $0$, so in the rest of the algorithm we compute cost-effectiveness only for edges where $w(e) \neq 0$.
The \textit{rounded} cost-effectiveness of an edge $e$, denoted by $\tilde{\rho}(e)$, is obtained by rounding $\rho(e)$ to the closest power of 2 that is greater than $\rho(e)$.

Adding an edge with maximum value of cost-effectiveness to $A$ allows to cover many edges, while paying minimal cost. This suggests the following \emph{sequential} greedy algorithm. At each step, we add to $A$ the edge with maximum cost-effectiveness, and we continue until all the tree edges are covered. This approach that is based on the classic greedy algorithm for set cover achieves an $O(\log{n})$-approximation.

However, this algorithm is inherently sequential, and in order to obtain an efficient \emph{distributed} algorithm we would like to add many edges to $A$ in parallel. A naive approach could be to add all the edges with maximum cost-effectiveness to $A$ simultaneously. However, then we may add too many edges to $A$ and the approximation ratio is no longer guaranteed. To overcome this, we suggest the following approach, inspired by a parallel algorithm for set cover \cite{rajagopalan1998primal}. We consider all the edges with maximum rounded cost-effectiveness as \emph{candidates}, and then we break the symmetry between the candidates, as follows. Each candidate chooses a random number, and each tree edge chooses the first candidate that covers it according to the random values. A candidate edge $e$ that receives at least $\frac{|C_e|}{8}$ votes is added to $A$. Since we add to $A$ only edges receiving many votes, this approach allows us to add small number of edges to $A$ while covering many tree edges, and results in an $O(\log{n})$-approximation. Our algorithm proceeds in iterations, until all the tree edges are covered. For a full description of the algorithm see Appendix \ref{sec:2-full}.

Our symmetry breaking mechanism is inspired by a parallel algorithm for set cover \cite{rajagopalan1998primal}, and we also used it recently in designing distributed algorithms for the 2-spanner and minimum dominating set (MDS) problems \cite{spanner}. A major difference in our case is that in the 2-spanner and MDS algorithms all the computations depend on a small local neighborhood around vertices, where TAP is a \emph{global} problem, and the algorithm requires to do many global computations simultaneously. For example, during the algorithm each one of the non-tree edges needs to compute its cost-effectiveness and learn how many tree edges vote for it, and each one of the tree edges should vote for the first candidate that covers it.
However, there may be $\Theta(n^2)$ non-tree edges, and $n-1$ tree edges, and in order to get an efficient algorithm we should be able to do these computations in parallel.
To achieve this, we decompose the tree into segments with a relatively simple structure, following a decomposition used for solving the FT-MST problem \cite{ghaffari2016near}. 
We next describe the decomposition, and explain how we use it in our algorithm.

\subsection{Overview of the decomposition} \label{sec:over_dec} 

In Section \ref{sec:dec}, we show how to decompose the tree into segments satisfying nice properties. We follow the decomposition in \cite{ghaffari2016near} (see Section 4.3) with slight changes. The main difference is that in \cite{ghaffari2016near}, the first step of the decomposition is to choose random edges, and we instead choose edges deterministically using the MST algorithm of Kutten and Peleg \cite{kutten1998fast} which gives a \emph{deterministic} algorithm. The time complexity of our algorithm is $O(D + \sqrt{n}\log^*{n})$ rounds compared to the $O(D+\sqrt{n}\log{n})$-round algorithm in \cite{ghaffari2016near}. Using this decomposition combined with the FT-MST algorithm in \cite{ghaffari2016near} gives a deterministic algorithm for the FT-MST problem in $O(D+\sqrt{n}\log^*{n})$ rounds, which matches the time complexity of computing an MST. %move to intro..

%The decomposition is based on the decomposition in \cite{ghaffari2016near} with slight changes. We next give an overview of the decomposition and explain how to use it in our algorithm. Later, in Section \ref{sec:dec}, we explain how to construct the decomposition.
%In the decomposition presented in \cite{ghaffari2016near} (see Sections 4.2 and 4.3), 

In the decomposition, the tree is decomposed into $O(\sqrt{n})$ \emph{edge-disjoint} segments with diameter $O(\sqrt{n})$. 
Each segment $S$ has a root $r_S$, which is an ancestor of all the vertices in the segment. The segment contains a main path between the vertex $r_S$ and a descendant of it $d_S$ and additional subtrees attached to this path that are not connected by an edge to other segments in the tree. We call the main path of $S$ the \emph{highway} of $S$, and we call $d_S$ the \emph{unique descendant} of the segment $S$. The vertices $r_S$ and $d_S$ can be a part of other segments (If $r_S \neq r$ it is a unique descendant of another segment, and both $r_S$ and $d_S$ can be roots of additional segments), but other vertices in the segment are not connected by an edge to any other vertex outside the segment. The id of the segment is the pair $(r_S,d_S)$. The relatively simple structure of the segments, and in particular the fact that $r_S$ and $d_S$ are the only vertices in $S$ that are connected to other segments, will be very useful in our algorithm. %what about segments with no main path? r_s=d_s
The \emph{skeleton tree} $T_S$ is a virtual tree that its vertices are all the vertices that are either $r_S$ or $d_S$ for at least one segment $S$.  The edges in $T_S$ correspond to the highways of the segments, as follows. A vertex $v$ is a \emph{parent} of the vertex $u \neq v$ in the skeleton tree if and only if $v=r_S$ and $u=d_S$ for some segment $S$. 
The time complexity for constructing the segments is $O(D+\sqrt{n}\log^*{n})$ rounds.
Let $P_{u,v}$ be the unique tree path between $u$ and $v$.
In Section \ref{sec:dec}, we show the following.

\begin{restatable}{claim}{info} \label{info1}
In $O(D+\sqrt{n})$ rounds, the vertices learn the following information. All the vertices learn the id of their segment, and the complete structure of the skeleton tree. In addition, each vertex $v$ in the segment $S$ learns all the edges of the paths $P_{v,r_S}$ and $P_{v,d_S}$.
\end{restatable}

\begin{restatable}{claim}{infoTwo} \label{info2}
Assume that each tree edge $t$ and each segment $S$, have some information of $O(\log{n})$ bits, denote them by $m_t$ and $m_S$, respectively. In $O(D+\sqrt{n})$ rounds, the vertices learn the following information. 
Each vertex $v$ in the segment $S$ learns the values $(t,m_t)$ for all the tree edges in the highway of $S$, and in the paths $P_{v,r_S},P_{v,d_S}$. In addition, all the vertices learn all the values $(S,m_S)$. 
\end{restatable}

\subsection{Efficient implementation}
We next explain how to implement our algorithm, full details appear in Appendix \ref{sec:2-impl-full}. \\[-7pt]

\textbf{Computing cost-effectiveness:} For computing the cost-effectiveness of an edge $e=\{u,v\}$, the vertices $u$ and $v$ should learn how many tree edges in $S_e$ are still not covered. We assume that at the beginning of the computation, each tree edge knows if it is covered. Later we explain how tree edges learn if they are covered at the end of each iteration. The key ingredient that allows efficient computations is that the path $S_e$ consists of internal paths in the segments of $v$ and $u$, as well as \emph{highways} of other segments. For example, if $u$ and $v$ are in the same segment $S$, the unique path between them is contained in $P_{v,r_S}\cup P_{u,r_S}$. If $u$ and $v$ are in different segments, and their LCA is in another segment, $S_e$ is composed of the 3 paths $P_{u,r_u},P_{v,r_v},P_{r_v,r_u}$ where $r_v$ and $r_u$ are the ancestors in the segments of $v$ and $u$. Since both $r_v$ and $r_u$ are vertices in the skeleton tree, the path $P_{r_v,r_u}$ corresponds to the path between $r_v$ and $r_u$ in the skeleton tree, where each edge of the skeleton tree is replaced by the corresponding highway. We give a full case analysis in appendix \ref{sec:2-impl-full}.

Now, using Claims \ref{info1} and \ref{info2} where $m_t$ indicates if the tree edge $t$ is covered, and $m_S$ is the number of uncovered tree edges in the \emph{highway} of the segment $S$, we get the following. All the vertices know the complete structure of the skeleton tree, each vertex learns exactly which edges are not covered in the path $P_{v,r_v}$ and how many tree edges are still not covered on the highway of each segment. Given this information, all the non-tree edges compute locally the cost-effectiveness of edges, the overall time complexity is $O(D + \sqrt{n})$ rounds. \\[-7pt]

\textbf{Learning the first candidate that covers a tree edge:} Once all the non-tree edges computed the cost-effectiveness, computing the maximum rounded cost-effectiveness in the graph takes $O(D)$ rounds by communication over the BFS tree, and choosing random numbers is a completely local task. We next explain how all the tree edges learn which is the first candidate edge that covers them. A similar computation is done in \cite{ghaffari2016near} where they compute the minimum weight edges that cover each tree edge. %and we include the details for completeness. 
The high-level idea is to classify the non-tree edges that cover a tree edge $t$ according to 3 types: \emph{short-range}, \emph{mid-range} and \emph{long-range}, which indicate if $e=\{u,v\}$ has 2, 1 or 0 endpoints in the segment of $t$. Then, each tree edge $t$ learns which is the first candidate that covers it for each of these 3 types, the minimum of them is the first candidate that covers it. An important observation is that all the tree edges on the highway of a certain segment have the same first \emph{long-range} candidate. 
The full description of the algorithm appears in appendix \ref{sec:2-impl-full}. In a similar way, at the end of each iteration each tree edge learns if it is covered (it can be seen as learning the \emph{first} non-tree edge that covers $t$ from the edges added to $A$). \\[-7pt]

\textbf{Computing the number of votes:} To complete the description of the algorithm, we explain how each candidate learns how many tree edges vote for it. At this point of the algorithm, all the tree edges know which is the candidate they vote for. The computation is similar to the cost-effectiveness computation. Again, the tree edges that vote for $e=\{u,v\}$ are in the path $S_e$ that consists of tree paths in the segments of $v$ and $u$, as well as highways. A crucial observation is that if $e$ covers a tree edge $t$ that is not in the segments of $v$ and $u$, this edge is necessarily an highway edge, and $e$ is a \emph{long-range} edge for it. Hence, $t$ votes for $e$ only if $e$ is the first long-range candidate that covers the highway of $t$, and if the long-range candidate is the first candidate that covers $t$. We define $m_t$ to be the candidate that $t$ votes for, and $m_S=(e_S,n_S)$ where $e_S$ is the first long-range candidate that covers the highway of $S$, and $n_S$ is the number of edges of the highway that vote for $e_S$. Using Claim \ref{info2}, all the vertices learn all the relevant values $m_t,m_S$ which allows computing the number of votes. \\[-7pt]

To conclude, each iteration can be implemented in $O(D + \sqrt{n})$ rounds. 
In Section \ref{approx}, we show the approximation ratio analysis. The high-level idea is to assign each edge $t \in T$ a value $cost(t)$ such that the sum of the costs of all edges satisfies $w(A) \leq 8 \sum_{t \in T} cost(t) \leq O(\log{n}) w(A^*),$ where $A^*$ is an optimal augmentation. 
The right inequity is based on the classic analysis of the greedy algorithm for set cover. The left inequity is based on the fact that we add to $A$ only candidates receiving many votes. 
Finally, we show in Section \ref{sec:iter} that the number of iterations is $O(\log^2{n})$ w.h.p,\footnote{As standard in this setting, a high probability refers to a probability that is at least $1-\frac{1}{n^c}$ for a constant $c \geq 1$.} the time analysis is based on a potential function argument which is described in \cite{jia2002efficient, rajagopalan1998primal}. %Based on these ingredients we get an $O((D + \sqrt{n})\log^2{n})$-round $O(\log{n})$-approximation for weighted TAP and weighted 2-ECSS.
This gives an $O((D + \sqrt{n})\log^2{n})$-round $O(\log{n})$-approximation for weighted TAP and weighted 2-ECSS.

\section{$k$-ECSS} \label{sec:k-ECSS} 

In this section, we give a high-level overview of our $k$-ECSS algorithm, full details and proofs appear in Appendix \ref{k-full}. We show the following.

\kECSS*

The main obstacle in extending our 2-ECSS algorithm for larger values of $k$ is the need to work with many different cuts in parallel. For the case $k=2$, the cuts we needed to cover were represented by the tree edges. 
However, for larger values of $k$, we need to cover also cuts that contain several tree edges that may be far away from each other. 
In order to solve $k'$-ECSS, we present an algorithm for $Aug_k$ for any $k \leq k'$.  The input for $Aug_k$ is a $k$-edge-connected graph $G$, and a $(k-1)$-edge-connected spanning subgraph $H$, and the goal is to augment $H$ to be $k$-edge-connected by adding to it a minimum cost set of edges $A$. 
Our goal is to \emph{cover} all the cuts of size $k-1$ in $H$.

For an edge $e \not \in H$, we denote by $\widetilde{S}_e$ the set of cuts of size $k-1$ of $H$ that $e$ covers. During the algorithm and analysis, we denote by $C_e$ all the cuts in $\widetilde{S}_e$ that are still not covered by edges added to $A$.
The cost-effectiveness of an edge $e \not \in H$ is $\rho(e) = \frac{|C_e|}{w(e)}$. 
The \textit{rounded} cost-effectiveness of an edge $e$, denoted by $\tilde{\rho}(e)$, is obtained by rounding $\rho(e)$ to the closest power of 2 that is greater than $\rho(e)$. If $w(e) = 0$, the values $\rho(e)$ and $\tilde{\rho}(e)$ are defined to be $\infty$.
Our goal is to design a distributed set cover algorithm to this problem as well, but it is not clear anymore how to compute the cost-effectiveness of edges efficiently, and how to break the symmetry between candidates.

To compute the cost-effectiveness of edges, we use the following observation. The minimum $k$-ECSS has $O(kn)$ edges which is $O(n)$ for a constant $k$. If we guarantee that the subgraph $H$ constructed in our algorithm has $O(kn)$ edges, all the vertices can learn the complete structure of $H$ in $O(kn)$ time, and then each one of the (possibly $\Theta(n^2)$) edges can compute its cost-effectiveness locally. The main challenge is to break the symmetry between candidates. Let $deg(C)$ be the number of candidates that cover a cut $C$. To cover $C$, it is enough to add only one of these candidates to $A$, and if each of them is added to $A$ with probability $\frac{1}{deg(C)}$, then we add one candidate to cover $C$ in expectation. 
Ideally, we would like that each edge would be added to $A$ with probability that depends on the numbers $deg(C)$ of the cuts that it covers. A similar idea is used in the MDS algorithm of Jia et al. \cite{jia2002efficient}. However, in our case, it is not clear how to compute these values efficiently. 

In order to overcome this, we suggest the following ``guessing'' approach. Each candidate edge is added to $A$ with probability $p$. At the beginning $p = \frac{1}{m}$ where $m$ is the number of edges, after $O(\log{n})$ iterations we increase $p$ by a factor of $2$ and we continue until $p=1$. However, after each iteration, each candidate edge that was not added to $A$, computes its cost-effectiveness and remains a candidate only if its rounded cost-effectiveness is still maximal. The intuition is that the maximum degree of a cut decreases during the process. At the beginning $deg(C) \leq m$ for all cuts, however if there are cuts with degree close to $m$ they would probably be covered in the first $O(\log{n})$ iterations where $p = \frac{1}{m}$. Similarly, when we reach the phase that $p = \frac{1}{2^i}$, the maximum degree of a cut is at most $2^i$ w.h.p, and since each candidate is added to $A$ with probability $p = \frac{1}{2^i}$ then we do not add too many candidates to cover the same cuts. This allows us to prove an approximation ratio of $O(\log{n})$ in expectation. The analysis appears in Section \ref{k-approx}. %The approximation ratio analysis appears in Section \ref{k-approx}. 
The number of iterations is $O(\log^3{n})$, since for each value of the $O(\log{n})$ possible values for rounded cost-effectiveness we have $O(\log^2{n})$ iterations.

To guarantee that the number of edges added to $A$ is indeed $O(n)$, which is crucial for the time analysis, we suggest the following approach. Instead of adding candidates to $A$ with probability $p$, each candidate becomes an \emph{active candidate} with probability $p$. 
Then, we add to $A$ a maximal number of active candidates without creating cycles. We show that this guarantees that all the cuts covered by active candidates are indeed covered at the end of the iteration, and that $|A| \leq n-1$.  
We emphasize that this is important for the time analysis, however the approximation ratio analysis works exactly the same even if we add all the active candidates to $A$. At each iteration, all the vertices learn all the edges added to $A$. Since the number of edges added is at most $n-1$ in the \emph{whole} algorithm, and since each of the $O(\log^3{n})$ iterations requires global communication, we get a time complexity of $O(D \log^3{n} + n)$ rounds. Full details and proofs appear in Appendix \ref{k-full}.

%Based on these ingredients we get an $O(D \log^3{n} + n)$-round $O(\log{n})$-approximation for $k$-ECSS.
%Full details and proofs appear in Appendix \ref{k-full}.

\section{Unweighted 3-ECSS} \label{sec:3}
In this section, we show how to improve the time complexity of our $k$-ECSS algorithm  
to $O(D \log^3{n})$ rounds for the special case of \emph{unweighted} 3-ECSS. This section gives a high-level overview of the algorithm, full details and proofs appear in Appendix \ref{3-full}.
The bottleneck of our $k$-ECSS algorithm is the cost-effectiveness computation. 
In the special case of \emph{unweighted} 3-ECSS we show how to compute it in $O(D)$ rounds. The main tool in our algorithm is the beautiful \emph{cycle space sampling} technique introduced by Pritchard and Thurimella \cite{pritchard2011fast}.

We say that $\{e,f\}$ is a \emph{cut pair} in a 2-edge-connected graph $G$ if removing $e$ and $f$ from $G$ disconnects it.
Our algorithm for unweighted 3-ECSS starts by computing a 2-edge-connected subgraph $H$, and then augments its connectivity. To build $H$, we use an $O(D)$-round 2-approximation algorithm for \emph{unweighted} 2-ECSS \cite{censor2017fast}, which builds a BFS tree $T$, and then augments its connectivity to 2. In particular, the diameter of $H$ is $O(D)$. For an edge $e \not \in T$, we define $S_e$ to be all the \emph{tree edges} in the unique tree path covered by $e$. We say that $e \not \in H$ covers a cut pair $\{f,f'\}$ if $\{f,f'\}$ is not a cut pair in $H \cup \{e\}$. 
For an edge $e \not \in H$, we define $\widetilde{S}_e$ to be all the \emph{cut pairs} covered by $e$.
During the algorithm, we maintain a set of edges $A$ that includes all the edges added to the augmentation, initially $A = \emptyset$. $C_e$ are all the cut pairs in $\widetilde{S}_e$ that are not covered by $A$.  
The cost-effectiveness $\rho(e)$ of an edge $e \in H$ is defined as in Section \ref{sec:k-ECSS}. However, since all the edges have unit-weight, $\rho(e) = |C_e|$.   

Our algorithm for augmenting the connectivity of $H$ from 2 to 3, follows our algorithm for $Aug_k$ in Section \ref{sec:k-ECSS}. At the beginning, edges $e \not \in H \cup A$ compute their cost-effectiveness and become candidates if they have maximum rounded cost-effectiveness. Candidates are added to $A$ with probability $p$, and we continue iteratively until all cut pairs are covered. We choose the probability $p$ as in Section \ref{sec:k-ECSS}. A difference from the algorithm for $Aug_k$ is that now we just add all the active candidates to $A$. As explained in Section \ref{sec:k-ECSS}, this does not affect the approximation ratio analysis. We next explain how to implement the algorithm efficiently using the cycle space sampling technique. 

\subsection{The cycle space sampling technique}

In Appendix \ref{sec:cyc-sam}, we give an overview of the cycle space sampling technique. Here we present the main conclusions that we use in our algorithm.
Let $T$ be the BFS tree, and let $e$ be a non-tree edge, we denote by $Cyc_e$ the unique cycle in $T \cup \{e\}$. Note that $Cyc_e = S_e \cup \{e\}$. 
We assign labels to the edges, as follows.
Each non-tree edge $e$ chooses a uniformly independent $O(\log{n})$-bit string $\phi(e)$. This defines $\phi$ for all the non-tree edges, and can be computed locally by the non-tree edges. 
For a tree edge $t$, the label $\phi(t)$ is defined as follows: $\phi(t) = \Moplus_{t \in Cyc_e} \phi(e) = \Moplus_{t \in S_e} \phi(e)$. (Where $\Moplus$ stands for addition of vectors modulo 2). 
In \cite{pritchard2011fast}, it is shown that the following algorithm allows to compute the labels of the tree edges in $O(D)$ rounds (see Theorem 4.2).  
We scan the tree from the leaves to the root, and each vertex $v$ computes the label of $\{v,p(v)\}$ according to the non-tree edges adjacent to it and the labels it receives from its children, as follows. $\phi(\{v,p(v)\})=\Moplus_{f \in \delta(v) \setminus \{v,p(v)\}} \phi(f)$, where $\delta(v)$ is the set of edges adjacent to $v$. The time complexity is $O(D)$ rounds for scanning the tree since it is a BFS tree.
As we explain in Section \ref{sec:cyc-sam}, the following property holds w.h.p.

\begin{restatable}{property}{prop} \label{prop}
For all the edges, $\phi(e) = \phi(f)$ if and only if $\{e,f\}$ is a cut pair.
\end{restatable}
\begin{restatable}{lemma}{whp} \label{whp}
Property \ref{prop} holds w.h.p.
\end{restatable}

In Appendix \ref{sec:cut-pairs}, we give a simple characterization of all the cut pairs in a graph based on the cycle space sampling technique. From the definition of labels, it is easy to see that if $e$ and $f$ are two tree edges that are covered by the same non-tree edges, then $\phi(e) = \phi(f)$, which by Lemma \ref{whp} shows that $\{e,f\}$ is a cut pair. Similarly, if $e$ is a tree edge covered by the unique non-tree edge $f$, then $\phi(e) = \phi(f)$, and $\{e,f\}$ is a cut pair. In Appendix \ref{sec:cut-pairs}, we prove that these are the only cut pairs in the graph, and show the following Corollary.

\begin{restatable}{corollary}{corPairs} \label{cor-pairs}
Let $H$ be a 2-edge-connected spanning subgraph of a graph $G$ and let $T$ be a spanning tree of $H$, an edge $e \not \in H$ covers the cut pair $\{f,f'\}$ if an only if exactly one of $f,f'$ is in $S_e$.
\end{restatable}

\subsection{Implementing the algorithm}

We next explain how to use the labels for computing the cost-effectiveness. First, we apply the $O(D)$-round algorithm for computing the labels on the graph $H \cup A$.
We next assume that Property \ref{prop} holds, and show how to compute the cost-effectiveness. We later address the case that it does not happen, and show how it affects the algorithm. 
For computing cost-effectiveness, we need the following definitions.  
Let $t$ be a tree edge, we denote by $n_{\phi(t)}$ the number of edges in $H \cup A$ with the label $\phi(t)$. For an edge $e \not \in H \cup A$, 
we denote by $n_{\phi(t),e}$ the number of edges in $S_e$ with label $\phi(t)$. Note that if $\{f,f'\}$ is a cut pair, then $\phi(f) = \phi(f')$, we say that $\phi(f)$ is the label of the cut.
We next show the following.

\begin{restatable}{claim}{costEf} \label{cost-ef-3}
For an edge $e \not \in H \cup A$, the number of cut pairs with label $\phi(t)$ covered by $e$ is exactly $n_{\phi(t),e}(n_{\phi(t)} - n_{\phi(t),e})$.
\end{restatable}

\begin{proof}
According to Corollary \ref{cor-pairs}, the edge $e$ covers a cut pair $\{f,f'\}$ if and only if exactly one of $f,f'$ is in $S_e$. Hence, for a tree edge $f \in S_e$ with label $\phi(t)$, the edge $e$ covers all the cut pairs of the form $\{f,f'\}$ where $\{f,f'\}$ is a cut pair and $f' \not \in S_e$, there are $n_{\phi(t)} - n_{\phi(t),e}$ cuts of this form. Since there are $n_{\phi(t),e}$ edges in $S_e$ with label $\phi(t)$, the number of cut pairs with label $\phi(t)$ covered by $e$ is exactly $n_{\phi(t),e}(n_{\phi(t)} - n_{\phi(t),e})$. 
\end{proof}

According to Claim \ref{cost-ef-3}, if we sum the value $n_{\phi(t),e}(n_{\phi(t)} - n_{\phi(t),e})$ over all the labels $\phi(t)$ where $n_{\phi(t),e} \geq 1$ we get the number of cut pairs in $H \cup A$ covered by $e$, which is exactly $\rho(e) = |C_e|$. Therefore, to compute cost-effectiveness we need to explain how each edge $e \not \in H \cup A$ learns all the relevant values $n_{\phi(t)},n_{\phi(t),e}$. 
This is done in 3 steps. First, each non-tree edge $e \in H$ learns the values $(t,\phi(t))$ for all the edges in $S_e$. Second, each tree edge $t$ learns $n_{\phi(t)}$. Finally, Each edge $e \not \in H \cup A$ learns all the values $(t,\phi(t),n_{\phi(t)})$ for all the edges in $S_e$.

Note that at the end, each edge $e \not \in H \cup A$ knows all the labels of edges in $S_e$, and all the relevant values $n_{\phi(t)}$. This allows computing all the relevant values $n_{\phi(t)},n_{\phi(t),e}$.
Implementing steps \ref{step-1} and \ref{step-3} is based on the following observation. For a non-tree edge $e=\{u,v\}$, the tree path $S_e$ is contained in $P_{r,u} \cup P_{r,v}$, and all the vertices can learn full information about the path between them to the root of the tree in $O(D)$ rounds. For more details, see Appendix \ref{sec:3-impl}.
We next explain how the tree edges learn the values $n_{\phi(t)}$ in step \ref{step-2}. 
The crucial observation is that if $t$ is a tree edge and $e \in H$ is a non-tree edge that covers $t$, then all the edges with the label $\phi(t)$ are in $Cyc_e$. This follows from the classification of cut pairs, that shows that any two tree edges with the same label are covered by the exact same non-tree edges in $H$. Hence, after step \ref{step-1}, the edge $e$ can compute the value $n_{\phi(t)}$ and send it to $t$. In order that all the tree edges would learn the values $n_{\phi(t)}$ simultaneously we use a pipelined upcast over the BFS tree. For full details and proofs see Appendix \ref{sec:3-impl}. This completes the description of the cost-effectiveness computation.

To complete the description of the algorithm, we need to explain how to verify if $H \cup A$ is 3-edge-connected at the end of each iteration, to detect termination. This is based on the following claim, which follows from Property \ref{prop} and the fact that any cut pair has at least one tree edge.

\begin{restatable}{claim}{ver}
The graph $H \cup A$ is 3-edge connected if and only if $n_{\phi(t)}=1$ for all the tree edges.
\end{restatable}

Hence, to detect termination, we start by applying the algorithm for computing the labels again, now on the graph $H \cup A$ at the end of the iteration, and compute the values $n_{\phi(t)}$ as before.
After each tree edge knows $n_{\phi(t)}$, we can check in $O(D)$ rounds if at least one of these values is greater than 1 by communication over the BFS tree.
In Appendix \ref{sec:3-impl}, we show that although our algorithm assumes that Property \ref{prop} holds, which happens w.h.p, the algorithm always terminates after $O(\log^3{n})$ iterations, at the end $H \cup A$ is \emph{guaranteed} to be 3-edge-connected, and the approximation ratio is still $O(\log{n})$ in expectation.
Since each iteration takes $O(D)$ rounds, this results in an $O(D \log^3{n})$-round $O(\log{n})$-approximation for unweighted 3-ECSS.
}

%\appendix

%\section*{Full details and proofs}

\section{2-ECSS} \label{sec:2-full}

In this section, we describe our $\widetilde{O}(D + \sqrt{n})$-round $O(\log{n})$-approximation algorithm for 2-ECSS, showing the following.

\twoECSS*

As explained in Section \ref{sec:2-brief}, our algorithm starts by building an MST, $T$, and then augments its connectivity to 2 using a new algorithm for weighted TAP.
We start by describing the general structure of the weighted TAP algorithm. Next, in Section \ref{sec:2-impl-full}, we explain how to implement it efficiently based on a decomposition of the graph into segments.  In Section \ref{sec:dec}, we describe the decomposition in detail. In Section \ref{approx}, we give the approximation ratio analysis, and in Section \ref{sec:iter} we analyze the number of iterations in the algorithm.

Our weighted TAP algorithm follows the framework described in Section \ref{sec:frame}. In the algorithm we maintain a set of edges $A$ that contains all the edges added to the augmentation. 
Recall that for a non-tree edge $e=\{u,v\}$, we denote by $S_e$ the set of tree edges covered by $e$, which are exactly all the tree edges in the unique tree path between $u$ and $v$.
In addition, $C_e$ is the set of tree edges in $S_e$ that are still not covered by edges added to $A$, and the cost-effectiveness of a non-tree edge is $\rho(e) = \frac{|C_e|}{w(e)}$. If $w(e) = 0$, $\rho(e)$ is defined to be $\infty$. However, at the beginning of the algorithm we add to $A$ all the edges with weight $0$, so in the rest of the algorithm we compute cost-effectiveness only for edges where $w(e) \neq 0$.

Our algorithm proceeds in iterations, where in each iteration the following is computed:\\

\noindent\fbox{%
    \parbox{\textwidth}{%
    \vspace{-0.3cm}
\begin{enumerate}%[rightmargin=0.2cm]
\item { Each non-tree edge $e \not \in A$ computes its rounded cost-effectiveness $\tilde{\rho}(e)$.} \label{cost-ef}
\item { Each non-tree edge $e \not \in A$ with maximum rounded cost-effectiveness is a \textit{candidate}}. \label{max-ce}
\item { Each candidate $e$ chooses a random number $r_e \in \{1,...,n^8\}$}. \label{rand}
\item { Each uncovered tree-edge that is in the set $C_e$ for at least one of the candidates $e$, votes for the first of these candidates, according to the order of the values $r_e$. If there is more than one candidate with the same minimum value, it votes for the one with minimum ID.} \label{first_can}
\item { Each candidate $e$ which receives at least $\frac{|C_e|}{8}$ votes from edges in $C_e$ is added to $A$.}\label{number_votes}
\item { Each tree edge learns if it is covered by edges added to $A$. If all the tree edges are covered, the algorithm terminates, and the output is all the edges added to $A$ during the algorithm.} \label{is_covered}
\end{enumerate}
    \vspace{-0.4cm}
}} \\

We next describe how to implement each iteration in $O(D + \sqrt{n})$ rounds. Later, we show that the algorithm guarantees an $O(\log{n})$-approximation, and that the number of iterations is $O(\log^2{n})$ w.h.p. 
Note that after each edge knows its cost-effectiveness, computing the maximum rounded cost-effectiveness takes $O(D)$ rounds using the BFS tree, and computing the values $r_e$ is a completely local task. Similarly, after each tree edge knows if it is covered, learning if all the edges are covered takes $O(D)$ rounds. We next explain how to implement the rest of the algorithm.

\subsection{Efficient implementation} \label{sec:2-impl-full}

In our algorithm there are basically two types of computation: 
Each tree edge learns information related to all the edges that cover it (which is the first candidate that covers it, is it covered), and each non-tree edge learns information related to the edges it covers in the tree (how  many edges in $S_e$ are not covered, how many edges vote for it). To compute it efficiently, we use a decomposition similiar to the decomposition presented for the FT-MST problem \cite{ghaffari2016near}.

\subsubsection*{Overview of the decomposition} %\label{sec:over_dec} 

In Section \ref{sec:dec}, we show how to decompose the tree into segments satisfying nice properties. We follow the decomposition in \cite{ghaffari2016near} (see Section 4.3) with slight changes. The main difference is that in \cite{ghaffari2016near}, the first step of the decomposition is to choose random edges, and we instead choose edges deterministically using the MST algorithm of Kutten and Peleg \cite{kutten1998fast} which gives a \emph{deterministic} algorithm. The time complexity of our algorithm is $O(D + \sqrt{n}\log^*{n})$ rounds compared to the $O(D+\sqrt{n}\log{n})$-round algorithm in \cite{ghaffari2016near}. Using this decomposition combined with the FT-MST algorithm in \cite{ghaffari2016near} gives a deterministic algorithm for the FT-MST problem in $O(D+\sqrt{n}\log^*{n})$ rounds, which matches the time complexity of computing an MST. %move to intro..

In the decomposition, the tree is decomposed into $O(\sqrt{n})$ \emph{edge-disjoint} segments with diameter $O(\sqrt{n})$. 
Each segment $S$ has a root $r_S$, which is an ancestor of all the vertices in the segment. The segment contains a main path between the vertex $r_S$ and a descendant of it $d_S$ and additional subtrees attached to this path that are not connected by an edge to other segments in the tree. We call the main path of $S$ the \emph{highway} of $S$, and we call $d_S$ the \emph{unique descendant} of the segment $S$. The vertices $r_S$ and $d_S$ can be a part of other segments (If $r_S \neq r$ it is a unique descendant of another segment, and both $r_S$ and $d_S$ can be roots of additional segments), but other vertices in the segment are not connected by an edge to any other vertex outside the segment. The id of the segment is the pair $(r_S,d_S)$. 
See Figure \ref{decomp} for an illustration.

\setlength{\intextsep}{0pt}
\begin{figure}[h]
\centering
\setlength{\abovecaptionskip}{-2pt}
\setlength{\belowcaptionskip}{6pt}
\includegraphics[scale=0.55]{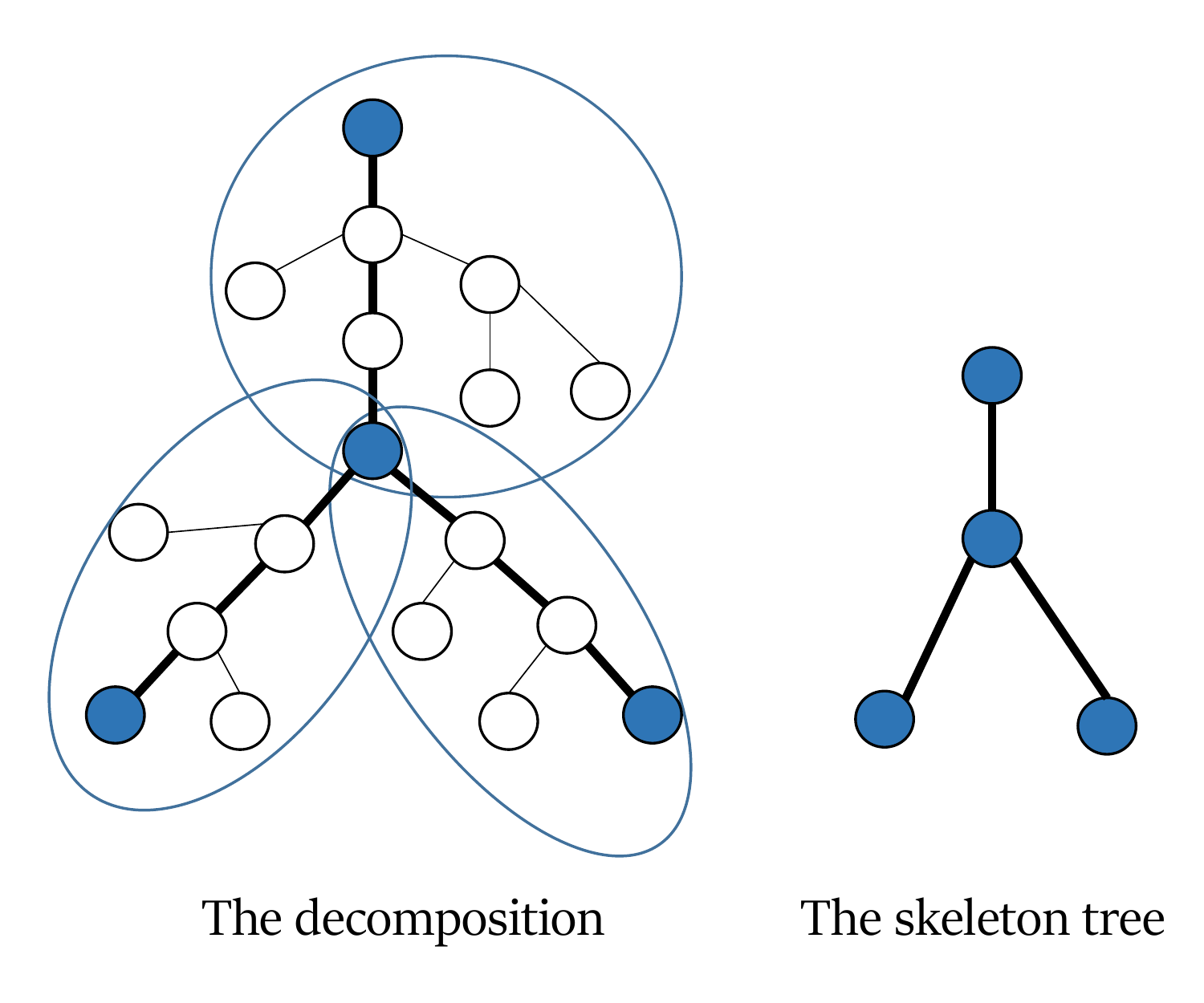}
 \caption{An illustration of the decomposition. The blue vertices are ancestors or unique descendants of segments, and the bold edges are highway edges. Note that the blue vertices can be a part of several segments, but other vertices are not connected by an edge to any vertex outside their segment. The edges of the skeleton tree correspond to highways in the original graph.}
\label{decomp}
\end{figure}

The relatively simple structure of the segments, and in particular the fact that $r_S$ and $d_S$ are the only vertices in $S$ that are connected to other segments, will be very useful in our algorithm. 
The \emph{skeleton tree} $T_S$ is a virtual tree that its vertices are all the vertices that are either $r_S$ or $d_S$ for at least one segment $S$.  The edges in $T_S$ correspond to the highways of the segments, as follows. A vertex $v$ is a \emph{parent} of the vertex $u \neq v$ in the skeleton tree if and only if $v=r_S$ and $u=d_S$ for some segment $S$. 
The time complexity for constructing the segments is $O(D+\sqrt{n}\log^*{n})$ rounds.
Let $P_{u,v}$ be the unique tree path between $u$ and $v$.
In Section \ref{sec:dec}, we show the following.

\begin{restatable}{claim}{info} \label{info1}
In $O(D+\sqrt{n})$ rounds, the vertices learn the following information. All the vertices learn the id of their segment, and the complete structure of the skeleton tree. In addition, each vertex $v$ in the segment $S$ learns all the edges of the paths $P_{v,r_S}$ and $P_{v,d_S}$.
\end{restatable}

\begin{restatable}{claim}{infoTwo} \label{info2}
Assume that each tree edge $t$ and each segment $S$, have some information of $O(\log{n})$ bits, denote them by $m_t$ and $m_S$, respectively. In $O(D+\sqrt{n})$ rounds, the vertices learn the following information. 
Each vertex $v$ in the segment $S$ learns the values $(t,m_t)$ for all the tree edges in the highway of $S$, and in the paths $P_{v,r_S},P_{v,d_S}$. In addition, all the vertices learn all the values $(S,m_S)$. 
\end{restatable}

\subsubsection*{(I). Computing cost-effectiveness} \label{sec:cost-ef}

In Line \ref{cost-ef} of the algorithm, each non-tree edge computes its rounded cost-effectiveness. We next explain how all the non-tree edges compute simultaneously their cost-effectiveness. Rounding the values is a completely local task.

To compute cost-effectiveness, each non-tree edge $e$ should learn how many tree edges in $S_e$ are still not covered.
Before this computation, each tree edge knows if it is covered or not. This holds during the algorithm, since tree edges learn it at the end of the previous iteration (see Line \ref{is_covered}). At the beginning of the algorithm we add to the augmentation all the edges of weight $0$ and each tree edge should learn if it is covered by an edge of weight $0$. To learn it we apply the computation of Line \ref{is_covered} also before the first iteration.

First, each vertex $v$ in the segment $S$ learns the id of $S$, and which edges in the paths $P_{v,r_S},P_{v,d_S}$ and in the highway of $S$ are still not covered. In addition, all the vertices learn the complete structure of the skeleton tree and how many uncovered edges are in the highway of each segment. All the vertices learn it in $O(D+\sqrt{n})$ rounds according to Claims \ref{info1} and \ref{info2}, by having $m_t$ indicate whether the edge $t$ is covered and $m_S$ is the number of uncovered edges in the highway of $S$ (each segment $S$ computes $m_S$ locally in $O(\sqrt{n})$ rounds by scanning the highway of $S$).  

Let $e=\{u,v\}$ be a non-tree edge, and let $LCA(u,v)$ be the lowest common ancestor of $u$ and $v$. The path that $e$ covers in the tree consists of the two paths from $u$ to $LCA(u,v)$ and from $v$ to $LCA(u,v)$. There are three cases:

\emph{Case 1: $u$ and $v$ are in the same segment $S$.} In this case $u$ sends to $v$ all the information about the path $P_{u,r_S}$: all the edges in the path and which of them are covered. Similarly, $v$ sends to $u$ all the information about the path $P_{v,r_S}$. From this information, both $u$ and $v$ can compute their LCA and learn how many uncovered edges are in the path they cover in the tree. The LCA is the vertex where the paths $P_{r_S,u},P_{r_S,v}$ diverge. 
The time complexity is $O(\sqrt{n})$ rounds for sending the paths.
%add pics for the 3 cases?

\emph{Case 2: $u$ and $v$ are in different segments, and $LCA(u,v)$ is in another segment.} In this case, the path that $e$ covers consists of the 3 paths $P_{u,r_u},P_{v,r_v},P_{r_v,r_u}$, where $r_v$ and $r_u$ are the ancestors in the segments of $v$ and $u$, respectively. 
Vertex $u$ knows how many uncovered edges are in $P_{u,r_u}$, and $v$ knows how many uncovered edges are in $P_{v,r_v}$. The path $P_{r_u,r_v}$ is composed of highways, as follows. The vertices $r_u$ and $r_v$ are vertices in the skeleton tree and have a path between them in the skeleton tree. This path is the unique path between $r_u$ and $r_v$ in the tree, where we replace each edge of the skeleton tree by the corresponding highway.
Since $v$ and $u$ know the structure of the skeleton tree, as well as how many uncovered tree edges are in each highway, they learn how many uncovered tree edges are in $P_{r_u,r_v}$. By exchanging one message between them, they both learn how many uncovered tree edges are in $P_{u,v}$.

\emph{Case 3: $u$ and $v$ are in different segments, and $LCA(u,v)$ is in the same segment of one of them.} Assume without loss of generality that $LCA(u,v)$ is in the same segment as $u$. Let $d_u$ be the unique descendant in the segment of $u$. 
From the structure of the segments, the only two vertices in this segment that are connected directly to other segments are $r_u$ and $d_u$. If $d_u$ is not an ancestor of $v$ it follows that $LCA(u,v) = r_u$. This case can be solved the same as case 2, since the path $P_{u,v}$ now consists of the 3 paths $P_{u,r_u},P_{v,r_v},P_{r_v,r_u}$. 

Otherwise, $d_u$ is an ancestor of $v$, and the path $P_{u,v}$ consists of the 3 paths $P_{v,r_v},P_{r_v,d_u},P_{d_u,u}$. 
$v$ knows all the information about $P_{v,r_v}$ and $u$ knows all the information about $P_{u,d_u}$. The path $P_{r_v,d_u}$ is composed of highways, since both $r_v$ and $d_u$ are vertices in the skeleton tree. Since $v$ and $u$ know the structure of the skeleton tree and the number of uncovered edges on each highway, they know how many uncovered edges are in $P_{r_v,d_u}$. By exchanging one message between them, $u$ and $v$ learn how many uncovered tree edges are in $P_{u,v}$.

Note that since $u$ and $v$ know the ids of their segments, as well as the structure of the skeleton tree, they can distinguish between the different cases. The whole computation of the cost-effectiveness takes $O(D+\sqrt{n})$ rounds.

\subsubsection*{(II). Learning the optimal edge that covers a tree edge}

Here we explain how all the tree edges learn simultaneously information related to all the non-tree edges that cover them. We need this twice in the algorithm. First, in Line \ref{first_can}, when each tree edge needs to learn which is the first candidate that covers it (all the non-tree edges know if they are candidates, and what is their number after Line \ref{rand}). Second, in Line \ref{is_covered}, when a tree edge needs to learn if it is covered by an edge added to the augmentation (the non-tree edges know if they were added to the augmentation after Line \ref{number_votes}). This can be seen as learning which is the first edge that covers it from those added to the augmentation, if such exists. (When edges are represented as ordered pairs, and we compare them according to the ids of their vertices in lexicographic order). We next explain how each tree edge learns the \emph{best} edge that covers it, where best refers to the first candidate (in Line \ref{first_can}) or to the first from the edges added to the augmentation (in Line \ref{is_covered}).
Similar computations are done in \cite{ghaffari2016near} (see Section 4.2). We include the details for completeness. 

As is done in \cite{ghaffari2016near}, for each tree edge $t$, we classify the non-tree edges that cover it according to 3 types: \emph{short-range}, \emph{mid-range} and \emph{long-range}, as follows. An edge $e=\{u,v\}$ that covers $t$ is a short-range edge for $t$ if both $u$ and $v$ are in the same segment as $t$, it is a mid-range edge if one of $u$ of $v$ is in the same segment as $t$ and it is a long-range edge if neither of them is in the segment of $t$.
In the algorithm, each tree edge $t$ learns which is the best non-tree edge that covers it for each of these 3 types, the optimal of them is the best edge that covers it.

\paragraph*{Learning the optimal short-range edge:} 
Let $t=\{v,p(v)\}$ be a tree edge, where $p(v)$ is the parent of $v$ in the tree. Note that each non-tree edge $e$ that covers $t$ has exactly one vertex at the subtree rooted at $v$, which means that there is a descendant $u$ of $v$ that knows about the edge $e$. If $e$ is a short range edge, $u$ is in the same segment of $t$. The vertex $u$ knows that $e$ covers $t$. This holds since if $e=\{u,w\}$ is a non-tree edge where both $u$ and $w$ are in the same segment $S$, $u$ and $w$ can learn which tree edges of the segment are covered by $e$ by exchanging between them the paths $P_{u,r_S},P_{w,r_S}$ (as explained also in Case 1 of the cost-effectiveness computation). 

The algorithm for computing the optimal short-range edge works as follows. Each vertex $u$ computes the best short-range edge adjacent to it that covers the tree edge $t=\{v,p(v)\}$ for each $v$ where $v$ is an ancestor of $u$, and $t$ is in the same segment as $u$. Each leaf of the segment sends these edges to its parent. Each internal vertex computes for each tree edge $t$ above it in the segment the optimal edge that covers it from the edges it receives from its children and the edges adjacent to it. %write better
At the end, each tree edge learns about the optimal short-range edge that covers it. Since the diameter of each segment is $O(\sqrt{n})$, and since we can pipeline the computations, the time complexity is $O(\sqrt{n})$ rounds. 

\paragraph*{Learning the optimal long-range edge:}
Note that if a tree edge $t=\{v,p(v)\}$ has a long-range edge that covers it, then $t$ must be on a highway of a segment. As otherwise, all the descendants of $v$ are in the segment of $v$. In addition, the following holds (see Proposition 4.2 in \cite{ghaffari2016near}).

\begin{observation}
All the edges on the highway of a certain segment have the same optimal long-range edge. 
\end{observation}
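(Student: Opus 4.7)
The plan is to show that the \emph{set} of long-range edges covering a tree edge $t$ on the highway of $S$ is in fact the same for every highway edge of $S$; once this is established, the ``optimal'' long-range edge (whether it means the first candidate, as in Line~\ref{first_can}, or the first augmentation edge, as in Line~\ref{is_covered}) is identical for all highway edges under any fixed ordering. So the task reduces to a purely structural claim about which long-range edges cover a given highway edge.

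The key structural fact I would lean on is a property of the decomposition already highlighted in Section~\ref{sec:2-impl-full}: the only vertices of $S$ that are connected by an edge to vertices outside $S$ are $r_S$ and $d_S$. Combined with the fact that the highway is the tree path from $r_S$ to $d_S$ and that every other vertex of $S$ hangs off this highway inside $S$, this yields the following two observations. First, any vertex outside $S$ that is a descendant of some highway vertex must in fact be a (proper) descendant of $d_S$, because the only way to ``exit'' $S$ going downward is through $d_S$. Second, any vertex outside $S$ that is \emph{not} a descendant of $r_S$ is necessarily reached only by leaving $S$ upward through $r_S$.

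Fix any highway edge $t$, and let $e = \{u,v\}$ be a long-range edge covering $t$. Cutting $t$ splits $T$ into an upper component (containing $r_S$) and a lower component (containing $d_S$); since $e$ covers $t$, one endpoint, say $u$, lies in the lower component and $v$ in the upper one. Because $e$ is long-range, $u,v \notin V(S)$, so by the two observations above $u$ must be a proper descendant of $d_S$ and $v$ must not be a descendant of $r_S$. The tree path from $u$ to $v$ therefore necessarily ascends through $d_S$, travels the full highway up to $r_S$, and continues to $v$; hence $e$ covers \emph{every} edge of the highway. The converse is immediate from the same description. Thus the set of long-range edges covering a highway edge of $S$ depends only on $S$, not on $t$, and the observation follows. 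The main thing to be careful about is correctly invoking the ``no external adjacency'' property of interior segment vertices; the rest is bookkeeping about tree paths.
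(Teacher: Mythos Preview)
Your approach is essentially the same as the paper's: show that any long-range edge covering one highway edge of $S$ must in fact cover the entire highway, by pinning down where its two endpoints sit relative to $r_S$ and $d_S$. The paper's proof is a two-line version of exactly this.

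There is, however, a small inaccuracy in one step. You assert that the upper endpoint $v$ ``must not be a descendant of $r_S$.'' This is not quite true: $r_S$ may be the root of another segment $S'$, and $v$ could lie in $S'$ (hence outside $S$) while still being a descendant of $r_S$. Your ``second observation'' does not rule this out---it only says something about vertices that are \emph{already} non-descendants of $r_S$. The correct statement, which is what the paper uses, is that $LCA(u,v)$ is $r_S$ or an ancestor of $r_S$: since $u$ descends from $d_S$, the path from $u$ upward reaches $r_S$ through the highway child of $r_S$, whereas $v$, being outside $S$, either lies above $r_S$ or descends from $r_S$ through a different child. In both cases the tree path $P_{u,v}$ passes through $r_S$ and therefore contains the full highway. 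With this correction your argument goes through and matches the paper's.
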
 

This follows since the only vertices in a certain segment $S$ that are connected to other segments are $r_S$ and $d_S$. Therefore, any long-range edge $e$ that covers a tree edge on $S$ has one vertex $u$ that is a descendant of $d_S$, and additional vertex $w \not \in S$ where $LCA(u,w)$ is an ancestor of $r_S$ (that could be $r_S$), which means that $e$ covers all the edges on the highway of $S$.

Hence, in order that all the tree edges would learn about the optimal long-range edge that covers them it is enough to learn the optimal $O(\sqrt{n})$ long-range edges that cover the highways of the segments. Since all the vertices know the complete structure of the skeleton tree, the vertices of each non-tree edge $e=\{u,w\}$ learn which highways of other segments they cover. To learn about the optimal edge that covers a certain highway we can do a convergecast over the BFS tree. Using pipelining, the root $r$ of the BFS tree learns about the optimal edges that cover \emph{all} the highways in $O(D + \sqrt{n})$ rounds. Then, it sends these edges to all the vertices in $O(D + \sqrt{n})$ rounds. 

\paragraph*{Learning the optimal mid-range edge:}
Let $t=\{v,p(v)\}$ be a tree edge covered by a mid-range edge $e$. One of the vertices of $e$ is a descendant of $v$ (it could be $v$ itself), denote it by $u$. There are two cases.

\emph{Case 1: $u$ is in the same segment as $t$.} In this case $u$ knows that the edge $e$ covers $t$, and can pass this information to $v$. In general, all the tree edges can learn about the best mid-range edge that covers them from those that are in case 1 exactly as they learn about the best short-range edge. 

\emph{Case 2: $u$ is in another segment.} Since $e=\{u,w\}$ is a mid-range edge, it follows that $w$ is in the same segment of $t$. In addition, since the segment of $t$ has a unique descendant $d$, it follows that $u$ is a descendant of $d$. The path that $e$ covers is composed of $P_{u,d},P_{d,LCA(w,d)},P_{LCA(w,d),w}$. The tree edges in $P_{u,d}$ are not in the segment of $t$, for the tree edges of $P_{LCA(w,d),w}$, the edge $e$ is in case 1, since $w$ is a descendant of them in the same segment. Hence, $e$ is in case 2 only for the highway edges on the path $P_{d,LCA(w,d)}$.  %add pic? 
Note that $LCA(w,d)$ is a vertex on the highway since all the ancestors of $d$ in its segment are on the highway, and $w$ and $d$ are in the same segment. 

To compute the best mid-range edges in case 2 that cover all the tree edges on a highway of a certain segment, we work as follows. For a vertex on the highway $x$, we denote by $T_x$ the subtree rooted at $x$ of vertices in the segment excluding the highway. This is a subtree that is contained only on the segment, where $x$ is the only highway vertex in this subtree. Vertex $x$ learns which is the best mid-range edge adjacent to a vertex in $T_x$ that covers the highway path $P_{x,d}$. Since each such mid-range edge is adjacent to some vertex in $T_x$, $x$ learns it by convergecast over $T_x$. Note that these subtrees are disjoint for different highway vertices, hence these computations are done in parallel for all the subtrees $T_x$. In order to decide which is the best mid-range edges (from case 2) that cover the tree edges on the highway, we work as follows. For the edge $\{v,p(v)\}$ where $p(v)=r_S$ is the root of the segment, this is the best edge adjacent to a vertex in $T_{r_S}$, and $r_S$ learns it. If $p(v) = x \neq r_S$ this is the optimal edge between the best edge adjacent to a vertex in $T_x$ and the best edge that covers the path $P_{p(x),d}$. Vertex $x$ knows which is the first one, and it receives from its parent the second one. This is done as follows, at the beginning $r_S$ sends the relevant edge to its child in the highway, then its child $v$ can compute the optimal edge that covers $P_{v,d}$ and it sends it to its child, and we continue in the same manner until we compute all the optimal mid-range edges. 

The time complexity is proportional to the diameter of the segment, and we do these computations in different segments in parallel which results in a time complexity of $O(\sqrt{n})$ rounds.

Finally, the best edge that covers a tree edge is the best between the best short-range, long-range, and mid-range edges computed. The whole time complexity for learning the best edge is $O(D+\sqrt{n})$ rounds.

\subsubsection*{(III). Computing the number of votes}

In Line \ref{number_votes} of the algorithm, each candidate computes the number of votes it receives from edges in $C_e$. Note that after Line \ref{first_can}, each tree edge $t$ knows which is the first candidate that covers it, if such exists, denote it by $m_t$.
Computing the number of votes is similar to computing the cost-effectiveness. When computing cost-effectiveness, each non-tree edge computed how many tree edges in $S_e$ are still not covered. Now each of the candidates computes how many of these edges vote for it. As explained in the cost-effectiveness computation, the non-tree edge $e=\{v,u\}$ covers paths in the segments of $v$ and of $u$ as well as highway edges in different segments. If a highway edge $t$ is covered by $e$, and $t$ is in a different segment than $v$ and $u$, then $e$ is a long-range edge for $t$.
The edge $t$ votes for $e$ only if $e$ is the best long-range edge that covers the highway of the segment of $t$, and if the long-range edge is the best edge that covers $t$.
Hence, in order to compute how many highway edges of different segments vote for $e$ it is enough that all the vertices learn which is the best long-range edge that covers the highway of each segment (an information all the vertices learn when computing the best long-range edges), and for how many edges of the highway of each segment the long-range edge is the best edge that covers them, denote this number by $m_S$ for a segment $S$. On each segment, the root of the segment learns $m_S$ by a simple scan of the highway (since all the tree edges know which is the best edge that covers them). 

Now all the vertices learn all the values $(S,m_S)$. In addition, each vertex $v$ in the segment $S$ learns the values $(t,m_t)$ for all the tree edges in the highway of $S$, and in the paths $P_{v,r_S},P_{v,d_S}$. All the vertices learn this information in $O(D + \sqrt{n})$ rounds using Claim \ref{info2}. 
Now all the non-tree edges can learn how many tree edges vote for them following the same analysis for computing cost-effectiveness. The overall time complexity is $O(D + \sqrt{n})$ rounds. 

To conclude, all the computations in an iteration take $O(D + \sqrt{n})$ rounds, which shows the following.

\begin{lemma} \label{time-iteration}
Each iteration takes $O(D + \sqrt{n})$ rounds.
\end{lemma}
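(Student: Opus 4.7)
The plan is to bound the round complexity of each of the six numbered steps in an iteration and sum them. Steps 2 and 3 are essentially free: computing the global maximum of the rounded cost-effectiveness values takes $O(D)$ rounds by aggregation on the BFS tree, and drawing $r_e\in\{1,\ldots,n^8\}$ is a local action.

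For Step 1, I would invoke the cost-effectiveness procedure of subsection (I). Applying Claim \ref{info1} gives every vertex the skeleton tree and, for $v$ in segment $S$, the edges on $P_{v,r_S}$ and $P_{v,d_S}$; applying Claim \ref{info2} with $m_t$ equal to the covered/uncovered flag and $m_S$ equal to the number of uncovered highway edges of $S$ gives every vertex the uncovered status of every relevant edge. The three-case analysis on $LCA(u,v)$ then lets the endpoints of every non-tree edge finish computing $|C_e|$ after at most one additional message exchange, all in $O(D+\sqrt n)$ rounds.

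Steps 4 and 6 fit the template of subsection (II): each tree edge must learn, among the non-tree edges covering it, the one of minimum $r_e$ (Step 4) or the one of minimum id among those just added to $A$ (Step 6). The short-range and mid-range contributions are computed by pipelined convergecasts of depth $O(\sqrt n)$ inside each segment, while the single long-range candidate per highway is gathered at the BFS-tree root and broadcast down in $O(D+\sqrt n)$ rounds via pipelining; combining the three types is a further segment-diameter-round computation.

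For Step 5, I would reuse the same framework: each segment computes locally the number $m_S$ of its highway edges for which the long-range candidate is the best covering edge, and each tree edge sets $m_t$ to the candidate it votes for. A second invocation of Claim \ref{info2} then supplies every candidate with the ingredients it needs to count its votes via the same three-case decomposition as Step 1, again in $O(D+\sqrt n)$ rounds. Summing the six bounds yields $O(D+\sqrt n)$ rounds per iteration. The only point to double-check is that the auxiliary labels $m_t,m_S$ in each invocation of Claim \ref{info2} fit in $O(\log n)$ bits, which holds because each is either a single bit, an edge identifier, or a counter bounded by $n$.
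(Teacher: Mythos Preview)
Your proposal is correct and follows essentially the same approach as the paper: the paper's proof of this lemma is precisely the content of subsections (I)--(III), which you have summarized accurately, including the use of Claims~\ref{info1} and~\ref{info2} with the same choices of $m_t,m_S$, the three-case LCA analysis for cost-effectiveness, the short-/mid-/long-range classification for the ``best covering edge'' computation, and the reuse of this framework for vote counting. One minor imprecision: in Case~1 (both endpoints in the same segment) the endpoints exchange their entire paths $P_{u,r_S},P_{v,r_S}$, which is $O(\sqrt n)$ messages rather than ``one additional message exchange,'' but this does not affect the $O(D+\sqrt n)$ bound.
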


\subsection{The decomposition} \label{sec:dec}

In this section, we explain how to decompose the tree into segments with the desired properties. The decomposition follows the decomposition in \cite{ghaffari2016near} (see Section 4.3) with slight changes. The main difference is that in \cite{ghaffari2016near}, the first step of the decomposition is to choose random edges, and we instead choose edges deterministically using the MST algorithm of Kutten and Peleg \cite{kutten1998fast} which gives a \emph{deterministic} algorithm. The time complexity of our algorithm is $O(D + \sqrt{n}\log^*{n})$ rounds compared to the $O(D+\sqrt{n}\log{n})$-round algorithm in \cite{ghaffari2016near}. Using this decomposition combined with the FT-MST algorithm in \cite{ghaffari2016near} gives a deterministic algorithm for the FT-MST problem in $O(D+\sqrt{n}\log^*{n})$ rounds, which matches the time complexity of computing an MST. 

\subsubsection*{(I). Preliminary step: defining fragments and virtual tree}
Before defining the segments and skeleton tree, we start by decomposing the tree into \emph{fragments}, that do not necessarily satisfy the desired properties. We later decompose the fragments further, to define the segments and skeleton tree.  
At the beginning of our algorithm, we computed an MST $T$ using the algorithm of Kutten and Peleg \cite{kutten1998fast}, which takes $O(D + \sqrt{n}\log^*{n})$ rounds. This algorithm decomposes the tree into $O(\sqrt{n})$ fragments of diameter $O(\sqrt{n})$, where each vertex knows the id of its fragment. We can assume that $T$ is a rooted tree with root $r$ and each vertex knows its parent in $T$. This can be achieved as follows. 
We say that a tree edge is a \emph{global} edge if it connects two different fragments. Since there are $O(\sqrt{n})$ fragments, there are $O(\sqrt{n})$ global edges and all the vertices learn them in $O(D+\sqrt{n})$ rounds by communicating over the BFS tree. If we contract each fragment to one vertex, the global edges define a temporary virtual tree between the fragments. The virtual tree is a rooted tree where the root is the fragment of $r$, and all the vertices know the complete structure of the virtual tree since they know all the global edges. In each fragment, we define the root of the fragment to be the vertex closest to $r$ (which can be deduced from the structure of the virtual tree). All the roots of the fragments know their parents in $T$ from the virtual tree. By communication on each fragment separately, each internal vertex in the fragment learns its parent in the fragment, which is its parent in $T$. The time complexity is proportional to the diameter of the fragment which is $O(\sqrt{n})$.

\subsubsection*{(II). Marking vertices}
To define segments, we start by marking vertices. Our algorithm follows the decomposition in \cite{ghaffari2016near} (see Section 4.3, steps 2 and 3) where the global edges play the role of the sampled edges $R$ in \cite{ghaffari2016near}. We mark all the vertices that are endpoints of some global edge, and we mark the root $r$. Next, we mark also all the LCAs of marked vertices in $O(\sqrt{n})$ rounds, as follows. We scan each fragment from the leaves to the root. A leaf $v$ sends to its parent the id of $v$ if $v$ is marked, or $\emptyset$ otherwise. An internal vertex $v$ that receives from its children at most one id, passes this id to its parent (or $\emptyset$ if it did not receive any id), and if it receives more than one id, it marks itself, and passes one of these ids to its parent. 
We next show the following Lemma (equivalent to Lemma 4.4 in \cite{ghaffari2016near}). The proof is similar to the proof in \cite{ghaffari2016near}, and is included here for completeness.

\begin{lemma}
The set of marked vertices satisfies the following properties:
\begin{enumerate}
\item The root $r$ is marked, and for each vertex $v \neq r$, there is a marked ancestor of $v$ at distance at most $O(\sqrt{n})$ from $v$. \label{roots}
\item For each two marked vertices $u$ and $v$, their LCA is also a marked vertex. \label{LCAS}
\item The number of marked vertices is $O(\sqrt{n})$. \label{marked}
\end{enumerate}
\end{lemma}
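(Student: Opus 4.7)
I will verify the three properties one by one, leaning on two structural facts: each fragment is a connected subtree of $T$ of diameter $O(\sqrt{n})$, and for every non-root fragment $F$, both endpoints of the global edge linking $F$ to its parent fragment are initially marked (so in particular $r_F$ is marked). Property~(1) follows immediately: given $v \neq r$ in fragment $F$, either $F$ is the root fragment, so $r$ itself is an ancestor of $v$ inside $F$, or $r_F$ is a marked ancestor of $v$ inside $F$; in either case the distance is at most the fragment diameter, $O(\sqrt{n})$.

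Property~(3) splits into two steps. The initial marking produces $O(\sqrt{n})$ vertices, since there are $O(\sqrt{n})$ global edges (one per non-root fragment) contributing at most $2\cdot O(\sqrt{n})$ endpoints, plus the root $r$. The LCA-closure scan adds at most $|M_0|-1$ further vertices by the standard Steiner-skeleton argument: the LCA-closure of a $k$-element set in a tree has size at most $2k-1$, since contracting degree-$2$ paths in the induced subtree produces a binary tree with $k$ leaves and therefore at most $k-1$ internal nodes.

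Property~(2) is the main obstacle, because the scan runs locally inside each fragment but we must certify LCA-closure across all fragment boundaries. Let $u,v$ be marked and $w=LCA(u,v)$. If $u$ and $v$ lie in the same fragment $F$, then $w\in F$ because fragments are subtrees, and the bottom-up scan of $F$ marks $w$: since $u$ and $v$ sit in distinct child-subtrees of $w$ within $F$, vertex $w$ receives marked ids from at least two children and marks itself. Forwarding only a single id per internal vertex is harmless, because we only need a ``some marked descendant exists in this subtree'' signal, not the descendant's identity. If $u$ and $v$ lie in different fragments, I introduce entry points $u^{\uparrow}, v^{\uparrow}$ in $F_w$, the fragment of $w$: set $u^{\uparrow}=u$ if $u\in F_w$, and otherwise let $u^{\uparrow}$ be the endpoint (in $F_w$) of the global edge through which the $u$-to-$w$ path first enters $F_w$, which is marked by construction; define $v^{\uparrow}$ symmetrically. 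Both entry points lie in $F_w$, are marked, and have $w$ as their LCA within $F_w$ (the two paths merge precisely at $w$), so the scan of $F_w$ marks $w$ by the same argument as the same-fragment case.

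The delicate point is the cross-fragment subcase of Property~(2): the scan is purely local, yet it must capture LCAs straddling fragment boundaries. The argument hinges on the observation that every fragment boundary is guarded by a pair of marked global-edge endpoints, so the local scan in $F_w$ automatically sees marked ``proxies'' in $F_w$ itself for any marked vertex sitting in a different fragment. A minor side point I should verify is that iterating the closure is unnecessary: if $w_i=LCA(u_i,v_i)$ with $u_i,v_i\in M_0$, then $LCA(w_1,w_2)$ equals the LCA of some pair drawn from $\{u_1,v_1,u_2,v_2\}\subseteq M_0$, already captured by the single-pass scan.
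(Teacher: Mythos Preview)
Your proof is correct and, for Properties~(1) and~(2), essentially mirrors the paper's argument: both reduce the cross-fragment LCA case to the same-fragment case via marked ``entry points'' into the fragment $F_w$ containing the LCA (what you call $u^{\uparrow}, v^{\uparrow}$ the paper calls $u_F, v_F$), observing that these are global-edge endpoints and hence already in $M_0$.

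For Property~(3) you take a genuinely different route. The paper argues directly from the mechanics of the scan: each newly-marked vertex $v$ receives at least two ids but forwards only one, so it ``drops'' an id that never reaches any marked vertex above it; assigning that dropped id to $v$ gives an injection from newly-marked vertices into $M_0$. You instead invoke the abstract Steiner-skeleton bound that the LCA-closure of any $k$-element set in a tree has size at most $2k-1$. Your argument is cleaner and algorithm-independent, but it tacitly relies on the fact that every vertex the scan marks actually lies in the LCA-closure of $M_0$ (a vertex receiving two ids is the LCA of the two $M_0$-vertices those ids originated from); you should state this containment explicitly, since your Property~(2) argument only establishes the reverse inclusion. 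The paper's labeling argument sidesteps this, bounding the newly-marked vertices directly without ever identifying them as LCAs.
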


\begin{proof}
The root is marked by definition, and for each vertex $v \neq r$, the root of $v$'s fragment is an ancestor of $v$ at distance at most $O(\sqrt{n})$ from $v$. This proves (\ref{roots}).

Let $u$ and $v$ be two marked vertices, if $u$ and $v$ are in the same fragment, their LCA is also in this fragment and it received at least two different ids in the algorithm, hence it is a marked vertex. If they are in different fragments, let $F$ be the fragment that includes their LCA. If one of $u$ and $v$ is in $F$, assume without loss of generality that $u \in F$, and let $v_F$ be the first vertex in $F$ in the tree path $P_{v,LCA(u,v)}$. Note that $v_F$ is a marked vertex since it is adjacent to a global edge. Now $LCA(u,v)=LCA(u,v_F)$ is a marked vertex since it is an LCA of two marked vertices in the fragment $F$. If neither of $u$ and $v$ is in $F$, let $u_F$ and $v_F$ be the first vertices in $F$ in the paths $P_{u,LCA(u,v)},P_{v,LCA(u,v)}$. Both $u_F$ and $v_F$ are marked vertices in $F$ since they are adjacent to global edges. Hence, $LCA(u,v)=LCA(u_F,v_F)$ is also a marked vertex.
This completes the proof of (\ref{LCAS}).

We now prove (\ref{marked}). Since there are $O(\sqrt{n})$ global edges, there are $O(\sqrt{n})$ marked vertices at the beginning, we call them the \emph{original} marked vertices. Let $v$ be a non-original marked vertex. Then, $v$ receives more than one id from its children, and passes to its parent one of these ids. Note that all the ids sent in the algorithm are ids of original marked vertices. We give $v$ a label $l_v$ that includes an id of an original marked vertex it receives and does not send to its parent. Note that the id $l_v$ does not reach any other marked vertex above $v$ and vertices below $v$ sent this id forward. Hence, different marked vertices receive different labels. Since all the labels are ids of original marked vertices, there are at most $O(\sqrt{n})$ non-original marked vertices. This completes the proof of (\ref{marked}). 
\end{proof}

\subsubsection*{(III). Defining the segments and the skeleton tree}
We next define the segments according to the marked vertices. For each marked vertex $d_S \neq r$, the tree path between $d_S$ to its closest marked ancestor $r_S \neq d_S$ defines a highway of a segment $S$. Note that all the internal vertices in the highway of $S$ are not marked by definition. They also do not have any marked descendants. Otherwise, there is an internal vertex $v$ in the highway that is the LCA of $d_S$ and another marked vertex $d$, which contradicts the fact that $v$ is not marked. The segment $S$ includes the highway between $r_S$ and $d_S$ (that are the ancestor and unique descendant of the segment), as well as all the descendants of internal vertices in the highway. 
Note that $d_S$ and $r_S$ can be a part of other highways.
Let $v$ be a marked vertex, then some of its children may be included in highways and are already included in a segment. If $v$ has children that are not included in any highway, it means that they do not have any marked descendant. All these children and the subtrees rooted at them are added to a segment with root $v$, as follows. If $v$ is already a root of at least one segment $S$, all these vertices are added to $S$. Otherwise, we define a new segment with an empty highway that has all these vertices, the id of the segment is $(v,v)$. 
Since there are $O(\sqrt{n})$ marked vertices, this decomposes $T$ into $O(\sqrt{n})$ edge-disjoint segments. The diameter of a segment is $O(\sqrt{n})$ because each vertex has a marked ancestor at distance at most $O(\sqrt{n})$, and $r_S$ is an ancestor of all the vertices in $S$ by definition.

The \emph{skeleton tree} $T_S$ is a virtual tree that its vertices are all the marked vertices. The edges in $T_S$ correspond to the highways of the segments, as follows. A vertex $v$ is a parent of the vertex $u \neq v$ in the skeleton tree if and only if $v=r_S$ and $u=d_S$ for some segment $S$. 
We next explain how vertices learn information about their segments and the skeleton tree.   

\subsubsection*{(IV). Learning information about the segments and skeleton tree}

We next prove Claims \ref{info1} and \ref{info2}.

\info*

\begin{proof}
At the beginning, each root of a segment learns the id of the segment, as follows. Each marked vertex $v \neq r$ sends to its parent its id, and each non-marked leaf sends to its parent $\emptyset$. Internal vertices that receive one id from their children send it to their parent, otherwise they send $\emptyset$. We continue until the messages reach a marked vertex. Note that an internal non-marked vertex receives at most one id from its children, otherwise it is the LCA of two marked vertices, and it is marked. At the end, each root $r_S$ of a segment $S$ knows $d_S$. The vertex $r_S$ broadcasts the id $(r_S,d_S)$ to all the vertices of the segment. The time complexity is $O(\sqrt{n})$ rounds, since this is the diameter of the segments. All the vertices learn all the ids $(r_S,d_S)$ in $O(D+\sqrt{n})$ rounds by communication over the BFS tree. Hence, all the vertices know the id of their segment and the complete structure of the skeleton tree.

Now each vertex $v$ in the segment $S$, learns all the edges in $P_{v,r_S}$, as follows. At the beginning, each parent sends to its children its id. Then, it sends the id it gets in the previous round. We continue in the same manner for $O(\sqrt{n})$ rounds, which allows each vertex to learn all its ancestors in its segment. If we perform the exact same computation in the reverse order (where vertices send ids to their parent) only over the highway edges, the root $r_S$ learns all the edges of the highway of $S$. Then, it broadcasts this path of length $O(\sqrt{n})$ to all the vertices in the segment. This gives to all the vertices in $S$ full information about the highway of $S$. Note that $P_{v,d_S}$ is composed of $P_{v,LCA(v,d_S)}$ and $P_{d_S,LCA(v,d_S)}$. The first is contained in $P_{v,r_S}$, the second is contained in the highway of $S$. Since $v$ knows all the information about $P_{v,r_S}$ and the highway of its segment, it learns $LCA(v,d_S)$ and learns the path $P_{v,d_S}$. This completes the proof of Claim \ref{info1}.
\end{proof}

Assume now that each tree edge $t$ has some information $m_t$ of $O(\log{n})$ bits and each segment has some information $m_S$ of $O(\log{n})$ bits. Then each vertex $v \in S$ learns the values $(t,m_t)$ for all the tree edges in the highway of $S$, and in the paths $P_{v,r_S},P_{v,d_S}$ exactly in the same manner. To learn all the values $(S,m_S)$ we communicate over the BFS tree. 
The whole time complexity for learning the information is $O(D+\sqrt{n})$ rounds. This shows the following.

\infoTwo*

\subsection{Approximation ratio analysis} \label{approx} 

In this section, we show that our algorithm for weighted TAP guarantees an $O(\log{n})$-approximation. 
Some elements of our analysis have similar analogues in the classic analysis of the greedy set cover algorithm \cite{johnson1974approximation, chvatal1979greedy, lovasz1975ratio}. We also used similar ideas in a recent algorithm for the minimum 2-spanner problem \cite{spanner}. 

Let $A$ be the set of edges added to the augmentation by the algorithm. When the algorithm ends, all the tree edges are covered, hence $T \cup A$ is 2-edge-connected. We show that $w(A) \leq O(\log{n})w(A^*)$, where $A^*$ is an optimal augmentation. 

To show the approximation ratio, we assign each edge $t \in T$ a value $cost(t)$ such that the sum of the costs of all edges is closely related both to $w(A)$ and $w(A^*)$, by satisfying $$w(A) \leq 8 \sum_{t \in T} cost(t) \leq O(\log{n}) w(A^*),$$
which implies our claimed approximation ratio.

For a tree edge $t$, let $i$ be the iteration in which $t$ is first covered in the algorithm.
The edge $t$ may be covered by a candidate edge $e$ that it votes for and is added to $A$ at iteration $i$.  In this case, we set $cost(t) = \frac{1}{\rho(e)}$, where $\rho(e)$ is the cost-effectiveness of the edge $e$ at iteration $i$. Another option is that $t$ is covered by other edges added to $A$ at iteration $i$, or at the beginning of the algorithm since it is covered by an edge of weight 0.
In these cases, we set $cost(t) = 0$.
We first show the left inequality above.

\begin{lemma} \label{cost}
$w(A) \leq 8 \cdot \sum_{t \in T} cost(t)$.
\end{lemma}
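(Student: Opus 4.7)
The plan is to attribute the weight of each edge $e$ that the algorithm adds to $A$ to the tree edges that voted for $e$, and check that the factor $8$ comes out correctly. Since weight-$0$ edges contribute nothing to $w(A)$, I only need to account for each $e \in A$ with $w(e) > 0$; such an edge was added in some iteration $i$ \emph{because} it received at least $\frac{|C_e|}{8}$ votes from tree edges in $C_e$, where $C_e$ is taken at the start of iteration $i$.

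The charging scheme I would use is the obvious one suggested by the definition of $cost$. Fix $e \in A$ with $w(e) > 0$ added in iteration $i$, and let $t$ be one of its voters. By the vote rule, $t$ is uncovered at the start of iteration $i$; because $t \in C_e$ and $e$ is added to $A$, the edge $t$ becomes covered at the end of iteration $i$. So iteration $i$ is exactly the first iteration in which $t$ is covered, and in it $t$ votes for a candidate that ends up in $A$. Hence by definition $cost(t) = \frac{1}{\rho(e)} = \frac{w(e)}{|C_e|}$, with $|C_e|$ taken at iteration $i$. Summing over the at least $\frac{|C_e|}{8}$ voters of $e$ gives a total contribution of at least $\frac{|C_e|}{8}\cdot\frac{w(e)}{|C_e|} = \frac{w(e)}{8}$ to $\sum_{t\in T} cost(t)$.

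The final step is to observe that the contributions from different $e \in A$ are disjoint: each tree edge $t$ has a unique ``first-covered'' iteration, and in that iteration $t$ casts at most one vote, so $t$ is charged by at most one edge of $A$. Combining the per-edge bound with this disjointness yields
\[
\sum_{t \in T} cost(t) \;\geq\; \sum_{\substack{e \in A \\ w(e) > 0}} \frac{w(e)}{8} \;=\; \frac{w(A)}{8},
\]
which is the claim. The one subtle point I would be careful about is the consistency of the charge — verifying that the iteration $i$ in which $e$ is added really is the first-covered iteration of each voter $t$ of $e$, and that no tree edge gets double-counted across iterations — but both facts follow immediately from the vote-only-if-uncovered rule and from $t \in C_e$ implying that $e$ covers $t$.
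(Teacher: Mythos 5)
Your proof is correct and follows essentially the same charging argument as the paper: each added edge $e$ with $w(e)>0$ recovers at least $\frac{w(e)}{8}$ from its voters, each of which has $cost(t)=\frac{1}{\rho(e)}$, and no tree edge is charged twice since it votes at most once, in the iteration it is first covered. Your extra check that this iteration is indeed the first-covered iteration of each voter is a valid (and worthwhile) verification of the same reasoning the paper uses implicitly.
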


\begin{proof}
Let $e$ be an edge with non-zero weight that is added to $A$ at iteration $i$, and let $\rho(e)$ be the cost-effectiveness of $e$ at iteration $i$. Recall that we add $e$ to $A$ since it gets at least $\frac{|C_e|}{8}$ votes from the tree edges it covers. Denote by $Votes(e)$ the set of tree edges that vote for $e$ at iteration $i$. For each $t \in Votes(e)$, we defined $cost(t) = \frac{1}{\rho(e)}$, which gives, $$\sum_{t \in Votes(e)} cost(t) \geq \frac{1}{\rho(e)} \cdot \frac{|C_e|}{8} =  \frac{w(e)}{|C_e|} \cdot \frac{|C_e|}{8} = \frac{w(e)}{8}.$$
Hence, for each $e \in A$, $w(e) \leq 8 \cdot \sum_{t \in Votes(e)} cost(t)$ (for an edge with weight 0, this holds trivially).
For each tree edge $t$, there is at most one edge $e \in A$ such that $t \in Votes(e)$, since $t$ votes for at most one edge at the iteration in which it is covered. Hence, we get $$w(A) = \sum_{e \in A} w(e) \leq 8 \cdot \sum_{e \in A} \sum_{t \in Votes(e)} cost(t) \leq 8 \cdot \sum_{t \in T} cost(t).$$
This completes the proof of Lemma \ref{cost}.
\end{proof}

\begin{lemma} \label{weighted}
$\sum_{t \in T} cost(t) \leq O(\log{n})w(A^*)$.
\end{lemma}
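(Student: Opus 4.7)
The proof will follow the classical amortized analysis of the greedy set cover algorithm, adapted to the TAP setting where ``elements'' are tree edges and ``sets'' are the covering paths $S_e$ of non-tree edges, weighted by $w(e)$. The plan is to charge each uncovered tree edge against the best available cost-effectiveness at the iteration in which it is covered, and then telescope via a harmonic sum.

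The key step is an averaging argument showing that at every iteration the maximum cost-effectiveness is comparable to $|U_i|/w(A^*)$, where $U_i$ denotes the set of tree edges still uncovered at the start of iteration $i$. Since $A^*$ is a valid augmentation, every tree edge in $U_i$ is contained in $S_e$ for some $e \in A^*$, which yields $\sum_{e \in A^*} |C_e| \geq |U_i|$ (here $C_e$ refers to the uncovered elements at iteration $i$). Dividing by $w(A^*)$ and using $\sum_e |C_e| = \sum_e w(e)\rho(e)$ together with the averaging inequality $\max_e \rho(e) \cdot w(A^*) \geq \sum_{e \in A^*} w(e)\rho(e)$ gives
\[
\rho^*_i := \max_{e \notin A} \rho(e) \;\geq\; \frac{|U_i|}{w(A^*)}.
\]
Now by the rounding convention, any candidate $e$ (which has the maximum rounded cost-effectiveness) satisfies $\rho(e) \geq \tilde{\rho}(e)/2 \geq \rho^*_i / 2$, so $1/\rho(e) \leq 2w(A^*)/|U_i|$.

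With this in hand, consider any tree edge $t$ with $cost(t) > 0$: by definition $t$ votes at some iteration $i(t)$ for a candidate $e$ added to $A$, and $cost(t) = 1/\rho(e) \leq 2w(A^*)/|U_{i(t)}|$. Since tree edges with $cost(t) > 0$ that are processed at iteration $i$ form a subset of the newly covered edges $U_i \setminus U_{i+1}$, we obtain
\[
\sum_{t \in T} cost(t) \;\leq\; \sum_{i} (|U_i| - |U_{i+1}|) \cdot \frac{2 w(A^*)}{|U_i|} \;\leq\; 2 w(A^*) \sum_{j=1}^{|U_1|} \frac{1}{j} \;=\; O(\log n)\, w(A^*),
\]
using the standard telescoping bound $\sum_i (|U_i|-|U_{i+1}|)/|U_i| \leq H_{|U_1|}$ and $|U_1| \leq n$.

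I do not expect a serious obstacle; the analysis is essentially the classic greedy set cover argument. The only delicate points are (i) correctly paying the factor-of-two slack introduced by the rounded cost-effectiveness, and (ii) verifying that tree edges covered ``for free'' at the start (by weight-$0$ edges) or by edges in $A$ they did not vote for contribute $cost(t)=0$ and thus do not interfere with the harmonic accounting; both are handled by the definition of $cost(t)$ and do not disturb the telescoping.
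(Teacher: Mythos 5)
Your proof is correct, but it runs the greedy set-cover analysis in a different direction than the paper does. The paper charges locally to each edge $e \in A^*$: ordering the tree edges $t_1,\dots,t_\ell$ of $S_e$ by the iteration in which they are covered, it observes that when $t_j$ is covered the edge $e$ itself still has cost-effectiveness at least $(\ell-j+1)/w(e)$, so the candidate that covers $t_j$ (having maximum rounded cost-effectiveness) satisfies $cost(t_j) \le 2w(e)/(\ell-j+1)$; summing gives $\sum_{t \in S_e} cost(t) \le O(\log n)\, w(e)$ per optimal edge, and then one sums over $A^*$. You instead use the global averaging lemma $\rho^*_i \ge |U_i|/w(A^*)$ and the telescoping bound $\sum_i (|U_i|-|U_{i+1}|)/|U_i| \le H_{|U_1|}$. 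Both are standard and both yield $O(\log n)$; the paper's version is slightly more local (it never needs to compare against the global count of uncovered elements, only against the residual cost-effectiveness of a single optimal edge), while yours packages the whole of $A^*$ into one density bound per iteration. Two small points to make rigorous in your version: in the averaging step $\sum_{e\in A^*} w(e)\rho(e) \ge |U_i|$ you should restrict to the positive-weight edges of $A^*$ (for $w(e)=0$ the product $w(e)\rho(e)$ is $0\cdot\infty$; this is harmless because all weight-$0$ edges are placed in $A$ before the first iteration, so every $t \in U_i$ is covered by a positive-weight edge of $A^*$), and the maximum defining $\rho^*_i$ should be taken over edges not yet in $A$, which suffices since edges of $A^* \cap A$ have $C_e=\emptyset$ and contribute nothing. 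You flag both issues yourself, so the argument goes through.
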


%actually O(\log{h})

\begin{proof}
Consider an edge $e \in A^*$ and let $(t_1,...,t_{\ell})$ be the sequence of tree edges covered by $e$ according to the order in which they are covered in the algorithm. Assume first that $w(e) \neq 0$.
The cost-effectiveness of $e$ at the beginning of the iteration in which $t_1$ is covered is $\frac{\ell}{w(e)}$. All the candidates that cover $t_1$ have maximum rounded cost-effectiveness. In particular, the cost-effectiveness of the edge that covers $t_1$ is at least $\frac{\ell}{2w(e)}$. Hence, $cost(t_1) \leq \frac{2w(e)}{\ell}$.
Similarly, the cost-effectiveness of $e$ at the beginning of the iteration in which $t_j$ is covered is at least $\frac{\ell - j +1}{w(e)}$, which gives $cost(t_j) \leq \frac{2w(e)}{\ell - j + 1}$.
This gives, $$\sum_{j=1}^{\ell} cost(t_j) \leq 2w(e) \cdot \sum_{j=1}^{\ell} \frac{1}{\ell - j + 1} = O(\log{\ell})w(e) = O(\log{n})w(e).$$
The last equality is because the number of tree edges $\ell$ covered by $e$ is at most $n$. 

For an edge $e \in A^*$ such that $w(e)=0$, note that $cost(t)=0$ for all the edges covered by $e$, since they are all covered at the beginning of the algorithm without voting for any candidate. 
Hence, we get in this case $\sum_{j=1}^{\ell} cost(t_j) = 0 = O(\log{n})w(e).$

Recall that $S_e$ is the set of tree edges covered by the edge $e$. Since $A^*$ is an augmentation, every edge $t \in T$ is covered by at least one edge $e \in A^*$. Summing over all the edges in $A^*$ we get,
$$\sum_{t \in T} cost(t) \leq \sum_{e \in A^*} \sum_{t \in S_e} cost(t) \leq O(\log{n})\sum_{e \in A^*} w(e) = O(\log{n})w(A^*).$$
This completes the proof.
\end{proof}

In conclusion, we get $w(A) \leq 8 \sum_{t \in T} cost(t) \leq O(\log{n}) w(A^*)$, which completes the proof of the $O(\log{n})$-approximation ratio for weighted TAP, giving the following lemma.

\begin{lemma} \label{approx-tap}
The approximation ratio of the weighted TAP algorithm is $O(\log{n})$.
\end{lemma}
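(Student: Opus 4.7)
The plan is to obtain the lemma as an immediate consequence of the two preceding lemmas, \ref{cost} and \ref{weighted}, together with the termination condition of the algorithm. First, I would verify that the output $A$ is indeed a feasible augmentation: by Line \ref{is_covered}, the algorithm only terminates once every tree edge is covered by some edge of $A$, so $T \cup A$ is 2-edge-connected and $A$ can legitimately be compared against an optimal augmentation $A^*$.

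Next, I would chain together the two charging inequalities from the preceding lemmas. By Lemma \ref{cost}, $w(A) \le 8 \sum_{t \in T} \mathrm{cost}(t)$, where the charge $\mathrm{cost}(t)$ was set to $1/\rho(e)$ for the candidate $e$ that $t$ voted for (when such an $e$ is added to $A$), and to $0$ otherwise. By Lemma \ref{weighted}, $\sum_{t \in T} \mathrm{cost}(t) \le O(\log n)\, w(A^*)$. Composing these two bounds yields
\[
w(A) \;\le\; 8 \sum_{t \in T} \mathrm{cost}(t) \;\le\; O(\log n)\, w(A^*),
\]
which is exactly the claimed $O(\log n)$-approximation.

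Since both inequalities are already established, there is no real obstacle remaining; the conceptual work was absorbed into the design of the charge $\mathrm{cost}(t)$. If anything, the most delicate point worth flagging is that the charge simultaneously serves two opposing purposes: the lower-bound direction (Lemma \ref{cost}) relies on the fact that we add an edge $e$ to $A$ only if it collects at least $|C_e|/8$ votes, so each dollar of $w(e)$ is amortized over $\Omega(|C_e|)$ tree edges charged $1/\rho(e)$; and the upper-bound direction (Lemma \ref{weighted}) relies on the rounded cost-effectiveness rule, ensuring that when $t_j$ is first covered, the winning candidate has cost-effectiveness within a factor of $2$ of the best edge, in particular of the edge $e \in A^*$ covering $t_j$, yielding the harmonic sum $O(\log n)\, w(e)$. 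Noting this is enough to conclude the lemma.
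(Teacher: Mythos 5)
Your proposal is correct and matches the paper's own argument: the paper likewise obtains Lemma \ref{approx-tap} by noting that the algorithm terminates only once all tree edges are covered (so $T \cup A$ is a feasible augmentation) and then chaining Lemma \ref{cost} with Lemma \ref{weighted} to get $w(A) \leq 8\sum_{t \in T} cost(t) \leq O(\log{n})\,w(A^*)$. Your remark about the dual role of the charge $cost(t)$ accurately reflects where the real work was done in the two preceding lemmas.
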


\subsection{Analyzing the number of iterations} \label{sec:iter}

We next show that the number of iterations of our weighted TAP algorithm is $O(\log^2{n})$ w.h.p. 
The analysis is based on a potential function argument which is described in \cite{jia2002efficient, rajagopalan1998primal} for the set cover and minimum dominating set problems. 

Let $\tilde{\rho}$ be the maximum rounded cost-effectiveness at the beginning of iteration $i$. We define the potential function $\phi = \sum_{e:\tilde{\rho}(e)=\tilde{\rho}}|C_e|$, where $C_e$ are the edges in $S_e$ that are still not covered by $A$ at the beginning of iteration $i$. We say that an iteration is \emph{legal} if the random numbers $r_e$ chosen by the candidates in this iteration are different.
The following lemma shows that if the value of $\tilde{\rho}$ does not change between iterations, and the iterations are legal, the potential function $\phi$ decreases by a multiplicative factor between iterations in expectation. The proof follows the lines of the proofs in \cite{jia2002efficient, rajagopalan1998primal}, and is included here for completeness.

\begin{lemma} \label{dec}
If $\phi$ and $\phi'$ are the potentials at the beginning and end of a legal iteration, then $E[\phi'] \leq c \cdot \phi$ for some positive constant $c<1$.
\end{lemma}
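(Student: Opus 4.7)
The proof follows the template of the potential-function analyses of Rajagopalan and Vazirani \cite{rajagopalan1998primal} and Jia et al.\ \cite{jia2002efficient}, adapted to weighted TAP. Let $E_h = \{e : \tilde\rho(e) = \tilde\rho\}$ denote the set of candidates at the start of the iteration, let $\deg_h(t) = |\{e \in E_h : t \in C_e\}|$, and let $T_h = \{t : \deg_h(t) \geq 1\}$, so that $\phi = \sum_{e \in E_h}|C_e| = \sum_{t \in T_h}\deg_h(t)$. Let $A' = A \cap E_h$ be the candidates added to $A$ in this iteration.

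The first ingredient is a deterministic comparison. Since the rounded cost-effectiveness of an edge is determined by $|C_e|/w(e)$, and $|C_e|$ can only decrease during the iteration, only edges already in $E_h$ can contribute to $\phi'$. Hence $\phi' \leq \sum_{e \in E_h}|C_e'|$, and interchanging the order of summation yields
\[
\phi - \phi' \ \geq\ \sum_{e \in E_h}\bigl(|C_e| - |C_e'|\bigr) \ =\ \sum_{t \in T_h,\; t \text{ covered in this iter.}} \deg_h(t).
\]

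The second, probabilistic, ingredient lower bounds the expectation of the right-hand side by a constant fraction of $\phi$. By the legality assumption the random numbers $r_e$ induce a total order on $E_h$; letting $e^{(1)}, e^{(2)}, \dots$ denote the candidates in increasing order of $r_e$, a simple induction shows that a tree edge $t$ votes for $e^{(i)}$ exactly when $e^{(i)}$ is the first candidate in the order that covers $t$, so the vote count is $X_{e^{(i)}} = |C_{e^{(i)}} \setminus \bigcup_{j<i} C_{e^{(j)}}|$. In particular, $X_{e^{(1)}} = |C_{e^{(1)}}|$ and hence $e^{(1)} \in A'$ deterministically. A direct analysis, essentially the one in \cite{rajagopalan1998primal}, then shows that in expectation the candidates added to $A'$ cover tree edges whose degree-weighted sum is at least a constant fraction of $\phi$, namely
\[
E\!\left[\sum_{t \in T_h,\, t \text{ covered}} \deg_h(t)\right] \ \geq\ c'\phi
\]
for a universal constant $c' > 0$. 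Combined with the deterministic bound, this yields $E[\phi'] \leq (1-c')\phi$, which is the claim with $c = 1-c'$.

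The main obstacle is the constant-factor coverage bound in the last step. Intuitively, as we sweep through the candidates in the order $e^{(1)}, e^{(2)}, \dots$, each $e^{(i)}$ with $|C_{e^{(i)}} \setminus \bigcup_{j<i} C_{e^{(j)}}| \geq |C_{e^{(i)}}|/8$ is added to $A'$, and the process continues until roughly a $(7/8)$-fraction of the accessible mass is covered. Turning this intuition into the expected inequality requires the careful analysis of the random greedy covering process developed in \cite{rajagopalan1998primal, jia2002efficient}, which handles the correlations introduced by the shared random variables $r_e$ and the $1/8$-threshold in the vote test.
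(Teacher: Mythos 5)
Your reduction of the lemma to the single inequality $E\bigl[\sum_{t\ \mathrm{covered}}\deg_h(t)\bigr]\geq c'\phi$ is sound and matches the skeleton of the paper's argument (the deterministic accounting $\phi-\phi'\geq\sum_{t\ \mathrm{covered}}\deg_h(t)$ and the identity $\phi=\sum_t \deg_h(t)$ are both correct). The problem is that this inequality \emph{is} the lemma: it is the entire probabilistic content, and you do not prove it — you assert it by appeal to \cite{rajagopalan1998primal, jia2002efficient} and then, in your closing paragraph, explicitly acknowledge that the step is an "obstacle" requiring an analysis you have not supplied. Moreover, the intuition you offer for it (sweep the candidates in $r$-order and argue the process "continues until roughly a $(7/8)$-fraction of the accessible mass is covered") is not the mechanism that works: a candidate failing the $|C_e|/8$ vote threshold covers nothing, the quantity to be bounded is weighted by $\deg_h(t)$ rather than counting covered edges, and nothing in the sweep picture controls these correlations. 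The observation that $e^{(1)}$ is always added is true but only guarantees that \emph{some} progress is made, not a constant fraction of $\phi$.

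The missing argument, as the paper carries it out, is a per-pair analysis rather than a sweep. For each candidate $e$, sort $C_e$ by $s(t)$ (the number of candidates covering $t$) in non-increasing order and split it into a top half $T(e)$ and bottom half $B(e)$; call $(e,t)$ \emph{good} if $t\in T(e)$. Two facts are then proved: (i) for $t,t'\in C_e$ with $s(t)\geq s(t')$, $\Pr[t'\text{ votes for }e\mid t\text{ votes for }e]\geq 1/2$, by writing the conditional probability as $\frac{N_t+N_b}{N_t+N_{t'}+N_b}$ in terms of the candidates covering only $t$, only $t'$, or both; and (ii) consequently, for a good pair, conditioned on $t$ voting for $e$, the expected number of votes from $B(e)$ is at least $|B(e)|/2\geq|C_e|/4$, so by Markov's inequality $e$ clears the $|C_e|/8$ threshold with probability at least $1/3$. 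Since $\Pr[t\text{ votes for }e]=1/s(t)$, each good pair contributes at least $\frac{1}{s(t)}\cdot\frac{1}{3}\cdot s(t)=\frac{1}{3}$ to $E[\phi-\phi']$, and at least half the pairs are good, giving $E[\phi-\phi']\geq\phi/6$. Without this (or an equivalent) argument, your proposal establishes only the bookkeeping around the lemma, not the lemma itself.
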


In order to prove Lemma \ref{dec} we need the following definitions. Let $s(t)$ be the number of candidates that cover the tree edge $t$. For a candidate $e$, we sort the tree edges in $C_e$ according to $s(t)$ in non-increasing order. Let $T(e)$ and $B(e)$ be the sets of the first $\lceil |C_e|/2 \rceil$ edges, and the last $\lceil |C_e|/2 \rceil$ edges in the sorted order, respectively. Indeed, if $|C_e|$ is odd, the sets $T(e)$ and $B(e)$ share an edge.
For a pair $(e,t)$, where $e$ is a candidate that covers $t$, we say that $(e,t)$ is \emph{good} if $t \in T(e)$.
We next show that if $t \in T(e)$ chooses $e$ in a legal iteration, then the edge $e$ is added to $A$ with constant probability.

\begin{claim} \label{greater}
Let $i$ be a legal iteration. If $t,t'$ are both covered by $e$ in iteration $i$, and $s(t) \geq s(t')$, then $Pr[t'\ chooses\ e| t\ chooses\ e] \geq \frac{1}{2}$.
\end{claim}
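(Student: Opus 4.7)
The plan is to translate the event ``$t$ chooses $e$'' into a clean statement about the random values $r_e$. Let $E_t$ denote the set of candidates at iteration $i$ that cover $t$ (so $|E_t| = s(t)$), and similarly $E_{t'}$; note $e \in E_t \cap E_{t'}$. Because iteration $i$ is \emph{legal}, all values $r_f$ for $f \in E_t \cup E_{t'}$ are distinct, so the tie-breaking-by-id rule never activates here, and by Line~\ref{first_can} of the algorithm
\[
\{t \text{ chooses } e\} \;=\; \bigl\{r_e = \min_{f \in E_t} r_f\bigr\}, \qquad \{t' \text{ chooses } e\} \;=\; \bigl\{r_e = \min_{f \in E_{t'}} r_f\bigr\}.
\]

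Next I would combine these two events. The key observation is that
\[
\{t \text{ chooses } e\} \cap \{t' \text{ chooses } e\} \;=\; \bigl\{r_e = \min_{f \in E_t \cup E_{t'}} r_f\bigr\}.
\]
One inclusion is obvious; the other follows because if $r_e$ is the minimum over the union it is in particular the minimum over each of $E_t$ and $E_{t'}$ separately. Since the values $\{r_f\}$ are i.i.d.\ uniform and distinct (conditional on the legal event), by symmetry the probability that a fixed element of a size-$N$ subset attains the minimum of that subset is $1/N$. Hence
\[
\Pr[t \text{ chooses } e \text{ and } t' \text{ chooses } e] = \frac{1}{|E_t \cup E_{t'}|}, \qquad \Pr[t \text{ chooses } e] = \frac{1}{|E_t|},
\]
and dividing gives $\Pr[t' \text{ chooses } e \mid t \text{ chooses } e] = |E_t|/|E_t \cup E_{t'}|$.

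Finally I would bound this ratio using the hypothesis $s(t) \geq s(t')$. Since
\[
|E_t \cup E_{t'}| \;\leq\; |E_t| + |E_{t'}| \;=\; s(t) + s(t') \;\leq\; 2\,s(t) \;=\; 2\,|E_t|,
\]
the ratio is at least $1/2$, as claimed. The argument is essentially one symmetry computation; the only point needing care is explicitly invoking the ``legal'' hypothesis to rule out ties, after which the minimum over any subset is a.s.\ unique and the tie-breaking by id never comes into play. I do not expect a genuine obstacle.
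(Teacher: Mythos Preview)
Your proof is correct and is essentially the same argument as the paper's: both compute $\Pr[t'\text{ chooses }e\mid t\text{ chooses }e]=|E_t|/|E_t\cup E_{t'}|$ via the symmetry of i.i.d.\ minima (the paper just writes this with the counts $N_t,N_{t'},N_b$ of candidates covering only $t$, only $t'$, or both, so that $|E_t|=N_t+N_b$ and $|E_t\cup E_{t'}|=N_t+N_{t'}+N_b$), and then bound the ratio using $s(t')\le s(t)$. Your explicit invocation of the ``legal'' hypothesis to justify distinctness is a nice touch that the paper leaves implicit.
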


\begin{proof}
Let $N_t,N_{t'},N_b$ be the number of candidates that cover $t$ but not $t'$, $t'$ but not $t$, and both $t$ and $t'$, respectively.
$$Pr[t'\ chooses\ e| t\ chooses\ e] = \frac{Pr[t\ and\ t'\ choose\ e]}{Pr[t\ chooses\ e]} = \frac{\frac{1}{N_t + N_{t'} + N_b}}{\frac{1}{N_t + N_b}} = \frac{|N_t|+|N_b|}{|N_t|+|N_{t'}|+|N_b|}.$$
It holds that $N_t \geq N_{t'}$ since $s(t) \geq s(t')$. This gives,
$$Pr[t'\ chooses\ e| t\ chooses\ e] = \frac{|N_t|+|N_b|}{|N_t|+|N_{t'}|+|N_b|} \geq \frac{|N_t|+|N_b|}{2|N_t|+|N_b|} \geq \frac{1}{2}.$$
\end{proof}

\begin{claim}
If $(e,t)$ is a good pair in a legal iteration $i$, then $Pr[e\ is\ chosen|t\ chooses\ e] \geq \frac{1}{3}$.
\end{claim}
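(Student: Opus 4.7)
The plan is to exploit Claim \ref{greater} by restricting attention to the votes cast by edges in the ``bottom half'' $B(e)$, and then to obtain the lower bound on $Pr[e\text{ is chosen}\mid t\text{ chooses }e]$ via a reverse-Markov argument on a bounded, nonnegative random variable. The key intuition is that the conditional expectation of the number of bottom-half edges choosing $e$ is at least $|B(e)|/2 \geq |C_e|/4$, which is substantially more than the threshold $|C_e|/8$, so a constant fraction of the conditional probability mass must lie above the threshold.

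Concretely, let $X$ be the number of edges $t'\in B(e)$ that choose $e$. Since every edge that chooses $e$ also votes for $e$, the number of votes $e$ receives is at least $X$, so it suffices to prove $Pr[X \geq |C_e|/8 \mid t \text{ chooses } e] \geq 1/3$. Because $t\in T(e)$ and the sort defining $T(e)$ and $B(e)$ puts edges with larger $s(\cdot)$ first, every $t'\in B(e)$ satisfies $s(t)\geq s(t')$ (the degenerate case $t=t'$, which can only occur when $|C_e|$ is odd, is trivial). Applying Claim \ref{greater} to each $t'\in B(e)$ individually and summing by linearity of expectation yields
\[
E[X \mid t \text{ chooses } e] \;\geq\; \frac{|B(e)|}{2},
\]
where $|B(e)| = \lceil|C_e|/2\rceil$.

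Since $0 \leq X \leq |B(e)|$, the reverse Markov inequality gives, for any $a < E[X\mid t\text{ chooses }e]$,
\[
Pr[X \geq a \mid t \text{ chooses } e] \;\geq\; \frac{E[X\mid t\text{ chooses }e] - a}{|B(e)| - a}.
\]
Taking $a = |C_e|/8$ and substituting the lower bound on $E[X]$, the right-hand side is at least $(|B(e)|/2 - |C_e|/8)/(|B(e)| - |C_e|/8)$. A quick derivative check shows this ratio is increasing in $|B(e)|$, so it is minimized at $|B(e)| = |C_e|/2$ (the even case), where it evaluates to exactly $1/3$; the odd case $|B(e)| = (|C_e|+1)/2$ only improves the bound. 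Thus $Pr[e\text{ is chosen}\mid t\text{ chooses }e]\geq 1/3$, as required. The main substantive work is done by the preceding Claim \ref{greater}; the only obstacle here is verifying the final arithmetic, in particular that the numbers are precisely calibrated so that $|B(e)| \geq |C_e|/2$ combines with the threshold $|C_e|/8$ to yield the constant $1/3$.
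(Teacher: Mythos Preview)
Your proof is correct and follows essentially the same approach as the paper: define $X$ as the number of $B(e)$-edges that choose $e$, use Claim~\ref{greater} and linearity to get $E[X\mid t\text{ chooses }e]\ge |B(e)|/2$, and then apply a Markov-type tail bound to the bounded variable $X$ (the paper phrases this as Markov on $X'=|B(e)|-X$ with threshold $\tfrac{3}{4}|B(e)|$, which is exactly your reverse-Markov computation). The only cosmetic difference is that you plug in the threshold $|C_e|/8$ directly and verify monotonicity in $|B(e)|$, whereas the paper uses the intermediate threshold $|B(e)|/4$ and then invokes $|B(e)|\ge |C_e|/2$; both routes yield the constant $1/3$.
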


\begin{proof}
Assume that $t$ chooses $e$. Denote by $X$ the number of edges in $B(e)$ that choose $e$, and let $X'=|B(e)|-X$. Note that $t \in T(e)$ since $(e,t)$ is good, therefore $s(t) \geq s(t')$ for any edge $t' \in B(e)$.
By Claim \ref{greater}, any edge $t' \in B(e)$ chooses $e$ with probability at least $\frac{1}{2}$. Hence, $E[X] \geq \frac{|B(e)|}{2}$. Equivalently, $E[X'] \leq \frac{|B(e)|}{2}$. Using Markov's inequality we get
$$Pr[X < \frac{|B(e)|}{4}]=Pr[X' > \frac{3}{4}|B(e)|] \leq Pr[X' \geq \frac{3}{2}E[X']] \leq \frac{2}{3}.$$
Hence, we get $Pr[X \geq \frac{|B(e)|}{4}] \geq \frac{1}{3}$. Since $|B(e)| \geq \frac{|C_e|}{2}$, it holds that $X \geq \frac{|C_e|}{8}$ with probability at least $\frac{1}{3}$. In this case, at least $\frac{|C_e|}{8}$ edges choose $e$, and it is added to $A$. This completes the proof.
\end{proof}

We can now bound the value of the potential function, by proving Lemma \ref{dec}.

\begin{proof} [Proof of Lemma \ref{dec}.]
Let $\phi$ and $\phi'$ be the values of the potential function at the beginning and end of a legal iteration $i$.
It holds that $\phi = \sum_{e:\tilde{\rho}(e)=\tilde{\rho}}|C_e|= \sum_{(e,t)} 1 = \sum_{t} s(t)$, where we sum over all the tree edges that are in $C_e$ for at least one candidate $e$, and over all the pairs $(e,t)$ where $e$ is a candidate and $t \in C_e$. Note that the rounded cost-effectiveness of all the candidates is $\tilde{\rho}$. If the edge $t$ chooses $e$, and $e$ is added to $A$, $\phi$ decreases by $s(t)$. We ascribe this decrease to the pair $(e,t)$. Since $t$ chooses only one candidate, any decrease in $\phi$ is ascribed only to one pair. Hence, we get
\begin{flalign*}
E[\phi - \phi'] &\geq \sum_{(e,t)} Pr[t\ chooses\ e,e\ is\ chosen]\cdot s(t) \\
&\geq \sum_{(e,t)\ is\ good} Pr[t\ chooses\ e]\cdot Pr[e\ is\ chosen|t\ chooses\ e]\cdot s(t) \\
&\geq \sum_{(e,t)\ is\ good} \frac{1}{s(t)} \cdot \frac{1}{3} \cdot s(t) = \frac{1}{3} \sum_{(e,t)\ is\ good} 1.
\end{flalign*}
Since at least half of the pairs are good, we get $E[\phi - \phi'] \geq \frac{1}{6} \phi$, or equivalently $E[\phi'] \leq \frac{5}{6} \phi$, which completes the proof.
\end{proof}

In conclusion, we get the following.

\begin{lemma} \label{time}
The time complexity of the weighted TAP algorithm is $O(\log^2{n})$ rounds w.h.p.
\end{lemma}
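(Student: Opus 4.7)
The plan is to partition the execution into \emph{phases} defined by the current maximum rounded cost-effectiveness $\tilde{\rho}$, bound the number of phases, and bound the length of each phase via the potential argument of Lemma~\ref{dec}.

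First I would observe that since edge weights are positive integers polynomial in $n$ and $|C_e| \leq n$, the value $\rho(e) = |C_e|/w(e)$ lies in a range of size $\mathrm{poly}(n)$, so its rounding to a power of two takes only $O(\log n)$ values. Moreover the maximum rounded cost-effectiveness can only decrease across iterations, since $|C_e|$ never increases as edges are covered; hence the algorithm passes through at most $O(\log n)$ phases. I would also verify that all iterations are legal w.h.p.: each candidate independently draws $r_e \in \{1,\dots,n^8\}$ and there are at most $O(n^2)$ candidates in any iteration, so by a union bound all values are distinct with probability at least $1 - 1/n^4$ per iteration; a further union bound over the polynomially many iterations of the whole run shows that every iteration is legal w.h.p.

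Within a single phase $\tilde{\rho}$ is constant and every iteration is legal, so Lemma~\ref{dec} applies at every step, yielding $E[\phi_{i+1} \mid \phi_0, \dots, \phi_i] \leq (5/6)\,\phi_i$. Iterating this inequality gives $E[\phi_K] \leq (5/6)^K \phi_0$. Since $\phi_0 \leq m \cdot n$ is polynomial in $n$, choosing $K = c \log n$ for a sufficiently large constant $c$ makes $E[\phi_K] \leq 1/n^2$. By Markov's inequality, $\Pr[\phi_K \geq 1] \leq E[\phi_K] \leq 1/n^2$; since $\phi_K$ is a non-negative integer, $\phi_K \geq 1$ is exactly the event that the phase has not yet ended. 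Hence each phase completes within $O(\log n)$ iterations w.h.p.

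A union bound over the $O(\log n)$ phases and the legality events then gives $O(\log^2 n)$ iterations w.h.p., and by Lemma~\ref{time-iteration} each iteration takes $O(D + \sqrt{n})$ rounds. The point I was most worried about going in is the step of turning the expectation bound of Lemma~\ref{dec} into a high-probability bound on phase length, since the iterates $\phi_i$ are not independent across steps; fortunately, the integrality of $\phi$ lets plain Markov on $E[\phi_K]$ do the job, and no Chernoff-style concentration argument is needed.
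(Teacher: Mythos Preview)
Your proof is correct and follows essentially the same route as the paper: bound the number of distinct values of the maximum rounded cost-effectiveness by $O(\log n)$ (using that weights are polynomial and $|C_e|\le n$), and bound the number of iterations at each such value by $O(\log n)$ via the geometric decay of the potential from Lemma~\ref{dec}.

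The one noteworthy difference is the last step. The paper argues that each phase finishes in $O(\log n)$ iterations \emph{in expectation} and then appeals to a Chernoff bound to upgrade to high probability. You instead iterate the inequality $E[\phi_{i+1}\mid \phi_i]\le (5/6)\phi_i$ to get $E[\phi_K]\le (5/6)^K\cdot\mathrm{poly}(n)$, and apply Markov's inequality together with the integrality of $\phi$ to conclude $\Pr[\phi_K\ge 1]\le 1/\mathrm{poly}(n)$ directly. Your argument is more self-contained here, since it avoids having to specify what random variables the Chernoff bound is being applied to. A small point worth making explicit: when you condition on legality while iterating the expectation, the legality event shares randomness with the iteration; the clean way to handle this is to note that $\Pr[\text{not legal}]\le n^{-4}$ and $\phi\le \mathrm{poly}(n)$ always, so the unconditional recursion becomes $E[\phi_{i+1}\mid\phi_i]\le (5/6)\phi_i + n^{-\Omega(1)}$, which still yields the desired bound after summing the geometric series.
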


\begin{proof}
Recall that $\rho(e) = \frac{|C_e|}{w(e)}$.
While $T \cup A$ is not 2-edge-connected, the maximum cost-effectiveness of an edge $e \not \in A$ is between $\frac{1}{w_{max}}$ and $\frac{n}{w_{min}}$ where $w_{min},w_{max}$ are the minimum and maximum positive weights of an edge. 
Since the cost-effectiveness values are rounded to powers of 2, and since the weights are polynomial,
$\tilde{\rho}$ may obtain at most $O(\log{(n \cdot \frac{w_{max}}{w_{min}})})=O(\log{n})$ values. In addition, by Lemma \ref{dec}, if $\tilde{\rho}$ has the same value at iterations $j$ and $j+1$, and $j$ is a legal iteration, then the value of $\phi$ decreases between these iterations by a factor of at least $1/c$ in expectation. Since the random numbers $r_e$ are chosen from $\{1,...,n^8\}$, they are different w.h.p, giving that if $\tilde{\rho}$ has the same value in any two consecutive iterations then the value of $\phi$ decreases between these iterations by a constant factor in expectation.
Since $\phi \leq n^3$, after $O(\log{(n^3)})=O(\log{n})$ iterations in expectation, the value of $\tilde{\rho}$ must decrease. This shows that the time complexity is $O(\log^2{n})$ rounds in expectation. A Chernoff bound then gives that this also holds w.h.p.
\end{proof}

\textbf{Remark:} Our algorithm can work also for arbitrary weights, but then the number of iterations would be $O(\log{n}\log{(n \cdot \frac{w_{max}}{w_{min}})})$, according to the proof of Lemma \ref{time}. Also, if the weights are arbitrarily large we can no longer send a weight in a round, hence the time complexity depends on the number of rounds needed to send a weight.

By Lemma \ref{time}, the number of iterations in the algorithm is $O(\log^2{n})$ w.h.p. Since each iteration takes $O(D + \sqrt{n})$ rounds by Lemma \ref{time-iteration}, we get a time complexity of $O((D + \sqrt{n})\log^2{n})$ rounds w.h.p. The approximation ratio of the algorithm is $O(\log{n})$ by Lemma \ref{approx-tap}, which gives the following.

\begin{theorem}
There is a distributed algorithm for weighted TAP in the \congest model that guarantees an approximation ratio of $O(\log{n})$, and takes $O((D + \sqrt{n})\log^2{n})$ rounds w.h.p.
\end{theorem}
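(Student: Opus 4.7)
The plan is to simply assemble the three ingredients already established in Sections \ref{sec:2-impl-full}, \ref{approx} and \ref{sec:iter}: the per-iteration cost bound (Lemma \ref{time-iteration}), the approximation guarantee (Lemma \ref{approx-tap}), and the bound on the number of iterations (Lemma \ref{time}). The theorem is purely a summary statement, so no new argument is needed; the entire task is to multiply the per-iteration cost by the number of iterations and to invoke the approximation bound.

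Concretely, I would first observe that each iteration of the weighted TAP algorithm consists of the four computational primitives developed in Section \ref{sec:2-impl-full}: computing cost-effectiveness, determining candidates and maximum rounded cost-effectiveness (one BFS-tree broadcast), determining which candidate each tree edge votes for, and counting votes per candidate. All of these are shown in Section \ref{sec:2-impl-full} to run in $O(D+\sqrt{n})$ rounds using the segment decomposition from Section \ref{sec:dec}, which is exactly the content of Lemma \ref{time-iteration}. Next, I would recall from Lemma \ref{time} that, with high probability, the algorithm halts after $O(\log^2 n)$ iterations; the proof relies on the fact that the rounded cost-effectiveness takes only $O(\log n)$ distinct values and that within each such value the potential $\phi$ shrinks by a constant factor per legal iteration (Lemma \ref{dec}), together with the w.h.p.\ legality of iterations guaranteed by drawing $r_e$ from $\{1,\ldots,n^8\}$.

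Multiplying these two bounds yields a total round complexity of $O((D+\sqrt{n})\log^2 n)$ w.h.p. For the approximation ratio, I would directly invoke Lemma \ref{approx-tap}, whose proof combined Lemmas \ref{cost} and \ref{weighted} into the chain $w(A)\leq 8\sum_{t\in T}cost(t)\leq O(\log n)\,w(A^*)$. Since the algorithm terminates only when every tree edge is covered (Line \ref{is_covered}), $T\cup A$ is a valid 2-edge-connected spanning subgraph, and $w(A)=O(\log n)\,w(A^*)$ is exactly the claimed approximation ratio for weighted TAP.

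There is no real obstacle here; the theorem is a packaging of the preceding lemmas. The only minor thing to be careful about is that the approximation ratio is a \emph{deterministic} guarantee (it holds for any random coins, since $w(A)\leq 8\sum cost(t)$ is established pointwise in the randomness), whereas the round complexity bound is the only place where the w.h.p.\ qualifier enters. Making this distinction explicit in the write-up would complete the proof.
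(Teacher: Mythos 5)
Your proposal is correct and matches the paper's own argument, which likewise derives the theorem by combining Lemma \ref{time-iteration} (each iteration takes $O(D+\sqrt{n})$ rounds), Lemma \ref{time} (the number of iterations is $O(\log^2 n)$ w.h.p.), and Lemma \ref{approx-tap} (the $O(\log n)$ approximation ratio). Your added remark that the approximation guarantee holds pointwise in the randomness while only the round bound is w.h.p.\ is accurate and consistent with the paper's use of the word ``guarantees.''
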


Since our algorithm for 2-ECSS starts by building an MST in $O(D+\sqrt{n}\log^*{n})$ rounds, and then augments it using our weighted TAP algorithm, Claim \ref{Aug_k} shows the following.

\twoECSS* 

\section{$k$-ECSS} \label{k-full}

In this section, we present our algorithm for $k$-ECSS, proving the following.

\kECSS*

As explained in Section \ref{sec:frame}, to solve $k'$-ECSS we present an algorithm for $Aug_k$ for any $k \leq k'$. The input for $Aug_k$ is a $k$-edge-connected graph $G$, and a $(k-1)$-edge-connected spanning subgraph $H$, and the goal is to augment $H$ to be $k$-edge-connected.
We start by describing the general structure of our algorithm for $Aug_k$. Implementation details and time analysis appear in Section \ref{sec:k-time}. The approximation ratio analysis appears in Section \ref{k-approx}.

Throughout the algorithm we maintain a set $A$ of all the edges added to the augmentation. Initially, $A = \emptyset$. We assume that all the vertices know all the edges in $H$ and in $A$ during the algorithm, we later explain how maintaining this knowledge affects the time complexity. 
Our algorithm follows the framework described in Section \ref{sec:frame}. Recall that for an edge $e \not \in H$, we denote by $S_e$ the set of cuts of size $k-1$ of $H$ that $e$ covers, and we denote by $C_e$ all the cuts in $S_e$ that are still not covered by edges added to $A$.
The cost-effectiveness of an edge $e \not \in H$ is $\rho(e) = \frac{|C_e|}{w(e)}$.

The algorithm proceeds in iterations, where in the iteration $i$ the following is computed:\\

\noindent\fbox{%
    \parbox{\textwidth}{%
    \vspace{-0.3cm}
\begin{enumerate}%[rightmargin=0.2cm]
\item { Each edge $e \not \in H \cup A$ computes its rounded cost-effectiveness $\tilde{\rho}(e)$.} \label{k-cost-ef}
\item { Each edge $e \not \in H \cup A$ with maximum rounded cost-effectiveness is a \textit{candidate}}.  \label{k-max}
\item { Each candidate $e$ becomes an \emph{active} candidate with probability $p_i$}. \label{prob}
\item { The vertices compute an MST of the graph $G$, with the following weights. All the edges already in $A$ have weight $0$, all the active candidates have weight $1$, and all other edges have weight $2$. We add to $A$ all the active candidates that are in the MST computed.} \label{forest} 
\item { If all the cuts of size $k-1$ in $H$ are covered by $A$, the algorithm terminates, and the output is all the edges of $A$.} \label{k-term} 
\end{enumerate}
    \vspace{-0.4cm}
}} \\

We next show that in Line \ref{forest} we add to $A$ a maximal set of active candidates without creating cycles. This guarantees that $|A| \leq n-1$.  
Then, we describe how to choose the probability $p_i$ in Line \ref{prob}. We later explain how to implement the algorithm in $O(D \log^3{n} + n)$ rounds, and prove the approximation ratio.

\begin{claim} \label{no-cycle}
There are no cycles in $A$.
\end{claim}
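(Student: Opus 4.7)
The plan is to prove the claim by induction on iterations. At the start of the algorithm $A = \emptyset$ is trivially acyclic, so the base case holds. For the inductive step I will assume that $A$ is acyclic at the beginning of some iteration and show that the set $A' = A \cup \{\text{active candidates in } M\}$ produced by Line \ref{forest} is still acyclic.

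The crux of the argument is to show that every MST $M$ computed in Line \ref{forest} must contain all of $A$. The weight scheme is carefully chosen for this: edges of $A$ have the unique minimum weight $0$, the active candidates have weight $1$, and everything else has weight $2$. Suppose for contradiction that some edge $e \in A \setminus M$. Since $M$ is spanning, $M \cup \{e\}$ contains a unique cycle $C$, and by the standard cycle/MST optimality property every edge $f \in C \setminus \{e\}$ satisfies $w(f) \leq w(e) = 0$, hence $w(f) = 0$. But under our weighting the only weight-$0$ edges are the edges of $A$, so $C \setminus \{e\} \subseteq A$ and therefore $C \subseteq A$, contradicting the inductive hypothesis that $A$ is acyclic. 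Thus $A \subseteq M$.

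Once $A \subseteq M$ is established, the conclusion is immediate: $A' \subseteq M$ because we only add active candidates that lie in $M$, and $M$ is a spanning tree, hence acyclic. Therefore $A'$ contains no cycle, completing the induction. As a corollary, $|A| \leq n-1$ at every point of the algorithm, which is the structural bound used in the time analysis.

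I expect the only subtle point to be the step asserting that the weight-$0$ edges in a cycle of $M \cup \{e\}$ must belong to $A$; this relies on the exact weight assignment in Line \ref{forest} (in particular, that edges not in $A$ and not active candidates receive weight $2$, so they cannot provide a weight-$0$ swap for $e$). No randomness or additional machinery is needed—the proof is purely structural, using only MST optimality and the matroid exchange behavior induced by the three-level weighting.
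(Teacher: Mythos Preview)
Your proof is correct and follows essentially the same approach as the paper: induction on iterations, establishing $A \subseteq M$ via a cycle/swap contradiction that exploits the weight-$0$ assignment and the acyclicity inductive hypothesis, then concluding that the updated $A$ lies inside the tree $M$. The only cosmetic difference is that you invoke the MST cycle property (every edge on the fundamental cycle has weight at most $w(e)=0$, forcing the cycle into $A$), whereas the paper argues directly that the cycle must contain some $e' \notin A$ and swaps it out; these are two phrasings of the same exchange argument.
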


\begin{proof} 
At the beginning, this clearly holds. Assume it holds at the beginning of iteration $i$. Let $T_i$ be the MST computed in iteration $i$, and let $A_0$ be the edges of $A$ at the beginning of iteration $i$. 
Recall that the weight given to all the edges of $A_0$ in Line \ref{forest} is $0$. This shows that $T_i$ has all the edges of $A_0$, as otherwise there is an edge $e \in A_0 \setminus T_i$. However, adding $e$ to $T_i$ creates a cycle that has an edge $e' \not \in A_0$ (because there are no cycles in $A_0$), but then $(T_i \setminus \{e'\}) \cup \{e\}$ is a spanning tree with a smaller weight than $T_i$, which gives a contradiction. Hence, all the edges we added to $A$ in iteration $i$ are in the tree $T_i$, which shows that there are no cycles in $A$ at the end of the iteration. This completes the proof. 
\end{proof}

Let $\widetilde{A}_i$ be the set of edges in $A$ at the end of iteration $i$.

\begin{claim} \label{cycle}
If $e$ is an active candidate in iteration $i$, and $e \not \in \widetilde{A}_i$, then adding $e$ to $\widetilde{A}_i$ closes a cycle.
\end{claim}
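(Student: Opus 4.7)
The plan is to leverage the cycle property of the MST $T_i$ computed in iteration $i$, under the weighting scheme of Line \ref{forest} (weight $0$ on edges already in $A$, weight $1$ on active candidates, weight $2$ on everything else).

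First I would observe that $e \notin T_i$. Indeed, $e$ is an active candidate, so if $e$ were in $T_i$ it would be added to $A$ in Line \ref{forest} and would lie in $\widetilde{A}_i$, contradicting the hypothesis. Since $T_i$ is a spanning tree, there is therefore a unique cycle $C$ in $T_i \cup \{e\}$, consisting of $e$ together with the unique $T_i$-path $P$ between its endpoints.

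The key step is the standard MST cycle property: for any non-tree edge $f$, every edge on the cycle closed by $f$ in the MST has weight at most $w(f)$. Applying this to $e$ (which has weight $1$), every edge of $P$ has weight in $\{0,1\}$. Edges of weight $0$ are precisely those that were already in $A$ at the start of the iteration, and so lie in $\widetilde{A}_i$. Edges of weight $1$ are active candidates; moreover they lie in $T_i$ because they sit on $P \subseteq T_i$, and hence they are added to $A$ in Line \ref{forest} and also lie in $\widetilde{A}_i$.

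Combining these two cases, every edge of $P$ belongs to $\widetilde{A}_i$, so $P \cup \{e\}$ is a cycle contained in $\widetilde{A}_i \cup \{e\}$. Therefore adding $e$ to $\widetilde{A}_i$ closes this cycle. The only mild subtlety is the invocation of the MST cycle property, but this is a well-known consequence of the exchange argument already used in the proof of Claim \ref{no-cycle}: if some edge $f$ of $P$ satisfied $w(f) > w(e)$, then $(T_i \setminus \{f\}) \cup \{e\}$ would be a spanning tree of strictly smaller total weight, contradicting the minimality of $T_i$. No further obstacle arises.
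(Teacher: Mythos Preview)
Your proof is correct and follows essentially the same approach as the paper: both argue that $e\notin T_i$, consider the fundamental cycle of $e$ in $T_i$, and use the exchange argument to conclude that every other edge on that cycle has weight at most $1$ and hence lies in $\widetilde{A}_i$. The only cosmetic difference is that you name the MST cycle property before supplying the exchange argument, whereas the paper gives the exchange argument directly.
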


\begin{proof}
Let $T_i$ be the MST computed in iteration $i$. If $e \not \in \widetilde{A}_i$, then $e \not \in T_i$ by definition. Then, adding $e$ to $T_i$ creates a cycle. If all the edges in the cycle except $e$ are in $\widetilde{A}_i$, we are done. Note that all the edges of weight $0$ or $1$ in $T_i$ are in $\widetilde{A}_i$ according to Line \ref{forest} in the algorithm. Hence, an edge $e' \neq e$ in the cycle is not in $\widetilde{A}_i$ only if its weight is 2, but then $(T_i \setminus \{e'\}) \cup \{e\}$ is a spanning tree with a smaller weight than $T_i$, which contradicts the definition of $T_i$. This completes the proof.
\end{proof}

We next show that all the cuts that can be covered by active candidates at iteration $i$, are covered by the end of iteration $i$.

\begin{claim} \label{cuts}
If $e$ is an active candidate in iteration $i$, then all the cuts in $S_e$ are covered by the end of iteration $i$.
\end{claim}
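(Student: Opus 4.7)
The plan is to split on whether $e$ itself is added to $A$ in iteration $i$. If $e \in \widetilde{A}_i$, then by definition of $S_e$ the edge $e$ covers every $C \in S_e$, and the conclusion is immediate.

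The interesting case is when $e \notin \widetilde{A}_i$. Here I would invoke Claim~\ref{cycle} to obtain a cycle $P \subseteq \widetilde{A}_i \cup \{e\}$ through $e$, and then argue, for each $C \in S_e$ separately, that one of the other edges of $P$ already covers $C$. The structural input I need is that every cut of size $k-1$ in $H$ is a bipartition cut: there is a partition $V = V_1 \sqcup V_2$ such that $H \setminus C$ consists of precisely the two components on $V_1$ and $V_2$ and $C$ is exactly the set of $H$-edges between them. Granting this, a non-$H$ edge covers $C$ iff it crosses the $V_1|V_2$ bipartition. Now since $e$ covers $C$, the cycle $P$ uses $e$ to cross $V_1|V_2$, and by a standard parity-of-cycle-crossings argument $P$ must contain a second edge $e' \neq e$ that also crosses $V_1|V_2$. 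Finally, because candidates are drawn from $E \setminus (H \cup A)$ in Line~\ref{k-max}, every edge ever placed in $A$ lies outside $H$; in particular $e' \in \widetilde{A}_i$ satisfies $e' \notin H$, so $e'$ covers $C$ and $C$ is covered by the end of iteration $i$.

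The step I expect to require the most care is the bipartition observation, i.e.\ that a $(k-1)$-edge cut in a $(k-1)$-edge-connected $H$ separates $V$ into exactly two parts. I would justify it by a short double-counting argument: if $H \setminus C$ had three or more components, then each component would individually be separated from the rest by an $H$-cut of size at least $k-1$, whereas summing these cut sizes over the components counts each edge of $C$ exactly twice and so gives $2(k-1)$, contradicting the lower bound $\geq 3(k-1)$. Once this structural fact is in place, the parity argument of the previous paragraph is routine, and the only real content beyond Claim~\ref{cycle} is this bipartition observation together with the fact that $A$ never contains $H$-edges.
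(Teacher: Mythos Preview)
Your proof is correct and follows the same approach as the paper---split on whether $e \in \widetilde{A}_i$, and in the remaining case use Claim~\ref{cycle} together with a cycle-parity argument---though the paper's version is much terser and simply asserts that ``for each cut in $S_e$, there is also an edge in this cycle that covers the cut'' without spelling out the bipartition or parity facts you justify. One minor remark: your final step showing $e' \notin H$ is unnecessary, since by the paper's definition any edge crossing the $V_1|V_2$ bipartition covers $C$, regardless of whether it lies in $H$.
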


\begin{proof}
Let $e$ be an active candidate in iteration $i$. If $e$ is added to $\widetilde{A}_i$, the claim clearly holds. Otherwise, adding $e$ to $\widetilde{A}_i$ closes a cycle by Claim \ref{cycle}. For each cut in $S_e$, there is also an edge in this cycle that covers the cut, which means that all the cuts in $S_e$ are covered by $\widetilde{A}_i$.
\end{proof}

\subsubsection*{Choosing $p_i$}

We next describe how to choose the probability $p_i$. 
We group the iterations according to the value of the maximum rounded cost-effectiveness. For each value $\tilde{\rho}$ of maximum rounded cost-effectiveness, we divide the iterations into phases of $O(\log{n})$ iterations, as follows. 
At the first iteration, we define $p_i = \frac{1}{2^{\lceil \log{m} \rceil}}$, where $m$ is the number of edges in $G$, and every $M \log{n}$ iterations we increase $p_i$ by a factor of 2. The exact value of the constant $M$ will be determined during the analysis.
We continue until the maximum rounded cost-effectiveness decreases or until the first iteration where $p_i = 1$. Note that if $p_i=1$ then all the candidates are active candidates, hence by Claim \ref{cuts} the maximum rounded cost-effectiveness decreases by the end of this iteration. 
Every $M \log{n}$ consecutive iterations with the same value $p_i$ are a \emph{phase}. The algorithm takes $O(\log^3{n})$ iterations: each phase takes $O(\log{n})$ iterations, and we increase $p_i$ at most $O(\log{n})$ times for each value of rounded cost-effectiveness. In addition, there are $O(\log{n})$ possible values for the rounded cost-effectiveness because the weights are polynomial and there are at most $n \choose 2$ minimum cuts of size $k-1$ in $H$\footnote{This follows from the minimum cut algorithm of Karger \cite{DBLP:conf/soda/Karger93}, another proof is in \cite{dinitz1976structure}.}.
All the candidates can compute $p_i$ since they know the value of $\tilde{\rho}$ in each iteration. 
We next explain how to implement the rest of the algorithm. %add M eventually

\subsection{Implementation and time analysis} \label{sec:k-time}
We next explain how to implement each iteration. We assume during the algorithm that all the vertices know all the edges of $H$ and $A$. In our $k$-ECSS algorithm we build a $k$-edge-connected subgraph by applying the algorithm for $Aug_i$ for $i \leq k$. The input $H$ for $Aug_j$ is the set of edges added to the augmentation during the $j-1$ first times we run the algorithm for $Aug_i$. Hence, it is enough to show that at each run of the algorithm all the vertices learn about all the edges added to the augmentation. 

Given full  information about $H$ and $A$, each edge can compute how many cuts in $S_e$ are still not covered, which allows computing the cost-effectiveness in Line \ref{k-cost-ef}. Learning the maximum rounded cost-effectiveness in Line \ref{k-max} takes $O(D)$ rounds by a communicating over the BFS tree. To compute an MST in Line \ref{forest} we use the MST algorithm of Kutten and Peleg \cite{kutten1998fast} that takes $O(D + \sqrt{n} \log^*{n})$ rounds. Let $n_i$ be the number of edges added to the augmentation at iteration $i$. All the vertices learn these edges in $O(D + n_i)$ rounds by communication over the BFS tree.
Since all the vertices know the edges in $H$ and $A$ they can check if $H \cup A$ is $k$-edge-connected in Line \ref{k-term}, and detect the termination of the algorithm. 

The time complexity of one iteration is $O(D+ \sqrt{n} \log^*{n} + n_i)$. From Claim \ref{no-cycle}, the number of edges added to $A$ during the algorithm is at most $n-1$, which gives $\sum_i n_i < n$. Since the number of iterations is $O(\log^3{n})$, the time complexity of the algorithm is $O(D \log^3{n} + \sqrt{n} \log^3{n}\log^*{n} + \sum_i n_i) = O(D \log^3{n} + n)$, showing the following.

\begin{lemma} \label{time-i}
The time complexity of the algorithm is $O(D \log^3{n} + n)$, if all the vertices know the complete structure of $H$ at the beginning of the algorithm.
\end{lemma}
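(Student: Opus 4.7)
My plan is to prove Lemma~\ref{time-i} by multiplying a per-iteration cost bound by the total number of iterations, while separately amortizing the per-iteration ``broadcast'' cost of $O(n_i)$ using Claim~\ref{no-cycle}. The three ingredients are (i) a per-iteration cost bound, (ii) an iteration count bound, and (iii) an amortized bound on $\sum_i n_i$.

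First, I would account for the cost of a single iteration $i$. Since we assume that all vertices know $H$ and $A$ at the start of the iteration, Line~\ref{k-cost-ef} is purely local (each edge counts $|C_e|$ from locally stored information) and Line~\ref{prob} is local randomness. Line~\ref{k-max} is a global maximum over rounded cost-effectiveness values, executed via convergecast/broadcast on the BFS tree in $O(D)$ rounds. Line~\ref{forest} invokes the Kutten--Peleg MST algorithm, which costs $O(D+\sqrt{n}\log^* n)$ rounds, after which the $n_i$ newly added edges must be broadcast to all vertices so that the invariant ``all vertices know $H \cup A$'' is restored; by pipelining over the BFS tree this costs $O(D + n_i)$ rounds. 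The termination test in Line~\ref{k-term} is again local given the shared knowledge of $H \cup A$ and only needs an $O(D)$ round ``done'' signal. Summing, one iteration costs $O(D + \sqrt{n}\log^* n + n_i)$ rounds.

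Next, I would bound the number of iterations. As noted in the setup, the rounded cost-effectiveness $\tilde\rho$ takes at most $O(\log n)$ distinct values (because weights are polynomial and the number of minimum $(k-1)$-cuts in $H$ is at most $\binom{n}{2}$). For each such value, the probability $p_i$ starts at $\Theta(1/m)$ and is doubled every phase of $M \log n$ iterations until either $p_i = 1$ or $\tilde\rho$ decreases; since $p_i$ is doubled at most $O(\log m) = O(\log n)$ times per $\tilde\rho$ value, each $\tilde\rho$ value contributes at most $O(\log^2 n)$ iterations. Multiplying, the total iteration count is $O(\log^3 n)$.

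Finally, I would amortize the $n_i$ term. By Claim~\ref{no-cycle}, $A$ is always acyclic, so $\sum_i n_i = |A| \le n-1$. Combining,
\[
\sum_i \bigl(D + \sqrt{n}\log^* n + n_i\bigr)
\;=\; O(\log^3 n)\cdot O(D + \sqrt{n}\log^* n) + O(n)
\;=\; O(D\log^3 n + \sqrt{n}\log^3 n \log^* n + n),
\]
and since $\sqrt{n}\log^3 n \log^* n = O(n)$ for all sufficiently large $n$, this simplifies to $O(D\log^3 n + n)$, as claimed. The only mildly subtle point is observing that, because the MST algorithm already costs $\Omega(D+\sqrt n)$ per invocation, we cannot avoid the global MST cost each iteration; the saving grace is that the broadcast of the newly selected edges amortizes globally via Claim~\ref{no-cycle} rather than per-iteration, which is exactly what keeps the total bounded by $O(D\log^3 n + n)$ rather than $O((D+\sqrt n)\log^3 n)$ plus a potentially much larger telescoping term.
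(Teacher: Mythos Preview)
Your proof is correct and follows essentially the same approach as the paper: bound each iteration by $O(D+\sqrt{n}\log^* n + n_i)$, use $O(\log^3 n)$ iterations, and amortize $\sum_i n_i \le n-1$ via Claim~\ref{no-cycle} to obtain $O(D\log^3 n + \sqrt{n}\log^3 n\log^* n + n) = O(D\log^3 n + n)$. The per-line cost breakdown and the iteration-count argument match the paper's reasoning exactly.
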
 

%talk also about FT-connectivity, verifying connectivity

\subsection{Approximation ratio analysis} \label{k-approx}

We next show that the approximation ratio of the algorithm is $O(\log{n})$ in expectation.
The general idea is similar to the analysis in our 2-ECSS algorithm. We assign a cost to all the cuts of size $k-1$ in $H$, as follows. 
For each cut $C$, we define $cost(C) = {1}/{\tilde{\rho}(e)}$ where $e$ is an edge that covers $C$ in the first iteration in which it is covered in the algorithm, and $cost(C) = 0$ if $\tilde{\rho}(e) = \infty$. Note that all the edges added to $A$ in a specific iteration have the same rounded cost-effectiveness, hence $cost(C)$ is well-defined.

Our goal is to show that $w(A) \approx \sum_{C} cost(C) \leq O(\log{n})w(A^*)$, where $A^*$ is an optimal augmentation.
The proof of the right inequality follows the proof of Lemma \ref{weighted} with slight changes, and is based on the fact that we always add edges with maximum rounded cost-effectiveness. 

For the left inequality, we show that $E[w(A)] \leq 6 \cdot E[\sum_C cost(C)]$. The proof is based on the following lemma. 
We define the \emph{degree} of a cut $C$ in an iteration to be the number of candidates that can cover it in this iteration, we denote it by $deg(C)$.

\begin{lemma} \label{deg} 
At the beginning of the phase where $p=\frac{1}{2^{\ell}}$, the maximal degree of a cut is at most $2^{\ell}$ with probability at least $1-\frac{1}{n^c}$ for a constant $c$. 
\end{lemma}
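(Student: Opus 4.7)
The plan is to prove the lemma by downward induction on $\ell$, starting from $\ell = \lceil \log m \rceil$, which corresponds to the earliest phase (the one with the smallest $p$) in the sequence of phases associated with a given value of $\tilde{\rho}$. The base case is immediate: only the $m$ edges of $G$ can cover a cut, so every cut has degree at most $m \leq 2^{\lceil \log m \rceil}$.

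For the inductive step, I would assume the bound at the start of the phase with $p = \frac{1}{2^{\ell+1}}$ and argue that by the end of that phase---equivalently, at the start of the phase with $p = \frac{1}{2^\ell}$---the maximum degree has dropped to at most $2^\ell$ with high probability. The key structural observation is that within a single phase, $deg(C)$ is non-increasing for every cut $C$ that is still uncovered. Indeed, within a phase $\tilde{\rho}$ is fixed by definition; and as iterations proceed and $A$ grows, $|C_e|$, and hence $\rho(e)$ and $\tilde{\rho}(e)$, can only decrease for every edge $e$. Thus an edge can only leave the candidate set during a phase, never enter it.

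Given this monotonicity, it suffices to bound the probability that some fixed cut $C$ remains uncovered and satisfies $deg(C) > 2^\ell$ throughout all $M \log n$ iterations of the phase. In each such iteration, at least $2^\ell + 1$ candidates cover $C$, and each becomes active independently with probability $\frac{1}{2^{\ell+1}}$, so the probability that none becomes active is at most $(1 - 2^{-(\ell+1)})^{2^\ell + 1} \leq e^{-1/2}$. By Claim~\ref{cuts}, as soon as any covering candidate becomes active, $C$ is covered by the end of the iteration; hence the probability that $C$ survives all $M \log n$ iterations of the phase is at most $e^{-M \log n / 2} = n^{-M/2}$.

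Finally, I would take a union bound over the at most $\binom{n}{2}$ cuts of size $k-1$ in $H$ (using the bound on the number of minimum cuts referenced earlier in the paper), giving total failure probability at most $\binom{n}{2}\cdot n^{-M/2}$, which is at most $n^{-c}$ once $M$ is chosen as a sufficiently large function of the desired constant $c$. The main obstacle I foresee is cleanly justifying the monotonicity claim: one must verify that no edge whose rounded cost-effectiveness was previously strictly below $\tilde{\rho}$ can suddenly enter the candidate set in the middle of the phase, and this relies precisely on $\tilde{\rho}$ being fixed within a phase together with the monotonicity of $|C_e|$. Once that is in place, the remainder is a routine independence-across-iterations calculation followed by a union bound.
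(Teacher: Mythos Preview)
Your proposal is correct and follows essentially the same approach as the paper: the base case $\ell=\lceil\log m\rceil$ is trivial, and for each phase you bound the probability that a fixed cut $C$ survives all $M\log n$ iterations with degree exceeding the target by $(e^{-1/2})^{M\log n}$, then take a union bound over the at most $\binom{n}{2}$ minimum cuts. The one point where you are more careful than the paper is in explicitly justifying that $deg(C)$ is non-increasing within a phase (since $\tilde{\rho}$ is fixed and $|C_e|$ can only drop), which is needed to conclude that high degree at the end of the phase implies high degree throughout; the paper uses this implicitly.
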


\begin{proof}
For $\ell = \lceil \log{m} \rceil$, it is clear. Consider the phase where $p=\frac{1}{2^{\ell}}$, 
we would like to show that at the end of the phase the degree decreases to $2^{\ell-1}$. Consider a specific cut $C$. Assume that $deg(C) > 2^{\ell-1}$ at the end of the phase (otherwise, we are done). 
Note that if at least one of the candidates that covers $C$ becomes an active candidate, then $C$ is covered at the end of the iteration by Claim \ref{cuts}.
Therefore, at each iteration where $deg(C) > 2^{\ell-1}$, the probability that $C$ does not get covered is $(1-\frac{1}{2^{\ell}})^{deg(C)} < (1-\frac{1}{2^{\ell}})^{2^{\ell-1}} \leq \frac{1}{\sqrt{e}}$. Hence, the probability that $C$ does not get covered in $t$ iterations of the phase is at most $(\frac{1}{\sqrt{e}})^t$. If we choose $t = 2c' \ln{n}$, we get $(\frac{1}{\sqrt{e}})^t = (\frac{1}{e})^{t/2} = \frac{1}{n^{c'}}$. The number of uncovered cuts of size $k-1$ is at most $n^2$, since there are at most $n \choose 2$ minimum cuts in a graph.
Using union bound, the probability that there is an uncovered cut with $deg(C) > 2^{\ell-1}$ at the end of the phase it at most $\frac{n^{2}}{n^{c'}}$. If we choose $c' = c + 2$, the probability is $\frac{1}{n^c}$. Hence, the probability that the maximal degree decreases to $2^{\ell-1}$ by the end of the phase is at least $1-\frac{1}{n^c}$. This completes the proof. 
\end{proof}

Our goal now is to show that $E[w(A)] \leq 6 \cdot E[\sum_{C} cost(C)].$ Our proof shows even slightly stronger claim: $E[w(A')] \leq 6 \cdot E[\sum_{C} cost(C)],$ where $A'$ are all the \emph{active candidates} during the algorithm (note that $A \subseteq A'$). This would be useful in our 3-ECSS algorithm. 

\begin{lemma}
$E[w(A)] \leq 6 \cdot E[\sum_{C} cost(C)].$
\end{lemma}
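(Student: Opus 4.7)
The plan is to decompose both sides of the inequality iteration-by-iteration and to compare the expected payment to the expected gain within each iteration. Writing $w(A') = \sum_i X_i$, where $X_i$ is the total weight of edges that become active candidates at iteration $i$, and $\sum_C cost(C) = \sum_i Y_i$, where $Y_i$ is the total $cost(C)$ over cuts $C$ first covered at iteration $i$, I aim to prove $E[X_i \mid \mathcal{F}_i] \leq 4\, E[Y_i \mid \mathcal{F}_i]$ (under a high-probability good event), where $\mathcal{F}_i$ captures the algorithm's history up to the start of iteration $i$. Since $A \subseteq A'$, summing and taking expectations then gives the desired bound on $E[w(A)]$, with the slack between $4$ and $6$ reserved for the bad event.

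Fixing $i$ and conditioning on $\mathcal{F}_i$, which determines the candidate set $\mathcal{C}_i$, the common rounded value $\tilde{\rho}_i$, the probability $p_i$, and the degree $deg_i(C)$ of every uncovered cut, the rounding property $\tilde{\rho}(e) \leq 2\rho(e)$ yields $w(e) \leq 2|C_e|/\tilde{\rho}_i$ for every candidate, so
\[
E[X_i \mid \mathcal{F}_i] \;=\; p_i \sum_{e \in \mathcal{C}_i} w(e) \;\leq\; \frac{2 p_i}{\tilde{\rho}_i} \sum_{e \in \mathcal{C}_i} |C_e| \;=\; \frac{2 p_i}{\tilde{\rho}_i} \sum_C deg_i(C).
\]
By Claim \ref{cuts}, an uncovered cut is first covered at iteration $i$ precisely when at least one of its covering candidates becomes active, an event of probability $1 - (1-p_i)^{deg_i(C)}$, and such a cut contributes $cost(C) = 1/\tilde{\rho}_i$; hence
\[
E[Y_i \mid \mathcal{F}_i] \;=\; \frac{1}{\tilde{\rho}_i} \sum_C \bigl(1 - (1-p_i)^{deg_i(C)}\bigr).
\]
The comparison rests on the elementary inequality $1 - (1-p)^d \geq pd/2$ whenever $pd \leq 1$, which follows from $(1-p)^d \leq e^{-pd}$ together with $e^{-x} \leq 1 - x/2$ for $x \in [0,1]$; under this hypothesis $E[Y_i \mid \mathcal{F}_i] \geq \tfrac{p_i}{2\tilde{\rho}_i} \sum_C deg_i(C) \geq \tfrac{1}{4} E[X_i \mid \mathcal{F}_i]$.

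To apply this uniformly I let $\mathcal{E}$ be the event that at the start of every phase with $p = 1/2^{\ell}$ the maximum cut-degree is at most $2^{\ell}$; by Lemma \ref{deg} and a union bound over the $O(\log^2 n)$ phases, $Pr[\mathcal{E}] \geq 1 - 1/n^{c''}$ for any desired constant $c''$. The key structural observation is that within a phase the threshold $\tilde{\rho}_i$ is fixed, while each $\tilde{\rho}(e)$ of a surviving edge can only decrease as cuts are covered, so the candidate set and hence every $deg_i(C)$ are non-increasing throughout the phase; this ensures $p_i \cdot deg_i(C) \leq 1$ at every iteration under $\mathcal{E}$, and summing over iterations gives $E[w(A') \mid \mathcal{E}] \leq 4\, E[\sum_C cost(C) \mid \mathcal{E}]$. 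The main obstacle will be controlling the complementary event $\bar{\mathcal{E}}$, where $w(A')$ could in principle be much larger; here I plan to use that $w(A')$ is deterministically bounded by the total edge weight $\mathrm{poly}(n)$, while whenever a positive-weight edge enters $A$ one has $E[\sum_C cost(C)] \geq 1/\mathrm{poly}(n)$ since $\tilde{\rho}$ ranges over only $\mathrm{poly}(n)$-many values. Choosing $c''$ sufficiently large makes $E[w(A') \mid \bar{\mathcal{E}}] \cdot Pr[\bar{\mathcal{E}}] \leq 2\, E[\sum_C cost(C)]$, which combined with the good-event bound yields $E[w(A)] \leq E[w(A')] \leq 6\, E[\sum_C cost(C)]$.
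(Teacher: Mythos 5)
Your proof is correct and takes essentially the same route as the paper's: a per-iteration comparison of the expected weight of the active candidates with the expected number of newly covered cuts, where Lemma \ref{deg} guarantees $p_i \cdot deg_i(C)\le 1$ on the good event so that $p_i\, deg_i(C)\le 2\bigl(1-(1-p_i)^{deg_i(C)}\bigr)$ --- the same estimate the paper phrases as $E[t(C)\mid t(C)>0,\ j\ is\ good]\le 2$. The only divergence is the treatment of the failure event, which the paper absorbs iteration-by-iteration using $E[|CUT_j|]\ge 1/n^2$ while you absorb it globally against a deterministic $1/\mathrm{poly}(n)$ lower bound on $\sum_C cost(C)$; this also works, provided you note that the per-iteration degree condition is $\mathcal{F}_i$-measurable (so the conditional-expectation step is legitimate despite your good event being defined over the whole run) and that when every cut is coverable by weight-$0$ edges one has $w(A')=0$ identically, so the $1/\mathrm{poly}(n)$ lower bound is only invoked in the complementary case, where it holds deterministically.
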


\begin{proof}
Let $A_j$ be the active candidates in iteration $j$, and let $CUT_j$ be the cuts that were first covered in iteration $j$. We show that $E[w(A_j)] \leq 6 \cdot E[\sum_{C \in CUT_j} cost(C)]$. Let $c_j$ be the maximal rounded cost-effectiveness in iteration $j$, then for all $C \in CUT_j$ it holds that $cost(C) = 1/c_j$. Hence, $\sum_{C \in CUT_j} cost(C) = \frac{1}{c_j} |CUT_j|$. On the other hand, 
$$w(A_j) = \sum_{e \in A_j} w(e) = \sum_{e \in A_j} |C_e| \cdot \frac{w(e)}{|C_e|} \leq \frac{2}{c_j} \sum_{e \in A_j} |C_e|,$$
where the last inequality follows from the fact that $\frac{w(e)}{|C_e|} = \frac{1}{\rho(e)} \leq \frac{2}{\tilde{\rho}(e)} = \frac{2}{c_j}$.
Hence, we need to show that $|CUT_j| \approx \sum_{e \in A_j} |C_e|$. Note that if each cut is covered exactly by one edge, then $|CUT_j| = \sum_{e \in A_j} |C_e|$. However, this is not necessarily true since several edges may cover the same cut in iteration $j$. We will show that $|CUT_j| \approx \sum_{e \in A_j} |C_e|$ in expectation. To do so, we write $\sum_{e \in A_j} |C_e| = \sum_C t(C)$ where we sum over all the cuts $C$ that are uncovered at the beginning of iteration $j$, and $t(C)$ is the number of active candidates that cover $C$ in iteration $j$.

We next estimate $E\big[\sum_{C} t(C) \big]$, the rest of the analysis is similar to the proof of Lemma 3.6 in \cite{jia2002efficient}. The main difference is that in \cite{jia2002efficient}, a condition similar to the condition described in Lemma \ref{deg} holds always, and in our case it holds w.h.p.

Let $p$ be the probability that a candidate edge is an active candidate in iteration $j$. If the maximal degree of an uncovered cut in iteration $j$ is at most $\frac{1}{p}$, we say that the iteration is \emph{good}. According to Lemma \ref{deg} it happens w.h.p.  
It holds that $$E[t(C)] = Pr[j\ is\ good] \cdot E[t(C)|\ j\ is\ good]+Pr[j\ is\ not\ good] \cdot E[t(C)|\ j\ is\ not\ good].$$
We start by analyzing $E[t(C)|\ j\ is\ good]$.
It holds that $$E[t(C)|\ j\ is\ good]=Pr[t(C)>0 | j\ is\ good] \cdot E[t(C)|t(C)>0, j\ is\ good] .$$ For an uncovered cut $C$, let $W$ be the set of candidates that can cover $C$ in iteration $j$. Each of them is an active candidate in iteration $j$ with probability $p$. If $j$ is good, then $deg(C) = |W| \leq \frac{1}{p}$, or equivalently $p \leq \frac{1}{|W|}$. 
Thus,
\begin{equation}
\begin{split}
E[t(C)|t(C)>0, j\ is\ good]=\sum_{e \in W} Pr[e\ is\ chosen|t(C) > 0,j\ is\ good] \\ = \sum_{e \in W} \frac{Pr[e\ is\ chosen, j\ is\ good]}{Pr[t(C) > 0, j\ is\ good]} = \frac{|W| \cdot p \cdot Pr[j\ is\ good]}{Pr[t(C) > 0 |j\ is\ good] \cdot Pr[j\ is\ good]}. 
\end{split}
\end{equation}

The second equality follows from the fact that if $e$ is chosen then necessarily $t(C)>0$. The third equality follows from the fact that the events ``$e\ is\ chosen$'' and ``$j\ is\ good$'' are independent.
We will show that $Pr[t(C) > 0 | j\ is\ good] \geq \frac{|W|p}{2}.$
It holds that $Pr[t(C) > 0] = 1-(1-p)^{|W|}$. Since $(1-p)^{|W|} \leq 1 - |W|p + {|W| \choose 2}p^2 \leq 1 - |W|p +\frac{(|W|p)^2}{2}$, we get $Pr[t(C) > 0 ] \geq |W|p - |W|p \cdot \frac{|W|p}{2}$. If $j$ is good, then $|W|p \leq 1$, which gives $Pr[t(C)>0 | j\ is\ good] \geq \frac{|W|p}{2}.$ 
This gives,
$$E[t(C)|t(C)>0, j\ is\ good] \leq \frac{|W|p}{|W|p/2} = 2.$$

This shows that $ E[t(C)|\ j\ is\ good] \leq 2 \cdot Pr[t(C)>0 | j\ is\ good].$
Summing over all the cuts $C$ that are not covered at the beginning of iteration $j$, we get
$$\sum_C E[t(C)|\ j\ is\ good] \leq 2 \cdot \sum_C Pr[t(C)>0 | j\ is\ good] = 2 \cdot E[|CUT_j| | j\ is\ good].$$

This gives,
$$\sum_C Pr[j\ is\ good] \cdot E[t(C)| j\ is\ good] \leq 2 \cdot Pr[j\ is\ good] \cdot E[|CUT_j| | j\ is\ good] \leq 2E[|CUT_j|].$$

By Lemma \ref{deg}, $j$ is not good with probability at most $\frac{1}{n^c}.$ Note that $t(C) \leq n^2$ always and there are at most $n^2$ cuts of size $k-1$, which gives
$$\sum_C Pr[j\ is\ not\ good] \cdot E[t(C) | j\ is\ not\ good] \leq \frac{n^4}{n^c}.$$
In addition, in each iteration there is at least one candidate $e$, that can cover at least one cut $C$. The cut $C$ is covered in iteration $j$ with probability at least $p \geq \frac{1}{m}$. This shows $E[|CUT_j|] = \sum_C Pr[t(C) > 0] \geq \frac{1}{n^2}.$ If we choose $c > 6$, we get
$$\sum_C Pr[j\ is\ not\ good] \cdot E[t(C) | j\ is\ not\ good]  \leq \frac{1}{n^2} \leq E[|CUT_j|].$$
To conclude, $E[\sum_C t(C)]= \sum_C E[t(C)]$ is equal to
$$\sum_C Pr[j\ is\ good] \cdot E[t(C)| j\ is\ good] + \sum_C Pr[j\ is\ not\ good] \cdot E[t(C) | j\ is\ not\ good] \leq 3E[|CUT_j|].$$
This shows $$E[w(A_j)] = \frac{2}{c_j} E[\sum_C t(C)] \leq \frac{6}{c_j} E[|CUT_j|] = 6 \cdot E[\sum_{C \in CUT_j} cost(C) ].$$ Summing over all iterations $j$ completes the proof. 
\end{proof}

To conclude, we get that $E[w(A)] \leq 6 \cdot E[\sum_{C} cost(C)] \leq O(\log{n})w(A^*)$, which shows an $O(\log{n})$ approximation in expectation. The time complexity of our algorithm for $Aug_i$ is $O(D \log^3{n} + n)$ by Lemma \ref{time-i}. To get a solution for $k$-ECSS we start with an empty subgraph $H$, and in iteration $i$ augment its connectivity from $i-1$ to $i$ using our algorithm for $Aug_i$. This guarantees that all the vertices learn the complete structure of $H$. 
Since we augment the connectivity $k$ times, by Claim \ref{Aug_k} we get the following.

\kECSS* 

%whp instead of expectation?
%can using set-multicover help?

\section{Unweighted 3-ECSS} \label{3-full}

In this section, we show how to improve the time complexity of our $k$-ECSS algorithm 
to $O(D \log^3{n})$ rounds for the special case of \emph{unweighted} 3-ECSS. 
The bottleneck of our $k$-ECSS algorithm is the cost-effectiveness computation. 
In the special case of \emph{unweighted} 3-ECSS we show how to compute it in $O(D)$ rounds. The main tool in our algorithm is the beautiful \emph{cycle space sampling} technique introduced by Pritchard and Thurimella \cite{pritchard2011fast}.

We say that $\{e,f\}$ is a \emph{cut pair} in a 2-edge-connected graph $G$ if removing $e$ and $f$ from $G$ disconnects it.
In \cite{pritchard2011fast}, it is shown how to assign each edge of a graph $G$, a short label $\phi(e)$ that allows to detect all the cut pairs of the graph, as follows. 
Two edges $e$ and $f$ are a cut pair if and only if $\phi(e) = \phi(f)$. The algorithm for computing the labels takes $O(D)$ rounds for a graph with diameter $D$. 

Our algorithm for unweighted 3-ECSS starts by computing a 2-edge-connected subgraph $H$, and then augments its connectivity. To build $H$, we use an $O(D)$-round 2-approximation algorithm for \emph{unweighted} 2-ECSS \cite{censor2017fast}. This algorithm starts by building a BFS tree $T$, and then augments its connectivity to 2. In particular, the diameter of $H$ is $O(D)$. For an edge $e \not \in T$, we define $S^1_e$ to be all the \emph{tree edges} in the unique tree path covered by $e$. We say that $e \not \in H$ covers a cut pair $\{f,f'\}$ if $\{f,f'\}$ is not a cut pair in $H \cup \{e\}$. 
For an edge $e \not \in H$, we define $S^2_e$ to be all the \emph{cut pairs} covered by $e$.
During the algorithm, we maintain a set of edges $A$ that includes all the edges added to the augmentation, initially $A = \emptyset$. $C_e$ are all the cut pairs in $S^2_e$ that are not covered by $A$.  
The cost-effectiveness of an edge $e \in H$ is defined as in $Aug_i$. However, since all the edges have unit-weight, $\rho(e) = |C_e|$.   

We next describe the algorithm for augmenting the connectivity of $H$ from 2 to 3.
The general structure of the algorithm is similar to our algorithm for $Aug_i$ in Section \ref{k-full},  
with slight changes. Now we just add all the active candidates to the augmentation without computing an MST, as follows. \\[-7pt]

\noindent\fbox{%
    \parbox{\textwidth}{%
    \vspace{-0.3cm}
\begin{enumerate}%[rightmargin=0.2cm]
\item { Each edge $e \not \in H \cup A$ computes its rounded cost-effectiveness $\tilde{\rho}(e)$.} \label{3-cost-ef}
\item { Each edge $e \not \in H \cup A$ with maximum rounded cost-effectiveness is a \textit{candidate}}.  \label{3-max} 
\item { Each candidate $e$ is added to the augmentation with probability $p_i$}. \label{3-prob}
\item { If all the cut pairs in $H$ are covered by $A$, the algorithm terminates, and the output is all the edges of $A$.} \label{3-term} 
\end{enumerate}
    \vspace{-0.4cm}
}} \\

The choice of the probability $p_i$ and the approximation ratio analysis follow our algorithm for $Aug_i$ and its analysis. 
Hence, our goal is to explain how to implement the algorithm efficiently. The computation in line \ref{3-max} takes $O(D)$ rounds as before, and the computation in line \ref{3-prob} is completely local. We next show how to implement lines \ref{3-cost-ef} and \ref{3-term} efficiently using the cycle space sampling technique.

\subsection{Overview of the cycle space sampling technique} \label{sec:cyc-sam}

The cycle space sampling technique allows to detect small cuts in a graph using connections between the cycles and cuts in a graph. In this section, we give a high-level overview of technique, for full details and proofs see \cite{pritchard2011fast}.

We say that a set of edges $\phi \subseteq E$ is a \emph{binary circulation} if all the vertices have even degree in $\phi$. For example, a cycle is a binary circulation. 
The \emph{cycle space} of a graph is the set of all the binary circulations. 
A binary circulation can be seen as a binary vector of length $|E|$, where each edge has an entry in the vector which is equal to 1 if and only if $e \in \phi$. Proposition 2.1. in \cite{pritchard2011fast} shows that the cycle space is a vector subspace of $\mathbb{Z}_2^E$, with the $\Moplus$ operation (where $\Moplus$ stands for addition of vectors modulo 2). Note that $\phi_1 \Moplus \phi_2$ is the symmetric difference of the sets $\phi_1,\phi_2$.
For $S \subseteq V $, denote by $\delta(S)$ the set of edges with exactly one endpoint in $S$. An \emph{induced edge cut} is a set of the form $\delta(S)$ for some $S$. In particular, $\{e,f\}$ is a cut pair in a 2-edge-connected graph if and only $\{e,f\}$ is an induced edge cut.
The following main lemma is essential for identifying the cuts in a graph (see Propositions 2.2 and 2.5 in \cite{pritchard2011fast}).

\begin{lemma} \label{cycle-sampling}
Let $\phi$ be a uniformly random binary circulation and $F \subseteq E$. Then
$$Pr[|\phi \cap F|\ is\ even] = \left\{
                \begin{array}{ll}
                  1,\ if\ F\ is\ an\ induced\ edge\ cut\\
                  1/2,\ otherwise
                \end{array}
              \right. $$ 
\end{lemma}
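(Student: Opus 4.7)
The plan is to prove the lemma by exploiting the fact that the cycle space and the cut space are orthogonal complements in $\mathbb{Z}_2^E$ under the mod~2 inner product, and then apply a standard linear-algebra argument over $\mathbb{Z}_2$.

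For $F, \phi \subseteq E$, let $L_F(\phi) = |\phi \cap F| \bmod 2$, viewed as a linear functional on the cycle space $\mathcal{C} \subseteq \mathbb{Z}_2^E$ (linearity over $\mathbb{Z}_2$ follows from the identity $|(\phi_1 \oplus \phi_2) \cap F| \equiv |\phi_1 \cap F| + |\phi_2 \cap F| \pmod{2}$, which is a direct consequence of inclusion-exclusion modulo~$2$). The first step is to recall the classical fact, which I would cite or briefly reprove, that a set $F \subseteq E$ is an induced edge cut $\delta(S)$ if and only if $|F \cap C|$ is even for every cycle $C$ (equivalently for every binary circulation, since the cycle space is spanned by cycles). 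One direction is easy: any cycle enters and leaves $S$ the same number of times, so it crosses $\delta(S)$ an even number of times. The other direction is the orthogonality statement that $\mathcal{C}$ and the cut space are orthogonal complements in $\mathbb{Z}_2^E$, and $\dim \mathcal{C} + \dim(\mathrm{cut\ space}) = |E|$, so anything orthogonal to all of $\mathcal{C}$ must lie in the cut space.

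\emph{Case 1: $F$ is an induced edge cut.} Then $L_F(\phi) = 0$ for every $\phi \in \mathcal{C}$, by the orthogonality above. Hence $|\phi \cap F|$ is even for every binary circulation, so $\Pr[|\phi \cap F|\text{ is even}] = 1$.

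\emph{Case 2: $F$ is not an induced edge cut.} Then by the same characterization there exists some $\phi_0 \in \mathcal{C}$ with $L_F(\phi_0) = 1$, so $L_F$ is a nonzero linear functional on the finite $\mathbb{Z}_2$-vector space $\mathcal{C}$. Its kernel $K = \{\phi \in \mathcal{C} : |\phi \cap F| \text{ even}\}$ is therefore a subspace of index~$2$, i.e., $|K| = |\mathcal{C}|/2$. Since $\phi$ is drawn uniformly from $\mathcal{C}$, this yields $\Pr[|\phi \cap F|\text{ is even}] = |K|/|\mathcal{C}| = 1/2$.

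The main obstacle is really just the orthogonality characterization of cuts and cycles; once that is in hand, the linear-algebra dichotomy (a linear map to $\mathbb{Z}_2$ is either identically zero or balanced) finishes both cases immediately. Since this orthogonality is standard and is moreover implicit in the preceding discussion of the cycle space, I would either invoke it directly by citing \cite{pritchard2011fast} or include the short parity-crossing argument sketched above.
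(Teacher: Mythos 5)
Your proof is correct. Note that the paper itself does not prove this lemma---it imports it directly from Propositions 2.2 and 2.5 of Pritchard and Thurimella \cite{pritchard2011fast}---and your argument (cut space and cycle space are orthogonal complements in $\mathbb{Z}_2^E$, so $L_F$ is either identically zero on $\mathcal{C}$ or a balanced surjection onto $\mathbb{Z}_2$) is essentially the standard proof given in that reference.
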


Hence, sampling a uniformly random binary circulation allows to detect if a set of edges is a cut with probability $1/2$. We next explain how to sample a binary circulation. Let $T$ be a spanning tree of a graph $G$ (in our algorithm we choose $T$ to be a BFS tree), and let $e$ be a non-tree edge, we denote by $Cyc_e$ the unique cycle in $T \cup \{e\}$. This is the \emph{fundamental cycle} of $e$. Note that $Cyc_e$ is composed of $S^1_e \cup \{e\}$ where $S^1_e$ are the edges in the unique tree path that $e$ covers. Proposition 2.3. in \cite{pritchard2011fast} shows the following.

\begin{claim} \label{basis}
The fundamental cycles of any spanning tree $T$ form a basis of the cycle space. 
\end{claim}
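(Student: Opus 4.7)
The plan is to verify the two defining properties of a basis over $\mathbb{Z}_2^E$: linear independence and spanning. First I would observe that each fundamental cycle $Cyc_e$ lies in the cycle space, since a simple cycle is a binary circulation (every vertex has degree $0$ or $2$ in it). The key structural fact I will exploit throughout is that $Cyc_e$ contains exactly one non-tree edge, namely $e$ itself, while all its other edges lie in $T$.

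For linear independence, I would consider an arbitrary combination $\Moplus_{e \in S} Cyc_e$ for some set $S$ of non-tree edges, and track its restriction to the non-tree edges of $G$. Because a non-tree edge $e'$ appears in $Cyc_e$ if and only if $e' = e$, the restriction of $\Moplus_{e \in S} Cyc_e$ to the non-tree edges is exactly $S$ itself. Hence the combination can equal the zero vector only when $S = \emptyset$, giving linear independence.

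For spanning, take an arbitrary binary circulation $\phi$ and let $N(\phi)$ be the set of non-tree edges it contains. Define $\phi' = \phi \Moplus \bigl(\Moplus_{e \in N(\phi)} Cyc_e \bigr)$. Since the cycle space is a vector subspace of $\mathbb{Z}_2^E$ under $\Moplus$ (as noted just before the claim), $\phi'$ is again a binary circulation. By the same single-non-tree-edge accounting used above, $\phi'$ contains no non-tree edges, so $\phi' \subseteq T$. It then suffices to show that any binary circulation supported on a tree is empty, after which $\phi = \Moplus_{e \in N(\phi)} Cyc_e$ exhibits $\phi$ as a combination of fundamental cycles.

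The main obstacle is this last step, that a binary circulation supported on a forest must be empty. I would handle it by peeling: if $\phi' \neq \emptyset$, then since every vertex has even degree in $\phi'$, the subgraph $\phi'$ has minimum degree at least $2$ on its support, so it contains a cycle; but $\phi' \subseteq T$ and $T$ is acyclic, a contradiction. Equivalently, one can induct on the size of the tree by repeatedly removing a leaf of $T$, whose incident tree edge (if present in $\phi'$) would give that leaf odd degree. Combining independence with spanning then yields that the fundamental cycles form a basis of the cycle space, as claimed.
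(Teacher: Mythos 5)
Your proof is correct and complete: the single-non-tree-edge accounting gives linear independence, and subtracting off the fundamental cycles of the non-tree edges of an arbitrary circulation leaves a circulation supported on the acyclic tree $T$, which must be empty. The paper does not prove this claim itself but cites Proposition 2.3 of Pritchard and Thurimella, and your argument is exactly the standard proof of that proposition.
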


This gives a simple way to sample a binary circulation (see Proposition 2.6 and Theorem 2.7 in \cite{pritchard2011fast}). We choose a random subset $E'$ of non-tree edges, by adding each non-tree edge to $E'$ with independent probability $1/2$. Then, we define $\phi = \Moplus_{e \in E'} Cyc_{e}$. Since every non-tree $e$ edge is included only in the fundamental cycle $Cyc_e$, $\phi$ is composed of all the edges of $E'$ and all the tree edges that are included in odd number of the cycles $\{Cyc_e\}_{e \in E'}$. 

Lemma \ref{cycle-sampling} shows that sampling a random binary circulation allows to detect if a set of edges is a cut with probability $1/2$. To increase the success probability, we can sample many independent random circulations. A \emph{random b-bit circulation} is composed of $b$ mutually independent uniformly random binary circulations. Formally, let $\mathbb{Z}_2^b$ be the set of $b$-bit binary strings. For $\phi: E \rightarrow \mathbb{Z}_2^b$, let $\phi_i(e)$ denote the $i$th bit of $\phi(e)$. 

\begin{definition} 
$\phi: E \rightarrow \mathbb{Z}_2^b$ is a $b$-bit circulation if for each $1 \leq i \leq b$, $\{e | \phi_i(e) = 1\}$ is a binary circulation.
\end{definition} 

Corollary 2.9. in \cite{pritchard2011fast} follows from Lemma \ref{cycle-sampling}, and shows the following.

\begin{corollary} \label{cor-sampling}
Let $\phi$ be a random $b$-bit circulation and $F \subseteq E$. Then
$$Pr[\Moplus_{e \in F} \phi(e) = 0] = \left\{
                \begin{array}{ll}
                  1,\ if\ F\ is\ an\ induced\ edge\ cut\\
                  2^{-b},\ otherwise
                \end{array}
              \right. $$ 
Where $0$ is the all-zero vector. 
\end{corollary}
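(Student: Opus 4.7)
The plan is to reduce the statement to a bit-by-bit application of Lemma \ref{cycle-sampling}, exploiting the fact that a $b$-bit circulation is by definition an independent tuple of $b$ uniformly random binary circulations. First I would unpack the event $\Moplus_{e \in F} \phi(e) = 0$ coordinatewise: since XOR of vectors in $\mathbb{Z}_2^b$ acts coordinatewise, the event holds if and only if $\Moplus_{e \in F} \phi_i(e) = 0$ holds simultaneously for every bit $i \in \{1,\dots,b\}$. Furthermore, $\Moplus_{e \in F} \phi_i(e) = 0$ in $\mathbb{Z}_2$ is exactly the statement that $|F \cap B_i|$ is even, where $B_i = \{e : \phi_i(e) = 1\}$ is the binary circulation encoded by bit $i$.

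Next I would invoke Lemma \ref{cycle-sampling} separately for each $i$. By the definition of a $b$-bit circulation, each $B_i$ is a uniformly random binary circulation, so Lemma \ref{cycle-sampling} gives
\[
\Pr\bigl[\,|F \cap B_i|\ \text{is even}\,\bigr] = \begin{cases} 1, & \text{if } F \text{ is an induced edge cut},\\ 1/2, & \text{otherwise}.\end{cases}
\]
Here I am using that an induced edge cut has even intersection with every binary circulation (half of Lemma \ref{cycle-sampling}), and a non-cut set behaves like a fair coin flip with respect to a uniformly random binary circulation.

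Finally, I would combine the $b$ events using independence. Since the $b$ underlying binary circulations are mutually independent, the events $\{\Moplus_{e \in F} \phi_i(e) = 0\}$ for $i=1,\dots,b$ are mutually independent, and their joint probability is the product of the individual probabilities. In the induced edge cut case this product is $1^b = 1$; otherwise it is $(1/2)^b = 2^{-b}$, which is exactly the claimed formula.

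The proof is essentially bookkeeping once Lemma \ref{cycle-sampling} is in hand; no step looks like a real obstacle. The only mild subtlety is ensuring one really does get independence across the $b$ coordinates: this must come from the definition of a random $b$-bit circulation as $b$ \emph{mutually independent} random binary circulations, rather than, say, sampling a single circulation and reading off $b$ correlated bits. Assuming that definitional point is taken as given (as it is in the paper), the argument above is complete.
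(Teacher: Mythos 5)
Your proof is correct: the coordinatewise decomposition, the application of Lemma \ref{cycle-sampling} to each bit, and the use of the mutual independence built into the definition of a random $b$-bit circulation are exactly the intended argument. The paper itself does not spell this out (it simply cites Corollary 2.9 of Pritchard and Thurimella), so your write-up supplies the standard proof that the citation points to, with no gaps.
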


If we take $b = O(\log{n})$ for a sufficient large constant and focus on cut pairs, Corollary \ref{cor-sampling} shows that the following property holds w.h.p. 

%\begin{restatable}{property}{prop} \label{prop}
\begin{property} \label{prop}
For all the edges, $\phi(e) = \phi(f)$ if and only if $\{e,f\}$ is a cut pair.
\end{property}
%\end{restatable}

%\begin{restatable}{lemma}{whp} \label{whp}
\begin{lemma} \label{whp}
Property \ref{prop} holds w.h.p.
\end{lemma}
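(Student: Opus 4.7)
The plan is to apply Corollary \ref{cor-sampling} to every pair of edges $\{e,f\}$ in the graph and then union-bound over the failure events. Observe that for any pair $\{e,f\}$, we have $\phi(e) = \phi(f)$ exactly when $\phi(e) \Moplus \phi(f) = 0$, i.e., when the two-element set $F = \{e,f\}$ satisfies $\Moplus_{g \in F} \phi(g) = 0$. So the corollary applies directly with $F = \{e,f\}$.

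First I would dispose of the ``if'' direction. If $\{e,f\}$ is a cut pair, then by definition $\{e,f\}$ is an induced edge cut of the 2-edge-connected graph (removing these two edges disconnects the graph, and the set of edges crossing the resulting partition is exactly $\{e,f\}$ since a 2-edge-connected graph has minimum cut 2). By the first case of Corollary \ref{cor-sampling}, $\Moplus_{g \in F} \phi(g) = 0$ with probability $1$, so $\phi(e) = \phi(f)$ deterministically. Hence this direction contributes nothing to the failure probability.

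For the ``only if'' direction, I would bound the probability that some non-cut-pair $\{e,f\}$ has $\phi(e) = \phi(f)$. By the second case of Corollary \ref{cor-sampling}, for each such pair this event has probability exactly $2^{-b}$. The number of pairs of edges is at most $\binom{m}{2} \leq m^2 \leq n^4$. Taking a union bound over all non-cut-pairs, the probability that at least one such bad collision occurs is at most $n^4 \cdot 2^{-b}$. Choosing the constant inside $b = O(\log n)$ large enough, say $b \geq (c+4)\log_2 n$, makes this bound at most $n^{-c}$, which is the desired high-probability guarantee.

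The argument is essentially a direct union bound on top of Corollary \ref{cor-sampling}, so I do not anticipate a real obstacle; the only mildly delicate point is verifying that a cut pair in a 2-edge-connected graph is indeed an induced edge cut (so that the deterministic ``probability $1$'' case of the corollary applies), and choosing the constant in $b = \Theta(\log n)$ large enough to absorb the $n^4$ factor from the union bound with room to spare for any desired polynomial concentration.
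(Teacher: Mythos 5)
Your proof is correct and follows essentially the same route the paper intends: apply Corollary \ref{cor-sampling} with $F=\{e,f\}$ (using the paper's observation that in a 2-edge-connected graph a cut pair is exactly an induced edge cut of size two, so the ``if'' direction holds deterministically) and union-bound the $2^{-b}$ failure probability over the at most $\binom{m}{2}$ non-cut-pairs, choosing the constant in $b=O(\log n)$ large enough. No gaps.
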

%\end{restatable}

%\prop*

%\whp*

Note that the error probability is only one-sided. If $\{e,f\}$ is a cut pair, then necessarily $\phi(e) = \phi(f)$. However, there is small probability that $\phi(e) = \phi(f)$ and $\{e,f\}$ is not a cut pair.

To detect cut pairs, we would like to sample a random $O(\log{n})$-bit circulation and let every edge $e$ learn the $O(\log{n})$-bit string $\phi(e)$. Since a random $O(\log{n})$-bit circulation is composed of $O(\log{n})$ independent random binary circulations, the following approach produces a random $O(\log{n})$-bit circulation.
Each non-tree edge $e$ chooses a uniformly independent $O(\log{n})$-bit string $\phi(e)$. This defines $\phi$ for all the non-tree edges, and can be computed locally by the non-tree edges. 
For a tree edge $t$, the label $\phi(t)$ is defined as follows: $\phi(t) = \Moplus_{t \in Cyc_e} \phi(e) = \Moplus_{t \in S^1_e} \phi(e)$. See Figure \ref{3-ECSS_pic} for an example.

\setlength{\intextsep}{0pt}
\begin{figure}[h]
\centering
\setlength{\abovecaptionskip}{-2pt}
\setlength{\belowcaptionskip}{6pt}
\includegraphics[scale=0.42]{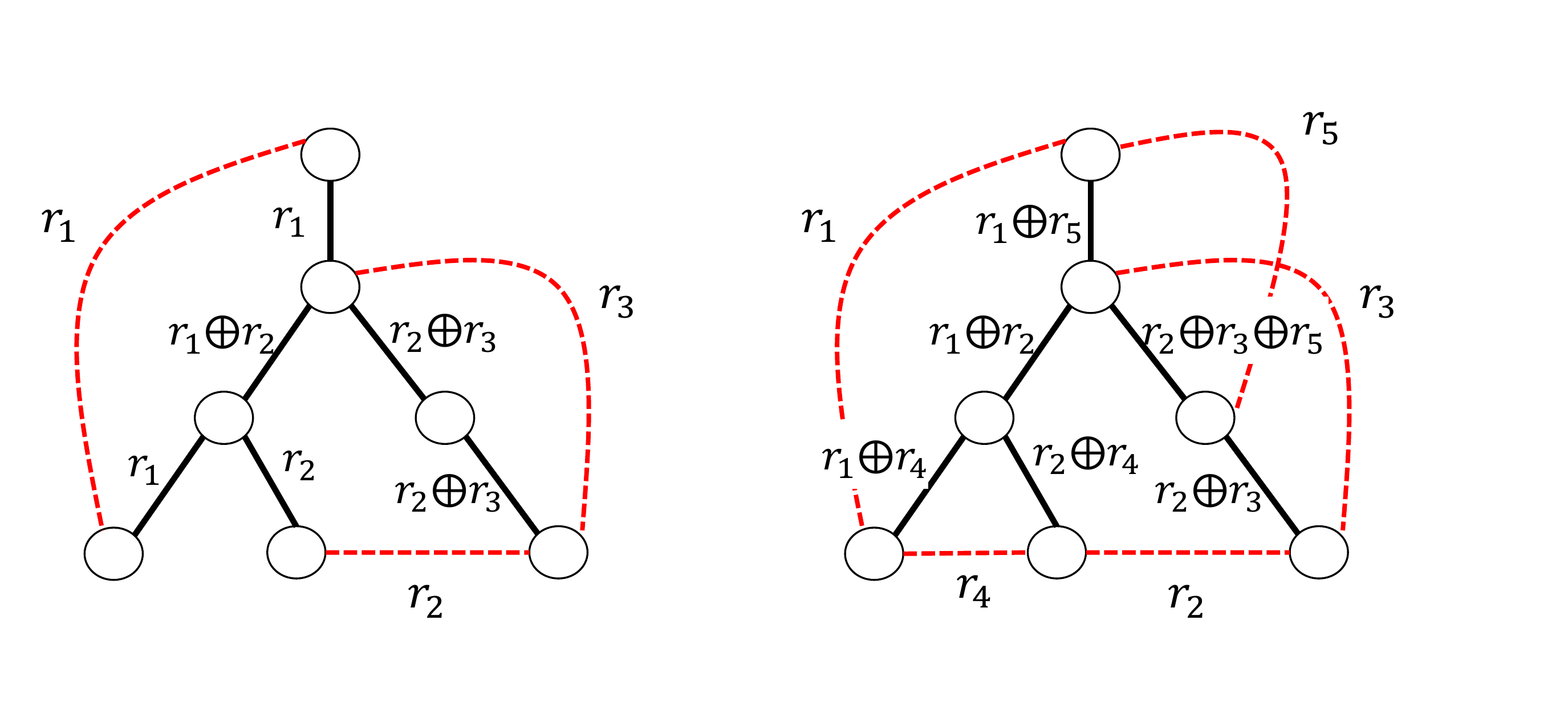}
 \caption{On the left there is a 2-edge-connected graph, with 3 non-tree edges with the labels $r_1,r_2,r_3$. The label of a tree edge $e$ is the xor of labels of non-tree edges that cover it. The cut pairs are all the pairs of edges with the same label. On the right, there are 2 additional non-tree edges, and now all the labels of edges are unique and there are no cut pairs.}
\label{3-ECSS_pic}
\end{figure}

In \cite{pritchard2011fast}, it is shown that the following algorithm allows to compute the labels of the tree edges in $O(D)$ rounds (see Theorem 4.2). 
We scan the tree from the leaves to the root, and each vertex $v$ computes the label of $\{v,p(v)\}$ according to the non-tree edges adjacent to it and the labels it receives from its children, as follows. $\phi(\{v,p(v)\})=\Moplus_{f \in \delta(v) \setminus \{v,p(v)\}} \phi(f)$, where $\delta(v)$ is the set of edges adjacent to $v$. The time complexity is $O(D)$ rounds for scanning the tree since it is a BFS tree, which gives the following. 

\begin{lemma} \label{alg-labels}
There is a distributed $O(D)$-round algorithm to sample a random $O(\log{n})$-bit circulation, where at the end each edge $e$ knows the value $\phi(e)$.
\end{lemma}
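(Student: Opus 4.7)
The plan is to implement exactly the construction sketched before the lemma statement: sample $\phi(e)$ locally for every non-tree edge, then propagate the XORs up the BFS tree to obtain $\phi(t)$ for every tree edge, and show that the resulting map is indeed a random $b$-bit circulation with $b=O(\log n)$.

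First, each non-tree edge $e$ picks a uniformly independent string $\phi(e)\in\mathbb{Z}_2^b$; since both endpoints of $e$ participate, I would let the endpoint with smaller id sample the string and send it across $e$ in one round, so both endpoints agree. This phase is local (one round). To compute the tree labels, I would run a convergecast on the BFS tree $T$. Every leaf $v$ sends $\phi(\{v,p(v)\})=\bigoplus_{f\in\delta(v)\setminus\{\{v,p(v)\}\}}\phi(f)$ to $p(v)$; after $v$ has received the labels of the tree edges to all its children, it computes $\phi(\{v,p(v)\})=\bigoplus_{f\in\delta(v)\setminus\{\{v,p(v)\}\}}\phi(f)$ (now using the already-known labels of its children edges plus the non-tree edges incident to $v$) and forwards it to $p(v)$. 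Since $T$ is a BFS tree of $G$ the depth is $O(D)$, so the convergecast completes in $O(D)$ rounds.

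The main content is the correctness of the recursive formula, which I would prove by induction on the subtree rooted at $v$. Let $T_v$ denote the vertex set of that subtree; I claim that for $v\neq r$, $\phi(\{v,p(v)\})=\bigoplus_{e\in E,\,|e\cap T_v|=1}\phi(e)$, which by the definition $\phi(t)=\bigoplus_{t\in S^1_e}\phi(e)$ is precisely the intended value, because a non-tree edge $e$ covers $\{v,p(v)\}$ in $T$ iff exactly one endpoint of $e$ lies in $T_v$ (and the same holds vacuously for the unique tree edge $\{v,p(v)\}$ itself, which is excluded from the XOR in the formula). For the induction step, the edges in $\delta(v)\setminus\{\{v,p(v)\}\}$ are either non-tree edges incident to $v$ (which contribute directly) or tree edges $\{v,c\}$ to children $c$; by the induction hypothesis each such child edge contributes $\bigoplus_{|e\cap T_c|=1}\phi(e)$. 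Summing over the children and the non-tree edges at $v$, every non-tree edge with both endpoints inside $T_v$ appears an even number of times (once for each side of the cut it crosses inside $T_v$) and therefore cancels, while every non-tree edge with exactly one endpoint in $T_v$ appears exactly once; non-tree edges with no endpoint in $T_v$ do not appear. This is precisely the claimed identity.

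It remains to argue that $\phi$ is a random $b$-bit circulation. Fixing any coordinate $i$, the non-tree edges choose $\phi_i$ uniformly and independently in $\mathbb{Z}_2$, which by Claim \ref{basis} (the fundamental cycles form a basis of the cycle space) is the same as picking a uniformly random binary circulation: each non-tree edge $e$ selects the fundamental cycle $Cyc_e$ with probability $1/2$, and the resulting symmetric difference is exactly the edge set $\{e:\phi_i(e)=1\}$ produced by our construction. Hence each coordinate is an independent uniform binary circulation, which is the definition of a random $b$-bit circulation. I expect the only mild obstacle to be the bookkeeping in the inductive step — specifically making sure the edge $\{v,p(v)\}$ is handled cleanly and that non-tree edges with both endpoints in $T_v$ cancel — but this is a straightforward XOR-parity argument once the subtree-cut characterization is in place.
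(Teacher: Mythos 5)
Your proposal is correct and follows exactly the construction the paper uses: local sampling at non-tree edges followed by a leaf-to-root XOR convergecast over the BFS tree. The only difference is that you prove from first principles (the subtree-cut induction and the reduction to Claim \ref{basis}) what the paper simply cites from Pritchard and Thurimella (their Theorem 4.2 and Propositions 2.3, 2.6), and both of those arguments check out.
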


\subsection{Identifying cut pairs} \label{sec:cut-pairs}

We next give a simple characterization of all the cut pairs in a graph based on Section \ref{sec:cyc-sam}. Let $G$ be a 2-edge-connected graph $G$, and let $T$ be a spanning tree of it. Note that since $T$ is connected, any cut pair has at least one tree edge. We next show the following.

%\cutPairs*

\begin{claim} \label{cut-pairs}
The edges $\{e,f\}$ are a cut pair in a 2-edge connected graph $G$ where $e$ is a tree edge, if and only if one of the following holds. 
\begin{enumerate}
\item $e$ is a tree edge and $f$ is the unique non-tree edge in $G$ that covers it.
\item $e$ and $f$ are tree edges covered by the exact same non-tree edges.
\end{enumerate}
\end{claim}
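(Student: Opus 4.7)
The plan is to argue directly from the tree/non-tree decomposition. Since $T$ is a spanning tree of $G$, every edge cut of $G$ must contain at least one tree edge, and by hypothesis the tree edge in our pair is $e$. Removing $e$ from $T$ splits the vertex set into two non-empty parts $X$ and $Y$; the non-tree edges covering $e$ are precisely those with one endpoint in $X$ and the other in $Y$, and the 2-edge-connectivity of $G$ guarantees there is at least one such edge.

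For sufficiency, the first case is immediate: if $f \notin T$ is the unique non-tree edge covering $e$, then the only edges of $G$ crossing $(X,Y)$ are $e$ and $f$, so deleting both disconnects $X$ from $Y$. For the second case, suppose $f = e' \in T$ and $e,e'$ are covered by the same set of non-tree edges. I would first observe that removing $\{e,e'\}$ from $T$ always produces three subtrees $C_1, C_2, C_3$ arranged in a path: one subtree $C_2$ is incident in $T$ to both $e$ and $e'$, while $C_1$ and $C_3$ are each incident to exactly one of them, with $e$ joining $C_1$ to $C_2$ and $e'$ joining $C_2$ to $C_3$. A non-tree edge crosses the bipartition $(C_2, C_1 \cup C_3)$ iff it has exactly one endpoint in $C_2$, which is equivalent to it covering exactly one of $e, e'$. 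By hypothesis no such non-tree edge exists, so $G \setminus \{e,e'\}$ is disconnected along this bipartition.

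For necessity, suppose $\{e, f\}$ is a cut pair. If $f \notin T$, then since $G \setminus \{e\}$ is connected, $f$ must be a bridge in $G \setminus \{e\}$ separating $X$ from $Y$; but the only edges of $G \setminus \{e\}$ crossing $(X,Y)$ are the non-tree edges covering $e$, forcing $f$ to be the unique such edge. If $f = e' \in T$, instantiate the same three-component picture $C_1, C_2, C_3$. Since each $C_i$ is a subtree of $T$ and hence connected in $G \setminus \{e, e'\}$, any disconnection of $G \setminus \{e,e'\}$ refines one of the three bipartitions $(C_1, C_2 \cup C_3)$, $(C_3, C_1 \cup C_2)$, or $(C_2, C_1 \cup C_3)$. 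The first would force $e$ alone to be a cut of $G$ and the second would force $e'$ alone to be a cut, both contradicting 2-edge-connectivity; thus the witnessing bipartition is $(C_2, C_1 \cup C_3)$, and the absence of non-tree edges across it is exactly the statement that no non-tree edge covers only one of $e, e'$, i.e., $e$ and $e'$ are covered by the same set of non-tree edges.

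The main obstacle will be setting up the three-component picture cleanly and justifying that every disconnection of $G \setminus \{e,e'\}$ corresponds to one of the three listed bipartitions of $\{C_1, C_2, C_3\}$. This is intuitive because each $C_i$ is a subtree and therefore remains connected after removing $e$ and $e'$, but it deserves an explicit line in the write-up; once this reduction is in hand, the rest of the argument is just bookkeeping about which non-tree edges cross which partition.
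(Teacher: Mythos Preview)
Your proof is correct and is genuinely different from the paper's. The paper does not argue directly from the three-component picture; instead it leverages the cycle-space machinery already set up in Section~\ref{sec:cyc-sam}. For the non-tree case it argues as you do, but for the case where both $e$ and $f$ are tree edges it proceeds as follows: if $e$ and $f$ are covered by the same non-tree edges then $\phi(e)=\phi(f)$ for \emph{every} circulation $\phi$, so $\Pr[\phi(e)\oplus\phi(f)=0]=1$, and Corollary~\ref{cor-sampling} then forces $\{e,f\}$ to be an induced edge cut. Conversely, if some non-tree edge $g$ covers exactly one of $e,f$, then the fundamental cycle $Cyc_g$ is a binary circulation meeting $\{e,f\}$ in an odd number of edges, and Lemma~\ref{cycle-sampling} rules out $\{e,f\}$ being a cut. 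In other words, the paper deduces a purely combinatorial statement from the parity/probabilistic lemmas that are already on the table.

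Your route is more elementary and self-contained: it never touches $\phi$ or random circulations, and instead exploits directly that removing two tree edges yields three subtrees with a ``middle'' one. This is arguably the more natural proof of the claim in isolation. The paper's route is shorter \emph{in context} because Lemma~\ref{cycle-sampling} and Corollary~\ref{cor-sampling} are already available, and it reinforces the theme that the cycle-space labels encode exactly the cut-pair structure. One minor point in your write-up: you should state explicitly why the three components always sit in a path configuration (i.e., why there is a unique ``middle'' $C_2$ incident to both removed edges); this is true because after removing $e$ the second edge $e'$ lies in one side, and its removal splits that side further, but it is worth one sentence.
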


\begin{proof}
If $e$ is a tree edge and $f$ is a non-tree edge, then $\{e,f\}$ is a cut pair if and only if $f$ covers $e$ and it is the only non-tree edge that covers $e$, as otherwise there is another non-tree edge that covers $e$, and removing $e$ and $f$ from $G$ does not disconnect it.

We next consider the case that $e$ and $f$ are tree edges. Let $\phi$ be a random $b$-bit circulation.
Note that from the definition of labels, $\phi(t) = \Moplus_{t \in S^1_{e'}} \phi(e')$ for any tree edge $t$. This shows that two tree edges $e,f$ that are covered by the exact same non-tree edges have the same labels in any circulation $\phi$, which shows that $Pr[\phi(e)=\phi(f)]=1$. By Corollary \ref{cor-sampling},
it follows that $\{e,f\}$ is a cut pair in this case.

If $e$ and $f$ are tree edges that are not covered by the same non-tree edges, then there is some non-tree edge $g$ that covers exactly one of $e$ and $f$. Let $Cyc_g$ be the fundamental cycle of $g$, then it has exactly one of $e$ and $f$, which shows that $Cyc_g$ is a binary circulation where $|Cyc_g \cap \{e,f\}| = 1$. Hence, by Claim \ref{cycle-sampling}, $\{e,f\}$ is not a cut pair.
This completes the proof.
\end{proof}

The next corollary follows from Claim \ref{cut-pairs}.

%\corPairs*

\begin{corollary} \label{cor-pairs}
Let $H$ be a 2-edge-connected spanning subgraph of a graph $G$ and let $T$ be a spanning tree of $H$, an edge $e \not \in H$ covers the cut pair $\{f,f'\}$ if an only if exactly one of $f,f'$ is in $S^1_e$.
\end{corollary}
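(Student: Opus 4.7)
The plan is to derive this corollary directly from Claim~\ref{cut-pairs} by a case analysis, comparing the cut pair structure of $H$ with that of $H \cup \{e\}$. The key observation is that $T$ remains a spanning tree of $H \cup \{e\}$ and that $S^1_e$ depends only on $T$, so for every tree edge $t$ the set of non-tree edges of $H \cup \{e\}$ covering $t$ equals the set of non-tree edges of $H$ covering $t$, together with $e$ itself if and only if $t \in S^1_e$. Since every cut pair in a 2-edge-connected graph contains at least one tree edge, Claim~\ref{cut-pairs} forces $\{f,f'\}$ into one of two exhaustive cases, which I would treat separately.

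In Case~1, $f$ is a tree edge and $f'$ is the unique non-tree edge of $H$ that covers $f$. Since $f' \notin S^1_e$ (as $S^1_e$ contains only tree edges), the condition ``exactly one of $f,f'$ is in $S^1_e$'' reduces to $f \in S^1_e$. Applying Claim~\ref{cut-pairs} to $H \cup \{e\}$, the pair remains a cut pair iff $f'$ is still the \emph{unique} non-tree edge covering $f$, which fails exactly when $e$ also covers $f$, i.e.\ when $f \in S^1_e$. Hence $e$ covers $\{f,f'\}$ iff $f \in S^1_e$, matching the claimed criterion.

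In Case~2, both $f$ and $f'$ are tree edges covered by the same non-empty set of non-tree edges of $H$. The only new non-tree edge in $H \cup \{e\}$ is $e$, so their cover sets remain equal iff $e$ covers both or neither, i.e.\ iff $f \in S^1_e \Leftrightarrow f' \in S^1_e$. By Claim~\ref{cut-pairs} applied to $H \cup \{e\}$, this is precisely when $\{f,f'\}$ is still a cut pair, so $e$ covers it iff exactly one of $f,f'$ lies in $S^1_e$, as desired.

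The argument is routine once Claim~\ref{cut-pairs} is in hand; the only small subtlety to check is that in Case~2 the common cover set is non-empty (which holds because $H$ is 2-edge-connected, so no single edge removal disconnects it), so the augmented pair in $H \cup \{e\}$ cannot accidentally slip from Case~2 into Case~1 of the classification.
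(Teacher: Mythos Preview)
Your proof is correct and follows essentially the same approach as the paper: both derive the corollary directly from Claim~\ref{cut-pairs} by the same two-case analysis (one tree edge plus its unique non-tree cover, versus two tree edges with identical cover sets), comparing the cut-pair classification in $H$ with that in $H \cup \{e\}$. Your closing remark about Case~2 not ``slipping'' into Case~1 is harmless but unnecessary, since $f$ and $f'$ remain tree edges in $H \cup \{e\}$ and so Case~1 of Claim~\ref{cut-pairs} cannot apply to them regardless.
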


\begin{proof}
Let $\{f,f'\}$ be a cut pair in $H$.
If $f$ and $f'$ are two tree edges, they are covered by the exact same non-tree edges in $H$ according to Claim \ref{cut-pairs}. By the same claim, $\{f,f'\}$ is not a cut pair in $H \cup \{e\}$ if and only if they are not covered by the exact same non-tree edges in $H \cup \{e\}$. This happens if and only if $e$ covers exactly one of $f,f'$.

If $\{f,f' \}$ has one tree edge, say $f$, then $f'$ is the only non-tree edge that covers it in $H$ according to Claim \ref{cut-pairs}. By the same claim, $\{f,f'\}$ is not a cut pair in $H \cup \{e\}$ if and only if $f'$ is not the only non-tree edge in $H \cup \{e\}$ that covers $f$. This happens if and only if $e$ covers $f$.  
\end{proof}

\subsection{Implementing the algorithm} \label{sec:3-impl}

We next explain how to use the labels for computing the cost-effectiveness. First, we use the $O(D)$-round algorithm from Lemma \ref{alg-labels} to sample a random $O(\log{n})$-bit circulation $\phi$ of the 2-edge-connected graph $H \cup A$, and assign each edge $e \in H \cup A$ the label $\phi(e)$. 
By Lemma \ref{whp}, Property \ref{prop} holds w.h.p.
We next assume that this is the case, and show how to compute the cost-effectiveness. We later address the case that it does not happen, and show how it affects the algorithm. 

For computing cost-effectiveness, we need the following definitions. 
Let $t$ be a tree edge, we denote by $n_{\phi(t)}$ the number of edges in $H \cup A$ with the label $\phi(t)$. For an edge $e \not \in H \cup A$,  
we denote by $n_{\phi(t),e}$ the number of edges in $S^1_e$ with label $\phi(t)$. Note that if $\{f,f'\}$ is a cut pair, then $\phi(f) = \phi(f')$, we say that $\phi(f)$ is the label of the cut.
We next show the following.

%\costEf*

\begin{claim} \label{cost-ef-3}
For an edge $e \not \in H \cup A$, the number of cut pairs with label $\phi(t)$ covered by $e$ is exactly $n_{\phi(t),e}(n_{\phi(t)} - n_{\phi(t),e})$.
\end{claim}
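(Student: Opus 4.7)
The plan is to combine Corollary~\ref{cor-pairs} (which characterizes when $e$ covers a given cut pair) with Property~\ref{prop} (which characterizes the cut pairs themselves via labels), and then do an elementary bipartite count.

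First I would set up the bijection. A cut pair $\{f,f'\}$ of $H\cup A$ has a well-defined label $\phi(f)=\phi(f')$, so ``cut pairs with label $\phi(t)$ covered by $e$'' is exactly the set of unordered pairs $\{f,f'\}$ of edges in $H\cup A$ such that (i)~$\phi(f)=\phi(f')=\phi(t)$, (ii)~$\{f,f'\}$ is a cut pair, and (iii)~exactly one of $f,f'$ lies in $S^1_e$. Under Property~\ref{prop}, condition (ii) is equivalent to condition (i), since any two edges of $H\cup A$ sharing a label form a cut pair. So the count reduces to unordered pairs of edges in $H\cup A$ with label $\phi(t)$ such that exactly one of them lies in $S^1_e$.

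Next I would invoke Corollary~\ref{cor-pairs} to justify condition (iii) as the correct criterion for ``$e$ covers $\{f,f'\}$'' in the first place (noting that $S^1_e\subseteq T\subseteq H\subseteq H\cup A$, so it makes sense to partition the $n_{\phi(t)}$ label-$\phi(t)$ edges of $H\cup A$ according to membership in $S^1_e$). Partition the $n_{\phi(t)}$ edges of $H\cup A$ with label $\phi(t)$ into those inside $S^1_e$, of which there are $n_{\phi(t),e}$ by definition, and those outside, of which there are $n_{\phi(t)}-n_{\phi(t),e}$. A pair $\{f,f'\}$ with exactly one edge in $S^1_e$ is obtained by choosing one edge from each side, giving
\[
n_{\phi(t),e}\bigl(n_{\phi(t)}-n_{\phi(t),e}\bigr)
\]
unordered pairs (note each such pair is counted exactly once since its two edges lie in disjoint parts).

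There is no real obstacle here; the statement is essentially a bipartite counting identity once the two earlier results are plugged in. The only subtlety worth flagging is that the argument is conditioned on Property~\ref{prop} holding (which is assumed for the cost-effectiveness computation and holds w.h.p.\ by Lemma~\ref{whp}); without it, the equivalence between ``same label'' and ``cut pair'' can fail in one direction, but that case is handled separately when bounding the overall algorithm.
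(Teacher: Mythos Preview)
Your proof is correct and follows essentially the same approach as the paper: invoke Corollary~\ref{cor-pairs} to reduce ``$e$ covers $\{f,f'\}$'' to ``exactly one of $f,f'$ lies in $S^1_e$'', then do the bipartite count over the $n_{\phi(t)}$ edges with label $\phi(t)$. The only difference is cosmetic: you make the dependence on Property~\ref{prop} explicit, whereas the paper treats it as a standing assumption for this section.
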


\begin{proof}
According to Corollary \ref{cor-pairs}, the edge $e$ covers a cut pair $\{f,f'\}$ if and only if exactly one of $f,f'$ is in $S^1_e$. Hence, for a tree edge $f \in S^1_e$ with label $\phi(t)$, the edge $e$ covers all the cut pairs of the form $\{f,f'\}$ where $\{f,f'\}$ is a cut pair and $f' \not \in S^1_e$, there are $n_{\phi(t)} - n_{\phi(t),e}$ cuts of this form. Since there are $n_{\phi(t),e}$ edges in $S^1_e$ with label $\phi(t)$, the number of cut pairs with label $\phi(t)$ covered by $e$ is exactly $n_{\phi(t),e}(n_{\phi(t)} - n_{\phi(t),e})$. 
\end{proof}

According to Claim \ref{cost-ef-3}, if we sum the value $n_{\phi(t),e}(n_{\phi(t)} - n_{\phi(t),e})$ over all the labels $\phi(t)$ where $n_{\phi(t),e} \geq 1$ we get the number of cut pairs in $H \cup A$ covered by $e$, which is exactly $\rho(e) = |C_e|$. Therefore, to compute cost-effectiveness we need to explain how each edge $e \not \in H \cup A$ learns all the relevant values $n_{\phi(t)},n_{\phi(t),e}$. The high-level idea is:\\[-7pt]

\noindent\fbox{%
    \parbox{\textwidth}{%
    \vspace{-0.3cm}
\begin{enumerate}[(a).]
\item Each non-tree edge $e \in H$ learns the values $(t,\phi(t))$ for all the edges in $S^1_e$. \label{step-1}
\item Each tree edge $t$ learns $n_{\phi(t)}$.\label{step-2}
\item Each edge $e \not \in H \cup A$ learns all the values $(t,\phi(t),n_{\phi(t)})$ for all the edges in $S^1_e$.\label{step-3} 
\item Each edge $e \not \in H \cup A$ deduces all the relevant values $n_{\phi(t)},n_{\phi(t),e}$. 
\end{enumerate}
	\vspace{-0.4cm}
}}\\

Note that after step (\ref{step-3}), each edge $e \not \in H \cup A$ knows all the labels of edges in $S^1_e$, and all the relevant values $n_{\phi(t)}$. This allows computing $n_{\phi(t)},n_{\phi(t),e}$.
We next explain how to implement steps (\ref{step-1})-(\ref{step-3}) in $O(D)$ rounds. 
Let $m_t$ be a piece of information of $O(\log{n})$ bits associated with the tree edge $t$. Then, all the vertices can learn in $O(D)$ rounds all the values $(t,m_t)$ for the tree edges in the tree path between them to the root $r$, as follows. First, each vertex $v$ sends to its children the message $(t,m_t)$ for the tree edge $t = \{v,p(v)\}$. Then, it sends to them the message it receives from its parent in the previous round. After $O(D)$ rounds each vertex $v$ knows all the values $(t,m_t)$ for all the tree edges in the path $P_{r,v}$.

Let $e=\{u,v\}$ be a non-tree edge, the vertices $v$ and $u$ can learn all the values $(t,m_t)$ for all the tree edges in $S^1_e$ by exchanging between them full information about the paths $P_{r,u},P_{r,v}$ in $O(D)$ rounds.  
This allows them to compute their LCA, learn which edges are in $S^1_e$ and learn all the relevant values $(t,m_t)$.
Hence, step (\ref{step-1}) can be easily computed in $O(D)$ rounds, and step (\ref{step-3}) can be computed in $O(D)$ rounds after all the tree edges learn the values $n_{\phi(t)}$. We next explain how the tree edges learn the values $n_{\phi(t)}$ in step (\ref{step-2}). 

Let $t$ be a tree edge with label $\phi(t)$. Since $H$ is 2-edge-connected, there is at least one non-tree edge $e \in H$ that covers $t$. Let $n_{\phi(t),e}$ be the number of tree edges in $Cyc_e = S^1_e \cup \{e\}$ with label $\phi(t)$. Then, the following holds.

\begin{claim}
If $e \in H$ covers $t$, then $n_{\phi(t),e} = n_{\phi(t)}$.
\end{claim}

\begin{proof}
According to Property \ref{prop} (that holds w.h.p by Lemma \ref{whp}), $\{f,f'\}$ is a cut pair in $H \cup A$ if and only if $\phi(f) = \phi(f')$. We consider two cases.

If $t$ is covered by only one non-tree edge in $H \cup A$, this edge is necessarily $e$, and $\phi(t) = \phi(e)$ by the definition of labels. Note that the edge $e$ is in a cut pair $\{e,f\}$ if and only if $f$ is a tree edge that is covered only by $e$ according to Claim \ref{cut-pairs}. Since $\{e,f\}$ is a cut pair if and only if $\phi(e)=\phi(f)$, it follows that all the edges with label $\phi(t)=\phi(e)$ are in $Cyc_e$, which shows that $n_{\phi(t),e} = n_{\phi(t)}$. 

If $t$ is covered by at least two edges in $H \cup A$, then $\{t,f\}$ is a cut pair if and only if $f$ is a tree edge covered by the exact same edges as $t$, according to Claim \ref{cut-pairs}. In particular, $e$ covers $f$. This shows that all the edges with label $\phi(t)$ are in $S^1_e \subseteq Cyc_e$, which gives $n_{\phi(t),e} = n_{\phi(t)}$. 
\end{proof}
 
Hence, to learn $n_{\phi(t)}$, it is enough to learn $n_{\phi(t),e}$ for an edge $e \in H$ that covers $t$.
In addition, the edge $e$ learns in step (\ref{step-1}) the labels of all the tree edges in $S^1_e$, which allows it computing $n_{\phi(t),e}$. Note that exactly one of the endpoints of $e$, say $u$, is a descendant of $t$ in $T$, and $u$ can pass the information $n_{\phi(t),e}$ to $t$. In order that all the tree edges would learn the values $n_{\phi(t)}$ simultaneously we use a pipelined upcast over the BFS tree. Each leaf $u$ sends to its parent the values $n_{\phi(t)}$ for all the tree edges above it in the tree where $u$ knows the value $n_{\phi(t)}$. Each internal vertex sends to its parent these values based on the messages it receives from its children, and the values already known to it. We can pipeline the computations to get a time complexity of $O(D)$ rounds. This completes the description of the cost-effectiveness computation, and shows how to implement Line \ref{3-cost-ef} of the algorithm in $O(D)$ rounds. 

To complete the description of the algorithm, we need to explain how to verify if $H \cup A$ is 3-edge-connected in Line \ref{3-term} of the algorithm in $O(D)$ rounds. To do so, we use again the algorithm for computing the labels. Now we apply it on the graph $H \cup A$ in Line \ref{3-term} (after we added new edges to $A$ in Line \ref{3-prob}). We also compute for each tree edge the value $n_{\phi(t)}$ in $O(D)$ rounds as before.

%\ver*

\begin{claim}
The graph $H \cup A$ is 3-edge connected if and only if $n_{\phi(t)}=1$ for all the tree edges.
\end{claim} 

\begin{proof}
According to Property \ref{prop}, $\{e,f\}$ is a cut pair if and only if $\phi(e)=\phi(f)$. 
Hence, $n_{\phi(t)}=1$ for all the tree edges, if and only if none of the tree edges is in a cut pair in $H \cup A$. Since any cut pair has at least one tree edge, this happens if and only if $H \cup A$ is 3-edge-connected.
\end{proof}

After each tree edge knows $n_{\phi(t)}$, we can check in $O(D)$ rounds if at least one of these values is greater than 1 by communication over the BFS tree. This completes the description of Line \ref{3-term} of the algorithm.

\remove{ 
\begin{enumerate}
\item Each non-tree edge $e \in H$ learns the values $(t,\phi(t))$ for all the edges in $S^1_e$. \label{step-1}
\item Each tree edge $t$ learns $n_{\phi(t)}$.\label{step-2}
\item Each edge $e \not \in H \cup A$ learns all the values $(t,\phi(t),n_{\phi(t)})$ for all the edges in $S^1_e$.\label{step-3} 
\item Each edge $e \not \in H \cup A$ deduces all the relevant values $n_{\phi(t)},n_{\phi(t),e}$. 
\end{enumerate}

Note that after step \ref{step-3}, each edge $e \not \in H \cup A$ knows all the labels of edges in $S^1_e$, and all the relevant values $n_{\phi(t)}$. This allows computing $n_{\phi(t)},n_{\phi(t),e}$.
We next explain how to implement steps 1-3 in $O(D)$ rounds. 
Let $m_t$ be a piece of information of $O(\log{n})$ bits associated with the tree edge $t$. Then, all the vertices can learn in $O(D)$ rounds all the values $(t,m_t)$ for the tree edges in the tree path between them to the root $r$, as follows. First, each vertex $v$ sends to its children the message $(t,m_t)$ for the tree edge $t = \{v,p(v)\}$. Then, it sends to them the message it receives from its parent in the previous round. After $O(D)$ rounds each vertex $v$ knows all the values $(t,m_t)$ for all the tree edges in the path $P_{r,v}$.

Let $e=\{u,v\}$ be a non-tree edge, the vertices $v$ and $u$ can learn all the values $(t,m_t)$ for all the tree edges in $S^1_e$ by exchanging between them full information about the paths $P_{r,u},P_{r,v}$ in $O(D)$ rounds.  
This allows them to compute their LCA, learn which edges are in $S^1_e$ and learn all the relevant values $(t,m_t)$.
Hence, step \ref{step-1} can be easily computed in $O(D)$ rounds, and step \ref{step-3} can be computed in $O(D)$ rounds after all the tree edges learn the values $n_{\phi(t)}$. We next explain how the tree edges learn the values $n_{\phi(t)}$ in step \ref{step-2}. 

Let $t$ be a tree edge with label $\phi(t)$. Since $H$ is 2-edge-connected, there is at least one non-tree edge $e \in H$ that covers $t$. Let $n_{\phi(t),e}$ be the number of tree edges in $Cyc_e = S^1_e \cup \{e\}$ with label $\phi(t)$. Then, the following holds.

\begin{claim}
If $e \in H$ covers $t$, then $n_{\phi(t),e} = n_{\phi(t)}$.
\end{claim}

\begin{proof}
According to Property \ref{prop} (that holds w.h.p by Lemma \ref{whp}), $\{f,f'\}$ is a cut pair in $H \cup A$ if and only if $\phi(f) = \phi(f')$. We consider two cases.

If $t$ is covered by only one non-tree edge in $H \cup A$, this edge is necessarily $e$, and $\phi(t) = \phi(e)$ by the definition of labels. Note that the edge $e$ is in a cut pair $\{e,f\}$ if and only if $f$ is a tree edge that is covered only by $e$ according to Claim \ref{cut-pairs}. Since $\{e,f\}$ is a cut pair if and only if $\phi(e)=\phi(f)$, it follows that all the edges with label $\phi(t)=\phi(e)$ are in $Cyc_e$, which shows that $n_{\phi(t),e} = n_{\phi(t)}$. 

If $t$ is covered by at least two edges in $H \cup A$, then $\{t,f\}$ is a cut pair if and only if $f$ is a tree edge covered by the exact same edges as $t$, according to Claim \ref{cut-pairs}. In particular, $e$ covers $f$. This shows that all the edges with label $\phi(t)$ are in $S^1_e \subseteq Cyc_e$, which gives $n_{\phi(t),e} = n_{\phi(t)}$. 
\end{proof}
 
Hence, to learn $n_{\phi(t)}$, it is enough to learn $n_{\phi(t),e}$ for an edge $e \in H$ that covers $t$.
In addition, the edge $e$ learns in step \ref{step-1} the labels of all the tree edges in $S^1_e$, which allows it computing $n_{\phi(t),e}$. Note that exactly one of the endpoints of $e$, say $u$, is a descendant of $t$ in $T$, and $u$ can pass the information $n_{\phi(t),e}$ to $t$. In order that all the tree edges would learn the values $n_{\phi(t)}$ simultaneously we use a pipelined upcast over the BFS tree. Each leaf $u$ sends to its parent the values $n_{\phi(t)}$ for all the tree edges above it in the tree where $u$ knows the value $n_{\phi(t)}$. Each internal vertex sends to its parent these values based on the messages it receives from its children, and the values already known to it. We can pipeline the computations to get a time complexity of $O(D)$ rounds. This completes the description of the cost-effectiveness computation, and shows how to implement Line \ref{3-cost-ef} of the algorithm in $O(D)$ rounds. 

To complete the description of the algorithm, we need to explain how to verify if $H \cup A$ is 3-edge-connected in Line \ref{3-term} of the algorithm in $O(D)$ rounds. To do so, we use again the algorithm for sampling an $O(\log{n})$-bit circulation. Now we apply it on the graph $H \cup A$ in Line \ref{3-term} (after we added new edges to $A$ in Line \ref{3-prob}). We also compute for each tree edge the value $n_{\phi(t)}$ in $O(D)$ rounds as before.

%\ver*

\begin{claim}
The graph $H \cup A$ is 3-edge connected if and only if $n_{\phi(t)}=1$ for all the tree edges.
\end{claim} 

\begin{proof}
According to Property \ref{prop}, $\{e,f\}$ is a cut pair if and only if $\phi(e)=\phi(f)$. %If $H \cup A$ has a cut pair $\{e,f\}$, at least one of the edges $e,f$ is a tree edge, say $e$, since $\phi(e) = \phi(f)$, then $n_{\phi(e)} \geq 2$. In addition, if 
Hence, $n_{\phi(t)}=1$ for all the tree edges, if and only if none of the tree edges is in a cut pair in $H \cup A$. Since any cut pair has at least one tree edge, this happens if and only if $H \cup A$ is 3-edge-connected.
\end{proof}

%discuss the prob. Since the one-sided error, we terminate only if the graph is 3-edge-connected, no?
After each tree edge knows $n_{\phi(t)}$, we can check in $O(D)$ rounds if at least one of these values is greater than 1 by communication over the BFS tree. This completes the description of Line \ref{3-term} of the algorithm. 
}

We next show that although our algorithm assumes that Property \ref{prop} holds, which happens w.h.p, the algorithm always terminates after $O(\log^3{n})$ iterations, at the end $H \cup A$ is \emph{guaranteed} to be 3-edge-connected, and the approximation ratio is still $O(\log{n})$ in expectation. 

\begin{lemma} \label{lem-ter}
The algorithm terminates after $O(\log^3{n})$ iterations, at the end $H \cup A$ is \emph{guaranteed} to be 3-edge-connected, and the approximation ratio is $O(\log{n})$ in expectation. 
\end{lemma}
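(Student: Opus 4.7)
The plan is to exploit two structural facts: the label collisions from Lemma~\ref{whp} are one-sided and rare, and each iteration resamples a fresh $O(\log n)$-bit circulation. This lets us reduce the behavior of the 3-ECSS algorithm to that of $Aug_k$ (Section~\ref{k-full}) on a high-probability event, while the one-sided error gives an unconditional correctness guarantee when the algorithm terminates.

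First, I would establish the unconditional 3-edge-connectivity at termination. By Corollary~\ref{cor-sampling}, if $\{e,f\}$ is an actual cut pair then $\phi(e)=\phi(f)$ deterministically, regardless of the ``if'' direction of Property~\ref{prop}. So whenever the termination check in Line~\ref{3-term} certifies that $n_{\phi(t)}=1$ for every tree edge $t$, no tree edge can lie in any cut pair of $H\cup A$; since every cut pair of $H\cup A$ contains at least one tree edge of $T$, this forces $H\cup A$ to be 3-edge-connected. To make the iteration bound a hard cap, I would additionally stop the algorithm after $c_0\log^3 n$ iterations for a suitable constant $c_0$, and if termination has not been detected I would add every non-tree edge of $G$ to $A$, which trivially preserves 3-edge-connectivity since $G$ is 3-edge-connected.

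Next, I would bound the iteration count. By Lemma~\ref{whp}, in a single iteration Property~\ref{prop} fails with probability at most $n^{-c}$, where $c$ can be made arbitrarily large by lengthening the labels by a constant factor. Since fresh labels are drawn in each iteration, a union bound over the $O(\log^3 n)$ iterations shows that the good event---that Property~\ref{prop} holds in every iteration---has probability at least $1-O(\log^3 n\cdot n^{-c})$. Under the good event, Claim~\ref{cost-ef-3} makes the computed $\rho(e)$ exactly the true cost-effectiveness and the termination check is both sound and complete, so the algorithm executes the $Aug_k$ template for $k=3$ with unit weights. The phase design recalled in Section~\ref{sec:k-time} forces termination within $O(\log^3 n)$ iterations: there are $O(\log n)$ possible values of rounded cost-effectiveness, and for each value we spend $O(\log^2 n)$ iterations as $p_i$ doubles from $1/m$ to $1$; at the $p_i=1$ iteration every candidate is active, so by Claim~\ref{cuts} every cut pair coverable at the current cost-effectiveness is covered and the maximum rounded cost-effectiveness strictly decreases. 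Outside the good event the cap forces termination and the fallback guarantees validity.

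Finally, for the expected approximation ratio, under the good event the analysis of Section~\ref{k-approx} carries over unchanged: the inequality $E[w(A)]\le 6\cdot E[\sum_C \mathit{cost}(C)]$ depends only on Lemma~\ref{deg}, which is about actual cut degrees and is independent of the label sampling, while $\sum_C \mathit{cost}(C)\le O(\log n)\,w(A^*)$ follows from candidates being chosen by correctly computed rounded cost-effectiveness. Hence the conditional expectation of $w(A)$ given the good event is $O(\log n)\,w(A^*)$. In the bad event the fallback adds at most $|E(G)|\le n^2$ unit-weight edges, so its contribution to $E[w(A)]$ is at most $n^2\cdot O(\log^3 n\cdot n^{-c})$, which is $o(w(A^*))$ for $c$ sufficiently large, yielding $E[w(A)]\le O(\log n)\,w(A^*)$ overall. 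The main obstacle is precisely this good/bad split: a mid-run failure of Property~\ref{prop} could inflate cost-effectiveness via spurious label collisions and derail the phase analysis, so the argument must isolate such failures using the resampled labels and absorb them via the fallback.
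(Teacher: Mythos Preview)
Your proof is correct, but it takes a different route from the paper. You handle the possibility that Property~\ref{prop} fails by a good/bad event split: condition on Property~\ref{prop} holding in every iteration (probability $\ge 1 - O(\log^3 n \cdot n^{-c})$ via a union bound over freshly sampled labels), show that under this event the algorithm is indistinguishable from the $Aug_k$ template and inherits the $O(\log^3 n)$ iteration bound and $O(\log n)$ approximation, and absorb the complementary event by a brute-force fallback that adds all non-tree edges. The paper instead avoids any fallback: it slightly modifies the algorithm so that the maximum rounded cost-effectiveness is forced to be monotone (take the minimum of the currently computed value and the previous one, and halve after a $p_i=1$ iteration), and proves a structural claim (Claim~\ref{ver-ter}) that any edge $e$ covering a \emph{real} cut pair always has computed cost-effectiveness at least $1$, regardless of label collisions. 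These two ingredients together guarantee that at the last iteration ($p_i=1$, rounded cost-effectiveness $2$) every edge covering a remaining cut pair is a candidate and is added, so $H\cup A$ is 3-edge-connected without appealing to any good event; the approximation is then shown by absorbing the rare mis-computations into the expectation $E[\sum_C cost(C)]$. Your approach is more modular and requires less new structural insight, while the paper's exploits the one-sided error more deeply (bounding computed cost-effectiveness from below even under collisions) and yields an algorithm that is intrinsically robust rather than relying on a fallback. One small imprecision: your claim that Lemma~\ref{deg} is ``independent of the label sampling'' is not quite right, since the set of candidates is determined by computed cost-effectiveness; but since you condition on the good event this does not affect the argument.
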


\begin{proof}
According to Corollary \ref{cor-sampling}, if we choose $b = O(\log{n})$ for a sufficient large constant, with probability at least $\frac{1}{n^c}$ for all the edges in the graph, $\phi(e) = \phi(f)$ if and only if $\{e,f\}$ is a cut pair. In addition, the error probability is one-sided, if $\{e,f\}$ is a cut pair then necessarily $\phi(e)=\phi(f)$. 
Hence, if $n_{\phi(t)}=1$ for all tree edges, then the graph is necessarily 3-edge-connected. We also show the following.

\begin{claim} \label{ver-ter}
If $e \not \in H \cup A$ covers some cut pair $\{f,f'\}$, the value of cost-effectiveness computed by $e$ is at least 1. 
\end{claim}

\begin{proof}
If $e$ covers $\{f,f'\}$, then exactly one of $f,f'$, say $f$, is in $S^1_e$. The value $n_{\phi(f)}$ that $f$ and $e$ know is equal to $n_{\phi(f),e'}$ for some $e' \in H$ that covers $f$. It holds that $f' \in Cyc_{e'}$ (as otherwise $\{f,f'\}$ is not a cut pair in $H$). Since $f' \not \in S^1_e$, it must hold that $n_{\phi(f),e'} > n_{\phi(f),e}$. In addition, $n_{\phi(f),e} \geq 1$ since $f \in S^1_e$. This shows that $n_{\phi(f),e}(n_{\phi(f),e'} - n_{\phi(f),e})$ is at least 1, and hence the cost-effectiveness value computed by $e$ is at least 1.
\end{proof}

%This means that with small probability we may continue in the algorithm even if $H \cup A$ is 3-edge-connected, and we may add edges that are not with maximum rounded cost-effectiveness. Since the probability of error is small enough, we can still show approximation ratio... explain %the evaluation of the maximum cost-ef may be wrong...

To address the possibility of error, we change the algorithm slightly, as follows. When computing the maximum cost-effectiveness in the graph we take the minimum value between the value computed in the current iteration and the last one. %This is always an upper bound on the maximum cost-effectiveness, since it can only decrease between iterations. 
In addition, if at the previous iteration $p_i =1$ and the maximum rounded cost-effectiveness is $\rho$, we know that the maximum rounded cost-effectiveness must decrease, hence we take the minimum between $\frac{\rho}{2}$ and the value computed in the current iteration. As long as the graph $H \cup A$ is not 3-edge-connected, the maximum cost-effectiveness is at least 1 and at most $n^2$, which shows that there are at most $O(\log{n})$ values for the maximum rounded cost-effectiveness, and guarantees that the number of iterations is $O(\log^3{n})$ as before. The algorithm ends if either we find that the graph $H \cup A$ is 3-edge-connected in Line \ref{3-term} of the algorithm and then $H \cup A$ is necessarily 3-edge-connected, or after $O(\log^3{n})$ iterations. In the latter case, at the last iteration $p_i = 1$ and the maximum rounded cost-effectiveness is 2. Hence, all the edges that cover some cut pair in $H \cup A$ are added to the augmentation by Claim \ref{ver-ter}, which shows that at the end $H \cup A$ is 3-edge-connected.
Since the cost-effectiveness computations may be wrong with small probability, we may add to the augmentation edges that are not with maximum rounded cost-effectiveness. However, since the probability of error is small enough, we can show an approximation ratio of $O(\log{n})$ in expectation. This affects the right inequity in the approximation ratio analysis in Section \ref{k-approx}. Now, instead of showing $\sum_{C} cost(C) \leq O(\log{n})w(A^*)$, we show $E[\sum_{C} cost(C)] \leq O(\log{n})w(A^*)$, which still gives an approximation ratio of $O(\log{n})$ in expectation.
This completes the proof of Lemma \ref{lem-ter}.
\end{proof}

Let $H^*$ be an optimal solution for 3-ECSS. In our algorithm, we started with a 2-approximation for the minimum size 2-ECSS, $H$, and then augmented it to be 3-edge-connected. Since our algorithm gives an $O(\log{n})$-approximation in expectation for $Aug_3$, we get $E[|H \cup A|] \leq 2|H^*| + O(\log{n})|H^*| = O(\log{n})|H^*|$. Computing $H$ takes $O(D)$ rounds, and computing the augmentation $A$ takes $O(D \log^3{n})$ rounds, which shows the following.

%should refer to the number of iterations somewhere

\threeECSS*

\subsection{Remarks} \label{sec:remarks}

Our 3-ECSS algorithm works also for \emph{weighted} $3$-ECSS. However, in the weighted case our algorithm starts by computing an MST and not a BFS tree. Since the time complexity of the algorithm depends on the height of the tree, each iteration now takes $O(h_{MST})$ rounds instead of $O(D)$ rounds. Hence, the time complexity of the algorithm in the worst case is $O(n \log^3{n})$, which is worse than the algorithm in Section \ref{k-full}.  

In our 2-ECSS algorithm, each non-tree edge needed to learn an $O(\log{n})$-bit piece of information: how many tree edges vote for it or how many uncovered tree edges are in $S_e$. This allowed to parallelize the computations efficiently. Yet, in our 3-ECSS algorithm, non-tree edges need to learn \emph{all} the labels of the tree edges in $S^1_e$. For this reason, achieving a sublinear algorithm for weighted 3-ECSS seems to be more involved.

A key observation that allows us to achieve $O(D \log^3{n})$-round algorithm for unweighted 3-ECSS is that for any cut pair $\{f,f'\}$ in $H$ there is some non-tree edge $e \in H$ where $\{f,f'\} \subseteq Cyc_e$. For $k>3$ this observation is not true anymore, which suggests that this problem may be harder. 

\section{Discussion} \label{disc}

In this paper, we provide efficient distributed algorithms for $k$-ECSS. While our results improve significantly the time complexity of previous algorithms, many intriguing questions remain open.

First, our algorithms obtain $O(\log{n})$-approximations, and a natural question is whether it is possible to design efficient algorithms with a better approximation ratio. Our approach which is based on set cover allows us to parallelize the computations efficiently, however it cannot achieve an approximation better than $O(\log{n})$. Another option is to try to convert sequential algorithms for $k$-ECSS to distributed ones. However, algorithms that obtain constant approximations in the sequential setting seem inherently sequential \cite{goemans1994improved, jain2001factor, khuller1994biconnectivity}. 

Second, we have presented a sublinear algorithm for \emph{weighted} 2-ECSS. The $k$-ECSS problem seems to be more involved for $k>2$, and it would be interesting to study whether sublinear algorithms exist also for $k>2$, or alternatively prove that this problem is indeed harder. 
In addition, we showed here an $O(D \log^3{n})$-round algorithm for \emph{unweighted} 3-ECSS. A natural question is whether algorithms with a similar time complexity exist also for $k>3$. 

Finally, all our algorithms are randomized, and it would be interesting to study whether it is possible to obtain also \emph{deterministic} algorithms with a similar time complexity.

\paragraph*{Acknowledgments:} I would like to thank Keren Censor-Hillel for her guidance and support, and for many helpful comments. I am also grateful to Merav Parter for suggesting the connection between the FT-MST algorithm \cite{ghaffari2016near} and the 2-ECSS problem.

\bibliography{2-ECSS}

\begin{thebibliography}{10}

\bibitem{censor2017fast}
Keren Censor-Hillel and Michal Dory.
\newblock Fast distributed approximation for {TAP} and 2-edge-connectivity.
\newblock In {\em 21st International Conference on Principles of Distributed
  Systems (OPODIS)}, pages 21:1--21:20, 2017.

\bibitem{spanner}
Keren Censor-Hillel and Michal Dory.
\newblock Distributed spanner approximation.
\newblock {\em To appear in PODC 2018. https://arxiv.org/abs/1802.03160}, 2018.

\bibitem{censor2014distributed}
Keren Censor-Hillel, Mohsen Ghaffari, and Fabian Kuhn.
\newblock Distributed connectivity decomposition.
\newblock In {\em Proceedings of the 2014 ACM symposium on Principles of
  distributed computing (PODC)}, pages 156--165. ACM, 2014.

\bibitem{cheriyan2000approximating}
Joseph Cheriyan and Ramakrishna Thurimella.
\newblock Approximating minimum-size k-connected spanning subgraphs via
  matching.
\newblock {\em SIAM Journal on Computing}, 30(2):528--560, 2000.

\bibitem{chvatal1979greedy}
Vasek Chvatal.
\newblock A greedy heuristic for the set-covering problem.
\newblock {\em Mathematics of operations research}, 4(3):233--235, 1979.

\bibitem{dinitz1976structure}
Efim~A Dinitz, Alexander~V Karzanov, and Michael~V Lomonosov.
\newblock On the structure of the system of minimum edge cuts in a graph.
\newblock {\em Issledovaniya po Diskretnoi Optimizatsii}, pages 290--306, 1976.

\bibitem{elkin2006unconditional}
Michael Elkin.
\newblock An unconditional lower bound on the time-approximation trade-off for
  the distributed minimum spanning tree problem.
\newblock {\em SIAM Journal on Computing}, 36(2):433--456, 2006.

\bibitem{DBLP:conf/podc/Elkin17}
Michael Elkin.
\newblock A simple deterministic distributed {MST} algorithm, with near-optimal
  time and message complexities.
\newblock In {\em Proceedings of the {ACM} Symposium on Principles of
  Distributed Computing (PODC)}, pages 157--163, 2017.

\bibitem{gabow2012iterated}
Harold~N Gabow and Suzanne~R Gallagher.
\newblock Iterated rounding algorithms for the smallest k-edge connected
  spanning subgraph.
\newblock {\em SIAM Journal on Computing}, 41(1):61--103, 2012.

\bibitem{gabow2009approximating}
Harold~N Gabow, Michel~X Goemans, {\'E}va Tardos, and David~P Williamson.
\newblock Approximating the smallest k-edge connected spanning subgraph by
  {LP}-rounding.
\newblock {\em Networks}, 53(4):345--357, 2009.

\bibitem{gallager1983distributed}
Robert~G. Gallager, Pierre~A. Humblet, and Philip~M. Spira.
\newblock A distributed algorithm for minimum-weight spanning trees.
\newblock {\em ACM Transactions on Programming Languages and systems (TOPLAS)},
  5(1):66--77, 1983.

\bibitem{garay1998sublinear}
Juan~A Garay, Shay Kutten, and David Peleg.
\newblock A sublinear time distributed algorithm for minimum-weight spanning
  trees.
\newblock {\em SIAM Journal on Computing}, 27(1):302--316, 1998.

\bibitem{DBLP:journals/corr/abs-1305-5520}
Mohsen Ghaffari and Fabian Kuhn.
\newblock Distributed minimum cut approximation.
\newblock {\em CoRR}, abs/1305.5520, 2013.

\bibitem{ghaffari2016near}
Mohsen Ghaffari and Merav Parter.
\newblock Near-optimal distributed algorithms for fault-tolerant tree
  structures.
\newblock In {\em Proceedings of the 28th ACM Symposium on Parallelism in
  Algorithms and Architectures (SPAA)}, pages 387--396. ACM, 2016.

\bibitem{goemans1994improved}
Michel~X Goemans, Andrew~V Goldberg, Serge~A Plotkin, David~B Shmoys, Eva
  Tardos, and David~P Williamson.
\newblock Improved approximation algorithms for network design problems.
\newblock In {\em SODA}, volume~94, pages 223--232, 1994.

\bibitem{jain2001factor}
Kamal Jain.
\newblock A factor 2 approximation algorithm for the generalized steiner
  network problem.
\newblock {\em Combinatorica}, 21(1):39--60, 2001.

\bibitem{jia2002efficient}
Lujun Jia, Rajmohan Rajaraman, and Torsten Suel.
\newblock An efficient distributed algorithm for constructing small dominating
  sets.
\newblock {\em Distributed Computing}, 15(4):193--205, 2002.

\bibitem{johnson1974approximation}
David~S Johnson.
\newblock Approximation algorithms for combinatorial problems.
\newblock {\em Journal of computer and system sciences}, 9(3):256--278, 1974.

\bibitem{DBLP:conf/soda/Karger93}
David~R. Karger.
\newblock Global min-cuts in rnc, and other ramifications of a simple min-cut
  algorithm.
\newblock In {\em Proceedings of the Fourth Annual Symposium on Discrete
  Algorithms, (SODA)}, pages 21--30, 1993.

\bibitem{khuller1996approximation}
Samir Khuller.
\newblock Approximation algorithms for finding highly connected subgraphs.
\newblock In {\em Approximation algorithms for NP-hard problems}, pages
  236--265. PWS Publishing Co., 1996.

\bibitem{khuller1994biconnectivity}
Samir Khuller and Uzi Vishkin.
\newblock Biconnectivity approximations and graph carvings.
\newblock {\em Journal of the ACM (JACM)}, 41(2):214--235, 1994.

\bibitem{kortsarz2010approximating}
Guy Kortsarz and Zeev Nutov.
\newblock Approximating minimum cost connectivity problems.
\newblock In {\em Dagstuhl Seminar Proceedings}. Schloss
  Dagstuhl-Leibniz-Zentrum f{\"u}r Informatik, 2010.

\bibitem{krumke2007distributed}
Sven~O Krumke, Peter Merz, Tim Nonner, and Katharina Rupp.
\newblock Distributed approximation algorithms for finding 2-edge-connected
  subgraphs.
\newblock In {\em International Conference On Principles Of Distributed Systems
  (OPODIS)}, pages 159--173. Springer, 2007.

\bibitem{kuhn2016local}
Fabian Kuhn, Thomas Moscibroda, and Roger Wattenhofer.
\newblock Local computation: Lower and upper bounds.
\newblock {\em Journal of the ACM (JACM)}, 63(2):17, 2016.

\bibitem{kutten1998fast}
Shay Kutten and David Peleg.
\newblock Fast distributed construction of small k-dominating sets and
  applications.
\newblock {\em Journal of Algorithms}, 28(1):40--66, 1998.

\bibitem{lovasz1975ratio}
L{\'a}szl{\'o} Lov{\'a}sz.
\newblock On the ratio of optimal integral and fractional covers.
\newblock {\em Discrete mathematics}, 13(4):383--390, 1975.

\bibitem{nanongkai2014almost}
Danupon Nanongkai and Hsin-Hao Su.
\newblock Almost-tight distributed minimum cut algorithms.
\newblock In {\em International Symposium on Distributed Computing (DISC)},
  pages 439--453. Springer, 2014.

\bibitem{pandurangan2017time}
Gopal Pandurangan, Peter Robinson, and Michele Scquizzato.
\newblock A time-and message-optimal distributed algorithm for minimum spanning
  trees.
\newblock In {\em Proceedings of the 49th Annual ACM SIGACT Symposium on Theory
  of Computing (STOC)}, pages 743--756. ACM, 2017.

\bibitem{peleg2000distributed}
David Peleg.
\newblock {\em Distributed Computing: A Locality-Sensitive Approach}.
\newblock SIAM, 2000.

\bibitem{peleg2000near}
David Peleg and Vitaly Rubinovich.
\newblock A near-tight lower bound on the time complexity of distributed
  minimum-weight spanning tree construction.
\newblock {\em SIAM Journal on Computing}, 30(5):1427--1442, 2000.

\bibitem{pritchard2005robust}
David Pritchard.
\newblock Robust network computation.
\newblock Master's thesis, MIT, 2005.

\bibitem{pritchard2011fast}
David Pritchard and Ramakrishna Thurimella.
\newblock Fast computation of small cuts via cycle space sampling.
\newblock {\em ACM Transactions on Algorithms (TALG)}, 7(4):46, 2011.

\bibitem{rajagopalan1998primal}
Sridhar Rajagopalan and Vijay~V Vazirani.
\newblock Primal-dual {RNC} approximation algorithms for set cover and covering
  integer programs.
\newblock {\em SIAM Journal on Computing}, 28(2):525--540, 1998.

\bibitem{sarma2012distributed}
Atish~Das Sarma, Stephan Holzer, Liah Kor, Amos Korman, Danupon Nanongkai,
  Gopal Pandurangan, David Peleg, and Roger Wattenhofer.
\newblock Distributed verification and hardness of distributed approximation.
\newblock {\em SIAM Journal on Computing}, 41(5):1235--1265, 2012.

\bibitem{shadeh2009distributed}
Amir Shadeh.
\newblock Distributed primal-dual approximation algorithms for network design
  problems.
\newblock Master's thesis, Open University of Israel, 2009.

\bibitem{thurimella1995sub}
Ramakrishna Thurimella.
\newblock Sub-linear distributed algorithms for sparse certificates and
  biconnected components.
\newblock In {\em Proceedings of the fourteenth annual ACM symposium on
  Principles of distributed computing (PODC)}, pages 28--37. ACM, 1995.

\end{thebibliography}
\bibliographystyle{plain} 

\end{document}